\documentclass[11pt]{article}
\usepackage[letterpaper,margin=1in]{geometry}
\usepackage{amssymb}

\usepackage[utf8]{inputenc} % allow utf-8 input
\usepackage[T1]{fontenc}    % use 8-bit T1 fonts
\usepackage{url}            % simple URL typesetting
\usepackage{booktabs}       % professional-quality tables
\usepackage{amsfonts}       % blackboard math symbols
\usepackage{nicefrac}       % compact symbols for 1/2, etc.
\usepackage{microtype}      % microtypography
\usepackage{xcolor}         % colors

\usepackage{amsmath}
\usepackage{mathtools}
\usepackage{amsthm}

\usepackage{makecell}

\usepackage[right, outer, inline, final]{showlabels} %draft--> final for submission
\usepackage{nicematrix}

\usepackage{pict2e,picture,graphicx}
\usepackage{tikz}
\usetikzlibrary{external}
\usepackage{float}
\usetikzlibrary{shapes,backgrounds,patterns,calc,arrows,arrows.meta}
\usepackage{chemfig}

\usepackage{subcaption}
\usepackage{cases}
\usepackage{thm-restate}
\usepackage{enumitem}
\usepackage{wrapfig}
\usepackage{yfonts}
\usepackage{multirow}

\usepackage{xfrac}
\usepackage{xparse}

\usepackage{bm}
\usepackage{bbm}

\usepackage{hyperref}
\usepackage[capitalize, noabbrev]{cleveref}

\hypersetup{
	colorlinks = true,
	linkcolor = royalBlue,
	citecolor = darkGreen,
	linktocpage = true,
	urlcolor = mgreen
}

\usepackage{comment}

\usepackage{colortbl}
\usepackage{cellspace}

\usepackage{xspace}
\usepackage{algpseudocode}
\usepackage{algorithm}

\usepackage[textsize=tiny]{todonotes}
\usepackage[short, nocomma]{optidef} % optimization problems

\catcode`\_=11
\definearrow5{-x>}{%
    \CF_arrowshiftnodes{#3}%
    \CF_expafter{\draw[}\CF_arrowcurrentstyle](\CF_arrowstartnode)--(\CF_arrowendnode)%
        coordinate[midway,shift=(\CF_ifempty{#4}{225}{#4}+\CF_arrowcurrentangle:\CF_ifempty{#5}{5pt}{#5})](line@start)%
        coordinate[midway,shift=(\CF_ifempty{#4}{45}{#4}+\CF_arrowcurrentangle:\CF_ifempty{#5}{5pt}{#5})](line@end)%
        coordinate[midway,shift=(\CF_ifempty{#4}{135}{#4}+\CF_arrowcurrentangle:\CF_ifempty{#5}{5pt}{#5})](line@start@i)%
        coordinate[midway,shift=(\CF_ifempty{#4}{315}{#4}+\CF_arrowcurrentangle:\CF_ifempty{#5}{5pt}{#5})](line@end@i);
    \draw(line@start)--(line@end);%
    \draw(line@start@i)--(line@end@i);%
    \CF_arrowdisplaylabel{#1}{0.5}+\CF_arrowstartnode{#2}{0.5}-\CF_arrowendnode
}
\catcode`\_=8

\theoremstyle{plain}
\newtheorem{theorem}{Theorem}[section]
\newtheorem{proposition}[theorem]{Proposition}

\newtheorem{lemma}[theorem]{Lemma}

\theoremstyle{definition}
\newtheorem{definition}[theorem]{Definition}

\theoremstyle{remark}

\definecolor{niceRed}{RGB}{190,38,38}
\definecolor{Red2}{RGB}{219, 50, 54}
\definecolor{mgreen}{RGB}{160, 200, 140}
\definecolor{blueGrotto}{RGB}{5,157,192}
\definecolor{limeGreen}{HTML}{81B622}
\definecolor{myellow}{rgb}{0.88,0.61,0.14}
\definecolor{darkGreen}{HTML}{2E8B57}
\definecolor{navyBlueP}{HTML}{03468F}
\definecolor{Sepia}{HTML}{7F462C}
\definecolor{red2}{HTML}{1F462C}
\definecolor{orange2}{HTML}{FF8000}
\definecolor{mgray}{HTML}{ABB3B8}
\definecolor{lgray}{HTML}{E5E8E9}
\definecolor{myPurple}{RGB}{175,0,124}
\definecolor{mypurple2}{rgb}{0.8,0.62,1}
\definecolor{royalBlue}{HTML}{057DCD}
\definecolor{mpink}{HTML}{FC6C85}
\definecolor{lblue}{RGB}{74,144,226}
\definecolor{peagreen}{RGB}{152,193,39}
\definecolor{typ_navy}{HTML}{001f3f}
\definecolor{typ_blue}{HTML}{0074d9}
\definecolor{typ_aqua}{HTML}{7fdbff}
\definecolor{typ_teal}{HTML}{39cccc}
\definecolor{typ_eastern}{HTML}{239dad}
\definecolor{typ_purple}{HTML}{b10dc9}
\definecolor{typ_fuchsia}{HTML}{f012be}
\definecolor{typ_maroon}{HTML}{85144b}
\definecolor{typ_red}{HTML}{ff4136}
\definecolor{typ_orange}{HTML}{ff851b}
\definecolor{typ_yellow}{HTML}{ffdc00}
\definecolor{typ_olive}{HTML}{3d9970}
\definecolor{typ_green}{HTML}{2ecc40}
\definecolor{typ_lime}{HTML}{01ff70}
\definecolor{newgreen}{HTML}{83c702}
\definecolor{newpurp}{RGB}{97,96,121}
\usepackage[style=alphabetic,natbib=true,maxcitenames=2, maxbibnames=10,backend=bibtex]{biblatex}
\definecolor{mygreen}{rgb}{0.0, 0.5, 0.0}
\definecolor{myorange}{rgb}{0.55, 0.62, 1}
\newcommand{\nb}[3]{{\colorbox{#2}{\bfseries\sffamily\scriptsize\textcolor{white}{#1}}}{\textcolor{#2}{\sf\small\textsf{#3}}}}

\newcommand{\mat}[1]{\nb{Matteo}{blue}{#1}}

\newcommand{\bollo}[1]{\nb{Bollo}{red}{#1}}

\DeclareMathOperator*{\argmax}{arg\,max}

\newcommand{\typeSet}{\Theta}
\newcommand{\revenue}{U^{\textnormal{P}}}
\newcommand{\profit}{U^{\textnormal{P}}} % proft and revenue are the same, we changed name from revenue to profit

\newcommand{\firstType}{\theta^{\textnormal{F}}} % \firstType_i is the first type of agent i according to <
\newcommand{\lastType}{\theta^{\textnormal{L}}} % \lastType_i is the last type of agent i according to <

\newcommand{\OPT}{\textnormal{OPT}} % optimal revenue
\newcommand{\OPTC}{\textnormal{OPT}_{\textnormal{C}}} % optimal "clayrvoyant" revenue.
\newcommand{\PIC}{\textnormal{PIC}} % \OPTC / \OPT

\newcommand{\SUP}{\textnormal{SUP}} % supremum of the revenue in the multi-task problem

\newcommand{\Splus}{\mathcal{S}^+} % set of pairs (\bar t, \bar \theta) such tha the LP has solution and revenue > 0 trict (S stands for solutions)
\newcommand{\Sminus}{\mathcal{S}^-} % set of the other pairs

\newcommand{\tSet}{\mathcal{T}} % set of possible multi-task allocation vectors

\allowdisplaybreaks

\title{Matching Tasks to Agents:\\ Mechanism Design for Multi-Agent Contracts}

\author{
	Matteo Bollini \and
	Matteo Castiglioni \and
	Alberto Marchesi  \and
	Politecnico di Milano, Milan, Italy\\
	\texttt{name.surname@polimi.it}
}

\date{}

\begin{document}    
    \maketitle
    
    \begin{abstract}
We study task assignment under adverse selection and moral hazard. In the single-task setting, agents privately know their types, which determine their success probabilities and effort costs. Based on the agents' reported types, the principal selects an agent to perform the task and offers a success-contingent payment. Although optimal Bayesian contract design is generally computationally intractable and Myerson's characterization does not directly extend to settings with hidden actions and multidimensional private information, we show that the single-task problem admits a simple Myerson-style characterization. Our key insight is that incentive constraints depend only on the ratio between effort cost and success probability. We characterize an optimal mechanism and show that it can be computed in polynomial time. We also establish a tight $\Theta(1/(NK))$ worst-case bound on the ratio between the optimal profit and the full-information benchmark, where $N$ is the number of agents and $K$ is the number of types per agent. We then extend our analysis to multiple tasks, showing that an optimal mechanism may not exist. Nevertheless, for a constant number of agents, we provide a polynomial-time approximation scheme for the supremum profit.
\end{abstract}

    \noindent\textbf{AI Usage Statement.}
    % The authors used generative AI tools to assist with writing, proofreading, and literature review. All technical claims and mathematical proofs are the authors' own work and have been independently verified by the authors.
The authors used generative AI tools to assist with writing, proofreading, and literature review. All technical claims and mathematical proofs were developed and independently verified by the authors.
    \pagenumbering{gobble} 
    \clearpage
    \tableofcontents
    \renewcommand{\thmtformatoptarg}[1]{~#1}
    %\renewcommand{\listtheoremname}{List of problems}
    %\listoftheorems[ignoreall,show={problem},swapnumber]
    \newpage
    \pagenumbering{arabic}
    
    \section{Introduction}

Contract theory studies how a principal can incentivize agents to exert costly, unobservable effort through payments contingent on observable outcomes. In the classical model with a single agent and known action costs and outcome distributions, an optimal contract can be computed by linear programming~\cite{dutting2019simple}. Subsequent work has extended this model in several directions to capture more realistic environments. Two main directions are particularly relevant to our work. The first introduces private information, adding adverse selection to moral hazard: an agent can misrepresent its type before choosing an action~\cite{castiglioni2022bayesian,guruganesh2021contracts,alon2021contracts}. The second introduces multiple agents, requiring the principal to make allocation and coordination decisions about how to distribute work and incentives among them~\cite{dutting2026multi,dutting2024combinatorial,ezra_et_al:LIPIcs.ITCS.2024.44}. Understanding when these features admit simple, computationally tractable mechanisms is a central challenge for algorithmic contract design.

Even with a single agent, designing optimal contracts in Bayesian settings is particularly challenging. Indeed, optimizing contracts or menus of deterministic contracts is computationally hard in general~\cite{castiglioni2022bayesian,guruganesh2021contracts}. Allowing menus of randomized contracts can restore efficient approximation of the principal’s optimal utility~\cite{castiglioni2023designing}, albeit at the cost of greater mechanism complexity. Even when private information is one-dimensional, combining hidden types with hidden actions requires implementability conditions that go beyond the usual monotonicity condition of auction theory~\cite{alon2021contracts}. \cite{cacciamani2024multi} extend the study of Bayesian settings to multi-agent environments and provide algorithms for approximately optimal randomized contracts. Their formulation explicitly represents joint type and action profiles, whose number can grow exponentially with the number of agents. Moreover, the resulting mechanisms are obtained through general optimization procedures rather than a simple characterization of the optimal allocation and payment rules. These limitations motivate the following question:
\begin{quote}
    Which structured environments allow the principal to handle private information and multiple agents through a simple optimal mechanism?
\end{quote}

In this paper, we focus on the problem of assigning tasks to agents. A firm commissioning a project, for example, may face contractors who differ in their ability to complete it and in the cost of doing so. These characteristics may be privately known, while the effort exerted after the contract is awarded remains unobservable. A successful mechanism must therefore address two distinct incentive problems: it must elicit the information needed to select an agent and induce that agent to work. When several tasks must be assigned, these decisions are further coupled by the requirement that each agent can undertake at most one task.

We study a model of task assignment with binary effort levels and outcomes, which can be seen as a special case of \cite{alon2025multi}, in which only one agent can be assigned to each task. There are $N$ agents whose types are drawn independently from known finite distributions. For a single task, an agent's type $\theta_i$ specifies a success probability $p_{\theta_i}$ and an effort cost $c_{\theta_i}$. The principal commits to a mechanism that maps reported types to an assignment and a payment upon successful completion. A selected agent then decides whether to exert effort. Exerting effort incurs the stated cost and completes the task with the stated probability; exerting no effort produces no reward. The principal receives a unit reward for completion and seeks to maximize expected reward minus payments. In the multi-task extension, each type specifies a success probability and an effort cost for every task, and feasible assignments are matchings between agents and tasks. We seek Bayesian incentive-compatible and individually rational mechanisms which maximize the principal's utility.

\subsection{Our Contributions}

Our results reveal a marked contrast between the single-task and multi-task settings. For a single task, we characterize a simple optimal mechanism and show how to compute it efficiently for an arbitrary number of agents. With multiple tasks, the structure underlying this characterization no longer applies, and an optimal mechanism may not even exist. We therefore take an algorithmic approach and study approximation to the supremum profit when the number of agents is constant.

\paragraph{Single task: a tractable characterization} 
Our main result is a characterization of an optimal mechanism for a single task and an arbitrary number of agents, together with an efficient algorithm for computing it. The first challenge is that agents' types are two-dimensional: both effort costs and success probabilities are private information. Our key observation is that, from the perspective of an agent's incentives, these two parameters can be summarized by a single quantity: the ratio between the cost and the success probability, which we call the \emph{critical value}.
% \[
%     v_{\theta_i} := \frac{c_{\theta_i}}{p_{\theta_i}},
% \]
% which is the smallest success payment that makes effort individually rational. Dividing an agent's expected utility by its positive success probability reveals that its reporting incentives depend on its type only through this ratio.
The critical value is the success payment that makes effort individually rational. 
By rewriting the incentive-compatibility constraints in terms of the critical values, we recover a structure similar to that of procurement and reverse auctions.

This observation allows us to recover a monotonicity-based characterization. We show that every incentive-compatible mechanism has interim allocation probabilities that are nonincreasing in the critical value. Conversely, such monotonicity supports an incentive-compatible payment rule. 
The main challenge is that the agents can decide whether to exert effort or not \emph{after} receiving the contract, which introduces nonlinearity in the incentive-compatibility constraints.
We show that, without loss of profit, the success payment to a selected agent can depend only on that agent's report.
This simplifies the incentive constraints and allows us to characterize the minimum payments associated with a monotone allocation rule.

The principal's objective nevertheless depends separately on success probabilities. We incorporate these probabilities into virtual utilities that express the principal's expected profit as expected virtual surplus. Then, adapting Myerson's ironing approach~\cite{myerson1981optimal}, we obtain an optimal deterministic allocation: assign the task to an agent with the largest positive ironed virtual utility, using a fixed rule to break ties, and leave it unassigned if no such agent exists. The mechanism can be computed and evaluated in time polynomial in the explicit descriptions of the agents' type distributions, without enumerating all joint type profiles.

Interestingly, the same allocation admits a dominant-strategy incentive-compatible implementation with the same expected profit. Thus, in the single-task setting, neither randomized allocations nor the weaker Bayesian reporting requirement improves the principal's optimal profit. The simultaneous presence of hidden types, hidden effort, and multiple agents therefore remains compatible with a simple optimal mechanism.

\paragraph{Profit relative to full information} 
Beyond designing optimal mechanisms, we investigate how much profit the principal loses because agents' types are private. We compare the optimal incentive-compatible profit with a clairvoyant benchmark that observes agents' types but must still incentivize effort. With $N$ agents and $K$ types per agent, we establish
\[
    \mathrm{OPT} \geq \frac{\mathrm{OPT}^{C}}{NK},
\]
where $\mathrm{OPT}^{C}$ is the clairvoyant profit. This guarantee can already be achieved by a particularly simple mechanism: choose one agent and one success payment in advance, and assign the task to that agent whenever its reported critical value does not exceed the payment. Furthermore, this ratio is optimal in the worst case, up to constants.

\paragraph{Multiple tasks: nonexistence of an optimal mechanism and approximation} 
With multiple tasks, the problem becomes more challenging. Each type now induces a vector of task-specific critical values, and different tasks can induce different orderings of an agent's types. Consequently, the scalar ordering that underlies our single-task characterization does not extend directly and simple mechanisms are unlikely to exist.  We therefore take an algorithmic approach. Since an explicit representation of a mechanism over all joint type profiles can require space exponential in the number of agents, we focus on the setting in which this number is constant.

We first show that an optimal mechanism may not exist. We exhibit an instance with two agents, two tasks, and two types per agent in which no profit-maximizing mechanism exists. The principal can improve profit by making certain assignments increasingly rare while increasing their associated success payments. This preserves the incentives needed for truthful reporting while reducing the cost of those assignments. As the probabilities of these assignments tend to zero, the profit approaches a supremum that no feasible mechanism attains.

We therefore turn to mechanisms that approximate this supremum. For a constant number of agents, we develop a linear-programming approach that computes, for any $\epsilon\in(0,1)$, a Bayesian incentive-compatible and individually rational mechanism with profit at least $(1-\epsilon)\mathrm{SUP}$ in time polynomial in the input size and $\log(1/\epsilon)$, where $\mathrm{SUP}$ denotes the supremum profit.
The main challenge is to establish an upper bound $B$ on the payments employed by such a mechanism, as in general payments tend to infinity as $\epsilon$ decreases. Given this upper bound, we show that the problem of computing an optimal mechanism with bounded payments reduces to a linear program.

% \mat{AI generated to do.-------------}
% The formulation uses assignment probabilities and probability-weighted success payments. The main challenge is to ensure that positive weighted payments correspond to assignments made with positive probability, so that they can be implemented by finite success payments. We address this challenge through a perturbation argument that yields near-optimal mechanisms with payments of polynomially bounded encoding length. This allows us to impose a finite payment bound and recover a mechanism from a polynomial-size linear program. Finally, randomization over assignments suffices: conditional on the reported types and the realized assignment, success payments can be deterministic.
% \mat{-----------------------}

\subsection{Related Work}

The economic foundations of contracting under hidden information and hidden actions are developed extensively in the literature \cite{bolton2004contract, carroll2015robustness, bajari2001incentives,gottlieb2015simple}. An important theme is understanding when simple contracts perform well. \cite{carroll2015robustness} provides a robustness-based justification for linear contracts, while \cite{dutting2019simple} study their performance relative to optimal contracts from an approximation perspective. The survey by \cite{dutting2024survey} provides a broad overview of the computational questions that arise in contract design.

A closely related line of work studies Bayesian contract design with a  single agent, where the principal is uncertain about the agent's characteristics. \cite{alon2021contracts} consider a single-dimensional model in which the agent's private type determines its cost per unit of effort, and characterize which allocation rules can be implemented. \cite{alon2023bayesian} further investigate the performance of linear contracts in this setting. More general models allow private information to affect both action costs and outcome distributions~\cite{guruganesh2021contracts,castiglioni2022bayesian}. These models provide negative computational results and raise questions about the value of offering menus rather than a single contract. \cite{guruganesh2023power} quantify the gains from menus and randomization, while \cite{castiglioni2023designing} show that menus of randomized contracts permit efficient approximation of the supremum profit despite the hardness of optimizing deterministic menus.
Recent work clarifies the relationship between single-dimensional and multidimensional private information. ~\cite{castiglioni2025reduction} give an almost approximation-preserving polynomial-time reduction from multi-parameter to single-parameter Bayesian contract design, showing that one-dimensional private information alone does not ensure tractability. \cite{bernasconi2025singledimensional} provide an additive polynomial-time
approximation scheme for single-dimensional contract design and establish learning guarantees under suitable assumptions. Our single-task result relies on a different structural feature. With binary effort and success-contingent payments, reporting incentives depend only on the ratio between cost and success probability, even though the principal's objective depends on both parameters. This permits an exact characterization for multiple competing agents.

A related line of research studies how a principal can incentivize several agents whose actions jointly determine its reward. Here, the main focus is on succinctly represented settings without Bayesian uncertainty.
Another relevant strand studies combinatorial contracts, in which the principal's reward depends on a set of actions or on the joint effort of several agents. \cite{dutting2026multi} provide approximation algorithms for multi-agent contracts with submodular and XOS rewards, together with impossibility results for more general subadditive rewards. \cite{ezra_et_al:LIPIcs.ITCS.2024.44} establish further approximation hardness results, including the absence of a polynomial-time approximation scheme for the multi-agent submodular setting. \cite{duetting2024multiagent} extend this analysis to teams whose members can each take multiple actions. Our model has a different source of combinatorial structure: the principal chooses a matching between agents and tasks, and its reward is additive across completed tasks. The main difficulty is to combine this assignment decision with truthful reporting and effort incentives.

Particularly close to our work are models combining multiple agents with private information or individually observed outcomes. \cite{CastiglioniEtAl23} study contracts in which each agent generates an individual outcome and the principal's reward depends jointly on these outcomes, including Bayesian settings with privately informed agents. \cite{cacciamani2024multi} consider general multi-agent contracts beyond binary actions and obtain algorithms for approximately optimal randomized mechanisms, also in the presence of private types. 

We focus instead on exclusive task assignments and binary effort.  The closest work in this direction is \cite{alon2025multi}, which studies the allocation of agents to multiple projects, with each agent assigned to at most one project. Their model allows several agents to contribute to the same project and assumes that costs and success functions are known to the principal. We instead restrict each task to a single agent, simplifying the allocation structure, but introduce private information about both effort costs and success probabilities. This requires mechanisms that jointly incentivize truthful reporting and effort.

    \section{Preliminaries}
In the following, we formalize the models we study in this paper.
Specifically, we first describe the single-task problem, where a principal aims to assign a single task to one of $N$ agents, and then we generalize this model to the multi-task problem, where multiple tasks have to be allocated.
\subsection{Single-Task Setting}
\label{sec:pre_single_task}
A single-task problem consists of one principal, $N$ agents, and a single task which w.l.o.g. provides a reward $r=1$ to the principal upon completion.
Each agent $i \in [N]$ has a finite set $\typeSet_i$ of $K \ge 2$ possible types.
We denote by $\typeSet = \typeSet_1 \times \typeSet_2 \times \dots \times \typeSet_N$ the set of all possible type profiles.
Given a type profile $\theta \in \typeSet$, we denote by $\theta_i$ the type of agent $i \in [N]$ and by $\theta_{-i}$ the type profile of the other agents.
At the beginning of the interaction, the principal commits to a mechanism, consisting of an allocation rule $x$ and payment scheme $\alpha$ that we will describe in detail later.
Then, a type profile $\theta \in \typeSet$ is drawn from a distribution $\mu$; we assume types to be drawn independently across agents, with $\mu_{\theta_i}>0$ being the probability of drawing type $\theta_i \in \typeSet_i$ for agent $i \in [N]$.
Each agent $i$ observes its own type $\theta_i$ and strategically reports a type $\hat\theta_i \in \typeSet_i$, possibly different from $\theta_i$, to the principal.
Given the reported types $\hat\theta \in \typeSet$, the principal selects (according to the allocation rule) at most a single agent $i \in [N]$ to assign the task to, and commits to a payment (according to the payment scheme) upon task completion.
The agent $i \in [N]$ that receives the task can either exert effort or not. 
In the first case, the agent incurs cost $c_{\theta_i} \in [0,1]$ and the task is completed with probability $p_{\theta_i} \in (0,1]$.\footnote{We assume w.l.o.g. that $p_{\theta_i} > 0$. If $p_{\theta_i}=0$, the agent would not misreport as they cannot complete the task to receive any payment, and the mechanism would not assign the task to agent $i$ since this provides profit zero.} %\mat{not convincing. Agent of type $0$ might missreport. I'm fine with the conclusion but not the "proof".}\bollo{If $p_{\theta_i}>0$, the agent would not misreport as they cannot complete the task to receive any payment, and the mechanism would not assign the task to agent $i$ since this provides profit zero. (?)}}
Notice that $c_{\theta_i}$ and $p_{\theta_i}$ refer to the agent's true type $\theta_i$, and not to the reported type $\theta'_i$.
In contrast, if the agent does not exert effort, the principal's reward is zero.

We now formalize the mechanism employed by the principal.
A mechanism $(x,\alpha)$ consists of an allocation rule $x : \typeSet \rightarrow \Delta_{\mathcal{N}}$ and a payment scheme (also called \emph{contract}) $\alpha : \typeSet \rightarrow \mathbb{R}^N_+$.
The allocation rule $x$ maps each type report $\theta \in \typeSet$ to a distribution $x(\theta)$ over the set $\mathcal{N} \coloneqq \{1,\ldots,N, \varnothing\}$; the principal gives the task to an agent $i \sim x(\theta)$, with the convention that no agent receives the task if $\varnothing$ is sampled.
For ease of notation, we denote by $x_{i,\theta}$ the probability that agent $i$ is selected given the type report $\theta$. %, and with $\allocSet$ the set of all possible allocation rules.
The payment scheme specifies the monetary transfer $\alpha_{i,\theta}$ due to agent $i \in [N]$ when such an agent receives and completes the task.
% For the ease of notation, we will denote $\alpha_i(\theta)$ with $\alpha_{i,\theta}$.
Let us remark that if agent $i'$ is not selected to complete the task, then they do not receive any payment. % \mat{why do you need two notations. Define only the pedix one if you use only that, otherwise say that you use both notations.}

%,\emph{i.e.} only the contract $\alpha_{i,\theta}$ referring to the reported type profile $\theta$ and the selected agent $i$ is actually used.

In this paper, we focus on Bayesian incentive-compatible (BIC) and individually rational (IR) mechanisms. 
% \mat{inventato tu valid? non vorrei che qualche economista si incazza}
Informally, a BIC and IR mechanism ensures that every agent is incentivized to both report their true type and exert effort if selected.
Formally, such a mechanism $(x,\alpha)$ satisfies the following constraints:\footnote{For every $r \in \mathbb{R}$, we let $[r]_+ \coloneqq \max\{r,0\}$.}
% \begin{subequations}
% \begin{align}
%     x_{i,\theta} (\alpha_{i,\theta} p_{\theta_i} -c_{\theta_i}) \ge 0 \quad \forall i \in [N], \theta \in \typeSet \label{eq:constraint_ir_single_alpha}\\
%     \sum_{\theta_{-i} \in \typeSet_{-i}} \mu_{\theta_{-i}} x_{i,\theta_i,\theta_{-i}} (\alpha_{i,\theta_i,\theta_{-i}} p_{\theta_i} -c_{\theta_i}) 
%     \ge \sum_{\theta_{-i} \in \typeSet_{-i}} \mu_{\theta_{-i}} x_{i,\theta'_i, \theta_{-i}} [\alpha_{i,\theta'_i, \theta_{-i}} p_{\theta_i} -c_{\theta_i}]_+ \quad \forall i \in [N], \theta_i \in \typeSet_i, \theta'_i \in \typeSet_{i} \label{eq:constraint_ic_single_alpha}
% \end{align}
% \end{subequations}
\begin{align}
    &x_{i,\theta} (\alpha_{i,\theta} p_{\theta_i} -c_{\theta_i})
    \ge 0
    &&\quad \forall i \in [N], \theta \in \typeSet
    \label{eq:constraint_ir_single_alpha}\\
    &\sum_{\theta_{-i} \in \typeSet_{-i}}
        \mu_{\theta_{-i}} x_{i,\theta_i,\theta_{-i}}
        (\alpha_{i,\theta_i,\theta_{-i}} p_{\theta_i} -c_{\theta_i}) \ge \nonumber\\
        &\hspace{3.5cm}
        \sum_{\theta_{-i} \in \typeSet_{-i}}
        \mu_{\theta_{-i}} x_{i,\hat\theta_i,\theta_{-i}}
        [\alpha_{i,\hat\theta_i,\theta_{-i}} p_{\theta_i} -c_{\theta_i}]_+
        &&\quad \forall i \in [N],\,
        \theta_i \in \typeSet_i,\,
        \hat\theta_i \in \typeSet_i
    \label{eq:constraint_ic_single_alpha}
\end{align}

Equation~\eqref{eq:constraint_ir_single_alpha} enforces individual rationality: whenever an agent receives the task, exerting effort provides non-negative expected utility.
% Notice that here the expectation is with respect to the success probability $p_{\theta_i}$ only: the agent decides wether to exert effort after receiving the task and the corresponding contract $\alpha_{\theta_i,\theta_{-i}}$.
Notice that the agent decides whether to exert effort or not only after having received the task and the corresponding contract $\alpha_{i,\theta_i,\theta_{-i}}$.
Equation~\eqref{eq:constraint_ic_single_alpha} enforces Bayesian incentive compatibility: the expected utility (over the types of the other agents) of reporting the true type $\theta_i$ and exerting effort must be at least as large as the utility from
misreporting $\hat\theta_i$ and exerting effort when convenient.
Notice that every term in the summation on the left-hand side of Equation~\eqref{eq:constraint_ic_single_alpha} is non-negative thanks to the IR constraint.
By contrast, on the right-hand side we enforce each term to be at least zero; this allows the agent to decide whether to exert effort after receiving the task and the contract, as type $\theta_i$ is not necessarily incentivized to exert effort by all contracts $\alpha_{i,\hat\theta_i,\theta_{-i}}$. %\mat{Dovremmo iniziare ad accennare che BIC=DSIC alla fine?}
We will also briefly consider a stronger notion of incentive compatibility, namely Dominant-Strategy Incentive Compatibility (DSIC), which imposes that reporting the true type $\theta_i$ is optimal for every possible type report of the other agents, rather than in expectation over them. As we will show in Section~\ref{sec:single_task}, the optimal BIC and DSIC profits coincide.

The goal of the principal is to maximize their own expected utility, henceforth called profit for short, which is defined as the expected collected reward minus the expected payment:
\begin{equation}
    \profit(x,\alpha) = \sum_{\theta \in \typeSet} \mu_\theta \sum_{i \in [N]} x_{i,\theta}p_{\theta_i}(1-\alpha_{i,\theta}).
\end{equation}
For ease of notation, we also let $\profit(x,\alpha|\theta) \coloneqq \sum_{j \in [N]}x_{j,\theta}p_{\theta_j}(1-\alpha_{j,\theta})$ for every $\theta \in \typeSet$.
We say that an optimal mechanism is one that maximizes the profit among all the BIC and IR ones.
Our goal is to design an optimal mechanism $(x,\alpha)$ and a corresponding algorithm to compute, for any given $\theta \in \typeSet$, the entries $(x(\theta),\alpha(\theta))$ of the mechanism in time polynomial in the instance size. 

\subsection{Multi-Task Setting}
In the multi-task problem, we need to assign $N$ agents to $M$ tasks, where w.l.o.g. we assume that each completed task provides a reward $r=1$ to the principal.\footnote{If each task $m \in [M]$ has a different reward $r_m \in [0,1]$, we can equivalently set $r_m=1$ by redefining the success probabilities as $p'_{\theta_i,m} \coloneqq p_{\theta_i,m}r_m$.} %\mat{cosi ovvio perchè è WLOG?}
Each agent can perform at most one task, and each task can be assigned to at most one agent.
As in the single-task setting, each agent $i \in [N]$ has a private type $\theta_i \in \typeSet_i$ independently drawn from distribution $\mu$.
When the agent exerts effort to complete the assigned task $t_i \in [M]$, it incurs a cost $c_{\theta_i,t_i} \in [0,1]$ and completes the task with probability $p_{\theta_i,t_i} \in [0,1]$.
The principal devises a (possibly randomized) mechanism that determines, given the type report $\theta \in \typeSet$, an allocation of the tasks and the contract for each agent.
Notably, the principal need not assign tasks independently across
agents, \emph{i.e.} the probabilities that agent $i$ receives a task $t_i$ and agent $j$ receives a task $t_j$ need not to be independent.
Therefore, to formally define a multi-task mechanism we introduce the set of feasible allocations:
\begin{equation*}
    \tSet \coloneqq \{t \in \{1,2,\dots,M,\varnothing\}^N \mid \forall i \neq j \in [N] \;\; t_i=t_j=\varnothing \;\vee\; t_i \neq t_j \}.
\end{equation*}
An allocation $t \in \tSet$ specifies the task $t_i$ assigned to each agent $i \in [N]$, with the convention that when $t_i = \varnothing$ the agent does not perform any task.
With an abuse of notation, we let $p_{\theta_i,\varnothing} = c_{\theta_i,\varnothing} \coloneqq 0$ for every possible agent's type $\theta_i \in \typeSet_i$.

A (multi-task) mechanism $(y,\sigma)$ consists of an allocation rule $y : \Theta \rightarrow \Delta_{\tSet}$ and a payment scheme $\sigma :  \Theta \times \tSet \rightarrow \mathbb{R}_+^N$.\footnote{Randomizing over payments does not improve the profit; see Proposition~\ref{prop:multi_task_no_rand_contr} in Appendix~\ref{app:no_rand}.}
The allocation rule specifies, for every type profile $\theta \in \typeSet$, the distribution over the allocations $y(\theta)$ to be employed when the agents report $\theta$; we will denote with $y(t|\theta)$ the probability assigned to allocation $t$.
The payment scheme specifies the contracts $\sigma_i(\theta,t)$ to be adopted when the agents jointly report $\theta$ and the allocation $t \sim y(\theta)$ is drawn; with an abuse of notation we will sometimes denote $\sigma_i(\theta,t)$ with $\sigma_i(t|\theta)$.
Agent $i$ will thus receive payment $\sigma_i(t|\theta)$ if the corresponding task $t_i$ is completed.\footnote{If $t_i = \varnothing$, the agent is not assigned any task and receives no payment, regardless of the value of $\sigma_i(t|\theta)$.} %; this is modeled by defining $p_{\theta_i,\varnothing}=0$.}
The profit extracted by the mechanism $(y,\sigma)$ is:
\begin{equation*}
    \profit(y,\sigma) \coloneqq \sum_{\theta \in \typeSet} \mu_\theta \sum_{t \in \tSet} y(t|\theta) \sum_{i \in [N]} p_{\theta_i,t_i}(1-\sigma_i(t|\theta)).
\end{equation*}
We say that the mechanism $(y,\sigma)$ is Bayesian Incentive-Compatible (BIC) and individually rational (IR) when the following constraints are satisfied:
% \begin{align}
%     y(t|\theta) (\sigma_i(t|\theta) p_{\theta_i,t_i} -c_{\theta_i,t_i}) \ge 0 \quad \forall i \in [N], \theta \in \typeSet, t \in \tSet \label{eq:constraint_ir_multi_task_sigma}\\
%     \sum_{\theta_{-i} \in \typeSet_{-i}} \mu_{\theta_{-i}}\sum_{t \in \tSet} y(t|\theta_i,\theta_{-i})(\sigma_i(t|\theta_i,\theta_{-i}) p_{\theta_i,t_i} -c_{\theta_i,t_i}) \ge \sum_{\theta_{-i} \in \typeSet_{-i}} \mu_{\theta_{-i}}\sum_{t \in \tSet} y(t|\hat\theta_i,\theta_{-i})[\sigma_i(t|\hat\theta_i,\theta_{-i}) p_{\theta_i,t_i} -c_{\theta_i,t_i}]_+ \quad \forall i \in [N], \theta_i \in \typeSet
%     \label{eq:contraint_ic_multi_task_sigma}
% \end{align}
\begin{align}
    &y(t|\theta) (\sigma_i(t|\theta) p_{\theta_i,t_i} -c_{\theta_i,t_i}) \ge 0 \quad &&\forall i \in [N], \theta \in \typeSet, t \in \tSet \label{eq:constraint_ir_multi_task_sigma}\\
    &\sum_{\theta_{-i} \in \typeSet_{-i}} \mu_{\theta_{-i}}\sum_{t \in \tSet} y(t|\theta_i,\theta_{-i})(\sigma_i(t|\theta_i,\theta_{-i}) p_{\theta_i,t_i} -c_{\theta_i,t_i}) \ge \nonumber\\
    &\hspace{2.5cm}\sum_{\theta_{-i} \in \typeSet_{-i}} \mu_{\theta_{-i}}\sum_{t \in \tSet} y(t|\hat\theta_i,\theta_{-i})[\sigma_i(t|\hat\theta_i,\theta_{-i}) p_{\theta_i,t_i} -c_{\theta_i,t_i}]_+ \quad &&\forall i \in [N], \theta_i,\hat\theta_i \in \typeSet_i
    \label{eq:contraint_ic_multi_task_sigma}
\end{align}

Notice that a single-task mechanism $(x,\alpha)$, as described in Section~\ref{sec:pre_single_task}, is a simplified representation of a corresponding multi-task mechanism $(y,\sigma)$ with $M=1$.
Indeed, when $M=1$ every allocation $t \in \mathcal{T}$ is uniquely identified by a single agent $i \in \mathcal{N} = [N] \cup \{\varnothing\}$ that receives the task ($i=\varnothing$ indicates that no agent receives the task), and thus a distribution $y(\theta) \in \Delta_{\tSet}$ corresponds to a distribution $x(\theta) \in \Delta_{\mathcal{N}}$.
Furthermore, for every type report $\theta \in \typeSet$ and agent $i \in [N]$, only the contract $\sigma_i(t^i|\theta)$ of the allocation $t^i$ that assigns the task to $i$ is relevant, and it is equal to $\alpha_{i,\theta}$.

    \section{An Optimal Mechanism for a Single Task}
\label{sec:single_task}
At first glance, this is a multi-parameter mechanism-design problem. Indeed, an agent $i$ has two parameters: a cost $c_{\theta_i}$ and a success probability $p_{\theta_i}$.
However, we show that it is possible to recover some classical properties of single-parameter problems, which allow us to design a mechanism by means of a Myerson-style approach.
Specifically, we show that a mechanism is IC and IR only if the allocation rule is monotone in a new parameter, the \emph{critical value}, and that any monotone allocation rule corresponds to a valid mechanism with appropriate payments.
Then, we show how to design virtual utilities accounting for both the critical values and the success probabilities, resulting in a Myerson-like optimization problem that can be solved efficiently by standard techniques.

% Without loss of generality, we assume that each success probability $p_{\theta_i}$ is strictly greater than zero.
% It is easy to see that if $p_{\theta_i} = 0$ for some $\theta_i \in \typeSet_i$, then the optimal mechanism does not assign the task to agent $i$ when it reports $\theta_i$. 
% This solution also disincentivizes agent $i$ from misreporting $\theta_i$ (since it would yeld utility zero) when it is of type $\hat\theta_i \neq \theta_i$, and from misreporting $\hat\theta_i$ when it is of type $\theta_i$ (sine type $\theta_i$ can attain only utility zero no matter the contract or the allocation probability).

\subsection{Ordering the types}
Given a mechanism $(x,\alpha)$, define $\beta_{i,\theta} \coloneqq \alpha_{i,\theta} x_{i,\theta}$ as the expected contract of agent $i \in [N]$ given type report $\theta \in \Theta$. 
It is easy to see that a mechanism can be equivalently represented in terms of $(x,\alpha)$ or $(x,\beta)$.
Specifically, we will say that $(x,\beta)$ is a valid $\beta$-representation if $\beta_{i,\theta}$ is set to zero whenever $x_{i,\theta}=0$. 
% \mat{vuoi dire che è un meccanismo che rappresenta un meccanismo vero? Non ha senso usare il $\beta$ nel nome se ho capito bene. forse valid $\beta$-representation o qualcosa del genere}\bollo{L'idea è che formalmente la tupla $(x,\beta)$ non è un mecchanismo, ma una rappresentazione alternativa ($\beta$-representation) di un mecchanismo. Poi con un abuso dico che è un mecchanismo per evitare di statement lunghi}. 
This allows us to recover the contracts by computing $\alpha_{i,\theta} = \beta_{i,\theta} / x_{i,\theta}$, with the convention that $0/0 \coloneqq 0$.
In the following we will work with the $\beta$-representation of mechanisms, and with an abuse of notation we will simply call the tuple $(x,\beta)$ a mechanism.
Finally, we let:
\begin{equation}
    x_{i,\theta_i} \coloneqq \sum_{\theta_{-i}} \mu_{\theta_{-i}} x_{i,\theta_i,\theta_{-i}} 
    \quad\text{and}\quad 
    \beta_{i,\theta_i} \coloneqq \sum_{\theta_{-i}} \mu_{\theta_{-i}} \beta_{i,\theta_i,\theta_{-i}}
\end{equation}
be the ex-interim allocation probability and ex-interim expected contract of agent $i \in [N]$ when it reports $\theta_i \in \Theta_i$. % \bollo{ex-inteirm contract non è esatto}
Using this notation, the profit can be expressed as:
\begin{equation}
    \label{eq:profit_in_beta_and_exinterim}
    \profit(x,\beta) = \sum_{\theta \in \typeSet} \mu_\theta \sum_{i \in [N]} p_{\theta_i}(x_{i,\theta}-\beta_{i,\theta}) = \sum_{i \in [N]} \sum_{\theta_i \in \typeSet_i} \mu_{\theta_i} p_{\theta_i}(x_{i,\theta_i} - \beta_{i,\theta_i}).
\end{equation}
Furthermore, we can rewrite the IR and BIC constraints as follows:
\begin{align*}
    \beta_{i,\theta} p_{\theta_i} -x_{i,\theta}c_{\theta_i} &\ge 0 \quad &&\forall i \in [N], \theta \in \typeSet \\
    \beta_{i,\theta_i} p_{\theta_i} -x_{i,\theta_i}c_{\theta_i} &\ge \sum_{\theta_{-i} \in \typeSet_{-i}} \mu_{\theta_{-i}} [\beta_{i,\theta'_i, \theta_{-i}} p_{\theta_i} -x_{i,\theta'_i, \theta_{-i}}c_{\theta_i}]_+ \quad &&\forall i \in [N], \theta_i \in \typeSet_i, \theta'_i \in \typeSet_i.
\end{align*}
Notice that we can rewrite the left-hand side of Equation~\ref{eq:constraint_ic_single_alpha} by means of ex-interim allocation probabilities and contracts, since it represents the expected utility of agent $i$ with respect to the types of the other agents.
However, the right-hand side cannot be written in terms of the ex-interim values, as the agent selects their action only after receiving the task and the linear contract.
Similarly, the ex-interim values cannot be used to write the IR constraints.

We are now ready to introduce the concept of critical value.
\begin{definition}
    \label{def:critical_value}
    The critical value of agent $i \in [N]$ with type $\theta_i \in \typeSet_i$ is $v_{\theta_i} \coloneqq \frac{c_{\theta_i}}{p_{\theta_i}}$.
\end{definition}
The critical value $v_{\theta_i}$ corresponds to the minimum payment necessary to satisfy the corresponding IR constraint, \emph{i.e.} an agent of type $\theta_i$ gains non-negative utility by exerting effort if the payment is at least $v_{\theta_i}$. 
This value allows us to simplify the structure of the IR and IC constraints.
By dividing both sides by $p_{\theta_i}$, we rewrite them in terms of the critical values of the agents as follows:
\begin{align}
    \beta_{i,\theta} -x_{i,\theta}v_{\theta_i} &\ge 0 \quad \forall i \in [N], \theta \in \typeSet \label{eq:ir_with_critical_value}\\
    \beta_{i,\theta_i} -x_{i,\theta_i}v_{\theta_i} &\ge \sum_{\theta_{-i} \in \typeSet_{-i}} \mu_{\theta_{-i}} [\beta_{i,\theta'_i, \theta_{-i}} -x_{i,\theta'_i, \theta_{-i}}v_{\theta_i}]_+ \quad \forall i \in [N], \theta_i \in \typeSet_i, \theta'_i \in \typeSet_i. \label{eq:ic_with_critical_value}
\end{align}
Let us remark that the principal's profit depends instead on the success probabilities. As a result, the problem remains multi-parameter.
The crucial observation is that the IR and IC constraints are single-parameter: they depend on the critical values, but not on any other parameter (success probability or costs) associated with the types. 
This allows us to recover monotonicity properties.
Indeed, Equation~\eqref{eq:ic_with_critical_value} is similar to the IC constraint in procurement problems (see, e.g.,~\cite{laffont1991auction}). In fact, it can be seen as a stricter version of such a constraint.
Therefore, it can be employed to prove that the allocation rule $x$ must satisfy a monotonicity constraint, as formalized in the following.\footnote{The proof can be found in Appendix~\ref{app:single_task}.}
\begin{restatable}{lemma}{MonotoneNecessaryLemma}
    \label{lem:monotone_necessary}
    Let $(x,\beta)$ be an IR and BIC mechanism. 
    Then, for every agent $i$ and types $\theta_i$, $\hat \theta_i\in \Theta_i$ %type profiles $\theta_{-i},\hat\theta_i  \in \typeSet_{-i}$
     it holds that $x_{i,\theta_i} \ge x_{i,\hat \theta_i}$ if $v_{\theta_i} < v_{\hat\theta_i}$.
\end{restatable}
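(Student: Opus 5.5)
We follow the classical procurement-auction argument: we add the two BIC constraints between $\theta_i$ and $\hat\theta_i$, after first weakening their right-hand sides into a linear form. The observation that makes this work is that $[z]_+ \ge z$. Applied termwise to the right-hand side of Equation~\eqref{eq:ic_with_critical_value}, it gives
\begin{equation*}
\sum_{\theta_{-i}} \mu_{\theta_{-i}} [\beta_{i,\hat\theta_i,\theta_{-i}} - x_{i,\hat\theta_i,\theta_{-i}} v_{\theta_i}]_+ \ge \beta_{i,\hat\theta_i} - x_{i,\hat\theta_i} v_{\theta_i}.
\end{equation*}
Hence every IR and BIC mechanism satisfies the relaxed, purely ex-interim constraint
\begin{equation*}
\beta_{i,\theta_i} - x_{i,\theta_i} v_{\theta_i} \ge \beta_{i,\hat\theta_i} - x_{i,\hat\theta_i} v_{\theta_i}.
\end{equation*}
This is exactly the standard IC constraint of a single-parameter procurement problem, in which the "cost" is $v_{\theta_i}$ and the interim transfer is $\beta$.

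Next we write the same relaxed inequality with the roles of the two types swapped:
\begin{equation*}
\beta_{i,\hat\theta_i} - x_{i,\hat\theta_i} v_{\hat\theta_i} \ge \beta_{i,\theta_i} - x_{i,\theta_i} v_{\hat\theta_i}.
\end{equation*}
Summing the two inequalities cancels the $\beta$ terms. What remains is
\begin{equation*}
-x_{i,\theta_i} v_{\theta_i} - x_{i,\hat\theta_i} v_{\hat\theta_i} \ge -x_{i,\hat\theta_i} v_{\theta_i} - x_{i,\theta_i} v_{\hat\theta_i},
\end{equation*}
which rearranges to
\begin{equation*}
(x_{i,\theta_i} - x_{i,\hat\theta_i})(v_{\hat\theta_i} - v_{\theta_i}) \ge 0.
\end{equation*}
Since $v_{\theta_i} < v_{\hat\theta_i}$ by assumption, the factor $v_{\hat\theta_i} - v_{\theta_i}$ is strictly positive. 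Dividing by it yields $x_{i,\theta_i} \ge x_{i,\hat\theta_i}$, as claimed.

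The proof does not use the IR constraint~\eqref{eq:ir_with_critical_value}. Its only role is in rewriting the left-hand side of the BIC constraint in ex-interim form, which the paper has already done. The only step needing care is the linearization of the $[\cdot]_+$ terms, because they prevent a direct ex-interim rewriting of the right-hand side. The lower bound $[z]_+ \ge z$ goes in the needed direction, since it only weakens the constraint. So we expect no real obstacle here: the argument is essentially the two-line swap argument for monotonicity.
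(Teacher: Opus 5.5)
Your proof is correct and matches the paper's: it weakens each $[\cdot]_+$ term via $[z]_+ \ge z$ to get the ex-interim inequality, adds it to the version with the two types swapped so the $\beta$ terms cancel, and divides by $v_{\hat\theta_i}-v_{\theta_i}>0$. One minor aside: the ex-interim form of the left-hand side follows from linearity of expectation alone, so IR is not needed there either.
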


Thanks to Lemma~\ref{lem:monotone_necessary}, we can order the types of agent $i \in [N]$ in ascending $v_{\theta_i}$.
The ex-interim allocation probability $x_{i,\theta_i}$ is then monotone non-increasing in this order.
However, this order is partial: if two types share the same critical value, then Lemma~\ref{lem:monotone_necessary} does not specify an order between them.
Nonetheless, it is natural to expect that the type with the largest success probability $p_{\theta_i}$ should also have the largest allocation probability, as it will contribute more to the principal's profit.
Formally, we argue that there always exists an optimal mechanism such that $x_{i,\theta_i} \ge x_{i,\theta'_i}$ whenever $v_{\theta_i} < v_{\theta'_i}$ or the two critical values are the same and $p_{\theta_i} > p_{\theta'_i}$.

In order to do so, we show that it is possible to focus on a subclass of mechanisms which admits a simpler formulation of the IC constraints.
We formally define these mechanisms as follows:\footnote{We adopt the convention that $0/0 \coloneqq 0$.} %\bollo{trovare un nome sensato}
\begin{definition}
    \label{def:regular_emchanism}
    A mechanism $(x,\beta)$ is regular if $\beta_{i,\theta_i,\theta_{-i}} = \frac{\beta_{i,\theta_i}}{x_{i,\theta_i}}x_{i,\theta_i,\theta_{-i}}$ for every agent $i \in [N]$ and type report $(\theta_i,\theta_{-i}) \in \typeSet$.
\end{definition}
Let us observe that a regular mechanism is fully specified by the (ex-post) allocation rule $x$ and the ex-interim payments $\beta_{i,\theta_i}$, with $i \in [N]$ and $\theta_i \in \typeSet_i$. 
So long as $\beta_{i,\theta_i}=0$ whenever $x_{i,\theta_i}=0$, we can recover the expected contracts $\beta_{i,\theta_i,\theta_{-i}}$ from the ex-interim values. 
% By definition, the expected contracts are $\beta_{i,\theta_i,\theta_{-i}} = \frac{\beta_{i,\theta_i}}{x_{i,\theta_i}}x_{i,\theta_i,\theta_{-i}}$.
Furthermore, if we look at the $(x,\alpha)$ representation of a regular mechanism, we have $\alpha_{i,\theta_i,\theta_{-i}} = \frac{\beta_{i,\theta_i}}{x_{i,\theta_i}}$, \emph{i.e.} the contract received by agent $i$ depends only on its reported type $\theta_i$, while the types of the others influence only the probability of receiving the task.
Hence, the IR constraint for a regular mechanism can be written in terms of ex-interim allocation probabilities and contracts, namely $\frac{\beta_{i,\theta_i}}{x_{i,\theta_i}} \ge v_{\theta_i}$ for every type $\theta_i \in \typeSet_i$ such that $x_{i,\theta_i} > 0$.
Moreover, if the agent misreports $\theta'_i \neq \theta_i$ and receives the task, they receive the same contract $\frac{\beta_{i,\theta'_i}}{x_{i,\theta'_i}}$, regardless of the types of the others. The IC constraint (Equation~\eqref{eq:ic_with_critical_value}) can thus be formulated without the max operator.
As a result, we can rewrite the constraints in terms of the ex-interim probabilities and contracts only. Formally, the following result holds.
\begin{restatable}{lemma}{RegularICIRLemma}
    \label{lem:regular_ic_and_ir}
    % A regular mechanism $(x,\beta)$ is IR and BIC if and only if:
    % \begin{align*}
    %     \beta_{i,\theta_i} p_{\theta_i} -x_{i,\theta_i}c_{\theta_i} &\ge 0 \quad &&\forall i \in [N], \theta_i \in \typeSet_i \\
    %     \beta_{i,\theta_i} p_{\theta_i} -x_{i,\theta_i}c_{\theta_i} &\ge \beta_{i,\theta'_i} p_{\theta_i} -x_{i,\theta'_i}c_{\theta_i} \quad &&\forall i \in [N], \theta_i,\theta'_i \in \typeSet_i,
    % \end{align*}
    % or, equivalently:
    % \begin{align*}
    %     \beta_{i,\theta_i} -x_{i,\theta_i}v_{\theta_i} &\ge 0 \quad &&\forall i \in [N], \theta_i \in \typeSet_i \\
    %     \beta_{i,\theta_i} -x_{i,\theta_i}v_{\theta_i} &\ge \beta_{i,\theta'_i}-x_{i,\theta'_i}v_{\theta_i} \quad &&\forall i \in [N], \theta_i,\theta'_i \in \typeSet_i.
    % \end{align*} 
    A regular mechanism $(x,\beta)$ is IR and BIC if and only if:
    \begin{align}
        \beta_{i,\theta_i} -x_{i,\theta_i}v_{\theta_i} &\ge 0 \quad &&\forall i \in [N], \theta_i \in \typeSet_i \label{eq:regular_ir}\\
        \beta_{i,\theta_i} -x_{i,\theta_i}v_{\theta_i} &\ge \beta_{i,\theta'_i}-x_{i,\theta'_i}v_{\theta_i} \quad &&\forall i \in [N],  \theta_i,\theta'_i \in \typeSet_i. \label{eq:regular_ic}
    \end{align} 
\end{restatable}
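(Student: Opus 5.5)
The plan is to substitute the regularity identity $\beta_{i,\theta_i,\theta_{-i}} = \frac{\beta_{i,\theta_i}}{x_{i,\theta_i}}x_{i,\theta_i,\theta_{-i}}$ into the ex-post IR constraint~\eqref{eq:ir_with_critical_value} and the BIC constraint~\eqref{eq:ic_with_critical_value}. I will show that each reduces exactly to~\eqref{eq:regular_ir} and~\eqref{eq:regular_ic}, respectively. Throughout, write $a_{i,\theta_i} \coloneqq \beta_{i,\theta_i}/x_{i,\theta_i}$ for the (report-dependent only) contract, with $0/0 \coloneqq 0$. By regularity, $\beta_{i,\theta_i,\theta_{-i}} - x_{i,\theta_i,\theta_{-i}}v = x_{i,\theta_i,\theta_{-i}}(a_{i,\theta_i} - v)$ for any scalar $v$.

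For IR, the ex-post constraint reads $x_{i,\theta_i,\theta_{-i}}(a_{i,\theta_i}-v_{\theta_i}) \ge 0$ for all $\theta_{-i}$.
\begin{itemize}
\item If $x_{i,\theta_i}=0$, then every $x_{i,\theta_i,\theta_{-i}}=0$, since $\mu_{\theta_{-i}}>0$. Validity of the $\beta$-representation then gives $\beta_{i,\theta_i}=0$, so both the ex-post and ex-interim IR constraints hold trivially.
\item If $x_{i,\theta_i}>0$, some $x_{i,\theta_i,\theta_{-i}}>0$. Hence the ex-post constraints are equivalent to $a_{i,\theta_i}\ge v_{\theta_i}$, and multiplying by $x_{i,\theta_i}>0$ gives exactly~\eqref{eq:regular_ir}.
\end{itemize}
Averaging the ex-post constraint over $\theta_{-i}$ gives the easy direction; the case split above gives the converse.

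For BIC, the left-hand side of~\eqref{eq:ic_with_critical_value} is already ex-interim. On the right-hand side, each term is $\mu_{\theta_{-i}}[x_{i,\theta'_i,\theta_{-i}}(a_{i,\theta'_i}-v_{\theta_i})]_+$. Since $x_{i,\theta'_i,\theta_{-i}}\ge 0$, this equals $\mu_{\theta_{-i}}x_{i,\theta'_i,\theta_{-i}}[a_{i,\theta'_i}-v_{\theta_i}]_+$. Summing, the right-hand side becomes $x_{i,\theta'_i}[a_{i,\theta'_i}-v_{\theta_i}]_+ = \max\{\beta_{i,\theta'_i}-x_{i,\theta'_i}v_{\theta_i},0\}$. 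So, for regular mechanisms, BIC is equivalent to
\[
\beta_{i,\theta_i}-x_{i,\theta_i}v_{\theta_i}\ge \max\{\beta_{i,\theta'_i}-x_{i,\theta'_i}v_{\theta_i},\,0\}
\]
for every $\theta'_i$.

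The remaining step, and the only slightly delicate one, is to remove the $\max$ with $0$; this is where IR is needed.

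\emph{Forward direction.} If the mechanism is IR and BIC, then dropping the $\max$ only weakens the right-hand side, so~\eqref{eq:regular_ic} follows; \eqref{eq:regular_ir} was obtained above.

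\emph{Converse.} Assume~\eqref{eq:regular_ir} and~\eqref{eq:regular_ic}. The ex-post IR constraints follow from the IR argument above. For BIC, the left-hand side dominates $\beta_{i,\theta'_i}-x_{i,\theta'_i}v_{\theta_i}$ by~\eqref{eq:regular_ic} and dominates $0$ by~\eqref{eq:regular_ir}, hence it dominates their maximum.

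The one technical care point is the convention $0/0=0$ when $x_{i,\theta'_i}=0$. In that case all terms involving $\theta'_i$ vanish, and both sides are consistent.
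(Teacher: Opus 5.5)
Your proof is correct and follows essentially the same route as the paper. You substitute the regularity identity and pull the nonnegative allocation probability out of $[\cdot]_+$, so the right-hand side of BIC collapses to $[\beta_{i,\theta'_i}-x_{i,\theta'_i}v_{\theta_i}]_+$, and then use IR to drop the max. The only cosmetic differences are that you work in critical-value form from the start (the paper works with $p_{\theta_i},c_{\theta_i}$ and divides by $p_{\theta_i}$ at the end), and that you handle the case $x_{i,\theta_i}=0$ explicitly rather than relying on the $0/0$ convention.
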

\begin{proof}
    We first focus on the IR constraints.
    Fix an agent $i \in [N]$ and its type $\theta_i \in \typeSet_i$.
    On the one hand, if the IR constraint is satisfied, then it also holds that $\beta_{i,\theta_i} p_{\theta_i} -x_{i,\theta_i}c_{\theta_i} \ge 0$, by the linearity of expectation.
    % On the other hand, for every $\theta_{-i} \in \typeSet_{-i}$, if $\beta_{i,\theta_i} p_{\theta_i} -x_{i,\theta_i}c_{\theta_i} \ge 0$, either $x_{i,\theta_i,\theta_{-i}} = \beta_{i,\theta_i,\theta_{-i}} = 0$ and the IR constraint is satisfied, or:
    On the other hand, if $\beta_{i,\theta_i} p_{\theta_i} -x_{i,\theta_i}c_{\theta_i} \ge 0$, then for every $\theta_{-i} \in \typeSet_{-i}$:
    \begin{align*}
        \beta_{i,\theta_i,\theta_{-i}} p_{\theta_i} -x_{i,\theta_i,\theta_{-i}}c_{\theta_i} &= \frac{\beta_{i,\theta_i}}{x_{i,\theta_i}}x_{i,\theta_i,\theta_{-i}} p_{\theta_i} -x_{i,\theta_i,\theta_{-i}}c_{\theta_i} = \frac{x_{i,\theta_i,\theta_{-i}}}{x_{i,\theta_i}}(\beta_{i,\theta_i} p_{\theta_i} -x_{i,\theta_i} c_{\theta_i}) \ge 0,
    \end{align*}
    where the first equality holds by Definition~\ref{def:regular_emchanism}.
    As a result, the IR constraints are satisfied if and only if Equation~\ref{eq:regular_ir} is verified.
    
    Consider now the BIC constraints. 
    Fix an agent $i \in [N]$ and two types $\theta_i,\theta'_i \in \typeSet_i$.
    Let $\alpha_{i,\theta_i,\theta_{-i}} = \alpha_{i,\theta_i} \coloneqq \frac{\beta_{i,\theta_i}}{x_{i,\theta_i}}$ for every $\theta_{-i} \in \typeSet_{-i}$.
    From the definition of regular mechanism:
    \begin{align*}
        \sum_{\theta_{-i} \in \typeSet_{-i}} \mu_{\theta_{-i}} [\beta_{i,\theta'_i, \theta_{-i}} p_{\theta_i} -x_{i,\theta'_i, \theta_{-i}}c_{\theta_i}]_+ &= \sum_{\theta_{-i} \in \typeSet_{-i}} \mu_{\theta_{-i}} [\alpha_{i,\theta'_i}x_{i,\theta'_i,\theta_{-i}} p_{\theta_i} -x_{i,\theta'_i, \theta_{-i}}c_{\theta_i}]_+ \\
        &= \sum_{\theta_{-i} \in \typeSet_{-i}} \mu_{\theta_{-i}} x_{i,\theta'_i,\theta_{-i}}[\alpha_{i,\theta'_i} p_{\theta_i} -c_{\theta_i}]_+ \\
        &= \left[\sum_{\theta_{-i} \in \typeSet_{-i}} \mu_{\theta_{-i}} x_{i,\theta'_i,\theta_{-i}}(\alpha_{i,\theta'_i} p_{\theta_i} -c_{\theta_i}) \right]_+ \\
        &= [\beta_{i,\theta'_i} p_{\theta_i} -x_{i,\theta'_i}c_{\theta_i}]_+.
    \end{align*}
    Therefore, the BIC constraints can be rewritten as:
    \begin{align*}
        \beta_{i,\theta_i} p_{\theta_i} -x_{i,\theta_i}c_{\theta_i} &\ge [\beta_{i,\theta'_i} p_{\theta_i} -x_{i,\theta'_i}c_{\theta_i}]_+ \quad \forall i \in [N], \theta_i,\theta'_i \in \typeSet_i.
    \end{align*}
    Noticing that the IR constraints imply that $\beta_{i,\theta_i} p_{\theta_i} -x_{i,\theta_i}c_{\theta_i} \ge 0$, the max operator can be removed from the inequality above.
    Dividing both sides by $p_{\theta_i}$ concludes the proof.
    % Therefore, the IR and IC constraints can be rewritten as:
    % \begin{align*}
    %     \beta_{i,\theta} p_{\theta_i} -x_{i,\theta}c_{\theta_i} &\ge 0 \quad \forall i \in [N], \theta \in \typeSet \\
    %     \beta_{i,\theta_i} p_{\theta_i} -x_{i,\theta_i}c_{\theta_i} &\ge [\beta_{i,\theta'_i} p_{\theta_i} -x_{i,\theta'_i}c_{\theta_i}]_+ \quad \forall i \in [N], \theta_i,\theta'_i \in \typeSet_i.
    % \end{align*}
    % Noticing that the IR constraints implies: $\beta_{i,\theta_i} p_{\theta_i} -x_{i,\theta_i}c_{\theta_i} \ge 0$, the IC constraint can be rewritten as:
    % \begin{equation*}
    %     \beta_{i,\theta_i} p_{\theta_i} -x_{i,\theta_i}c_{\theta_i} \ge \beta_{i,\theta'_i} p_{\theta_i} -x_{i,\theta'_i}c_{\theta_i} \quad \forall i \in [N], \theta_i,\theta'_i \in \typeSet_i,
    % \end{equation*}
    % since the left hand is non-negative.
    % Dividing both sides of the inequalities above by $p_{\theta_i}$ concludes the proof.
\end{proof}

We can now show that we can focus on regular mechanisms without loss of generality, as they are optimal.
\begin{restatable}{lemma}{RegualarOptimal}
    \label{lem:regular_optimal}
    There always exists an optimal BIC and IR regular mechanism.
\end{restatable}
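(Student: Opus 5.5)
The plan is to take an arbitrary BIC and IR mechanism $(x,\beta)$ and regularize it while keeping the allocation rule $x$ and all ex-interim payments $\beta_{i,\theta_i}$ fixed. Concretely, set $\tilde\beta_{i,\theta_i,\theta_{-i}} \coloneqq \frac{\beta_{i,\theta_i}}{x_{i,\theta_i}}x_{i,\theta_i,\theta_{-i}}$, with the convention $0/0 \coloneqq 0$. This spreads each agent's expected payment proportionally to its allocation probability, so that the contract depends only on its own report.

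The profit can be read off directly. By \eqref{eq:profit_in_beta_and_exinterim}, $\profit$ depends only on the ex-interim quantities $x_{i,\theta_i}$ and $\beta_{i,\theta_i}$, and both are unchanged since $\sum_{\theta_{-i}}\mu_{\theta_{-i}}\tilde\beta_{i,\theta_i,\theta_{-i}}=\beta_{i,\theta_i}$. The construction is also a valid $\beta$-representation. If $x_{i,\theta_i,\theta_{-i}}=0$, then $\tilde\beta_{i,\theta_i,\theta_{-i}}=0$. If $x_{i,\theta_i}=0$, then every $x_{i,\theta_i,\theta_{-i}}=0$ because $\mu>0$, so the original valid representation forces $\beta_{i,\theta_i}=0$.

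It remains to show that $(x,\tilde\beta)$ is IR and BIC, and here Lemma~\ref{lem:regular_ic_and_ir} applies, since the new mechanism is regular by construction.

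For \eqref{eq:regular_ir}, sum the original ex-post IR constraint \eqref{eq:ir_with_critical_value} against $\mu_{\theta_{-i}}$. This gives $\beta_{i,\theta_i}-x_{i,\theta_i}v_{\theta_i}\ge 0$.

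For \eqref{eq:regular_ic}, use the original BIC constraint \eqref{eq:ic_with_critical_value} together with $\sum_{\theta_{-i}}\mu_{\theta_{-i}}[a_{\theta_{-i}}]_+\ge \sum_{\theta_{-i}}\mu_{\theta_{-i}}a_{\theta_{-i}}$. This yields
\[
\beta_{i,\theta_i}-x_{i,\theta_i}v_{\theta_i}\ \ge\ \sum_{\theta_{-i}}\mu_{\theta_{-i}}[\beta_{i,\theta'_i,\theta_{-i}}-x_{i,\theta'_i,\theta_{-i}}v_{\theta_i}]_+\ \ge\ \beta_{i,\theta'_i}-x_{i,\theta'_i}v_{\theta_i}.
\]
The key point is that regularization replaces the convex $[\cdot]_+$ deviation payoff by its linear lower bound. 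A deviating agent therefore cannot gain, because it can no longer select only the favorable contracts. Lemma~\ref{lem:regular_ic_and_ir} then gives that $(x,\tilde\beta)$ is IR and BIC with the same profit.

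The main obstacle is existence of an optimal mechanism in the first place, since the feasible set is not obviously compact: payments are unbounded, and $\alpha$ is not continuous as $x\to 0$. I would handle this by working in the regular class directly. By the argument above, the supremum over all mechanisms equals the supremum over regular ones. By Lemma~\ref{lem:regular_ic_and_ir}, the regular class is described by linear constraints in the variables $(x_{i,\theta},\beta_{i,\theta_i})$, with $x$ ranging over a product of simplices. Payments can be bounded without loss, because any $\beta_{i,\theta_i}$ exceeding $x_{i,\theta_i}$ times a constant is never beneficial; more simply, the profit is at most $1$ and $\beta\ge 0$ enters the profit negatively, so optimal solutions lie in a bounded polytope. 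One subtlety needs checking: a limit point may have $x_{i,\theta_i}=0$ but $\beta_{i,\theta_i}>0$. Setting $\beta_{i,\theta_i}=0$ in that case preserves \eqref{eq:regular_ir} and \eqref{eq:regular_ic} for the deviating types, since it only lowers deviation payoffs, and it weakly increases profit. Hence an LP optimum yields an optimal regular mechanism.
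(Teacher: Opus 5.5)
Your regularization step is exactly the paper's proof, and it is correct. You keep $x$ and the ex-interim payments, spread $\beta_{i,\theta_i}$ proportionally to $x_{i,\theta_i,\theta_{-i}}$, read the profit off \eqref{eq:profit_in_beta_and_exinterim}, obtain \eqref{eq:regular_ic} from $[\cdot]_+\ge\cdot$, and invoke Lemma~\ref{lem:regular_ic_and_ir}. The paper's proof simply starts from ``an optimal valid mechanism'' and never argues that one exists. Your existence paragraph is therefore a genuine addition, and the worry is legitimate (cf.\ Lemma~\ref{lem:multi_task_no_opt}). However, its final repair step fails as written.

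The problem is the zeroing step. Setting $\beta_{i,\theta_i}=0$ when $x_{i,\theta_i}=0$ does not only lower deviation payoffs. $\beta_{i,\theta_i}$ is also type $\theta_i$'s own truthful payoff on the left-hand side of \eqref{eq:regular_ic}. After zeroing, that constraint reads $0\ge\beta_{i,\theta'_i}-x_{i,\theta'_i}v_{\theta_i}$, which need not hold.

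For example, take one agent with $v_{\theta^1}=0.1<v_{\theta^2}=0.5$, $x_{1,\theta^1}=1$, $x_{1,\theta^2}=0$, $\beta_{1,\theta^1}=0.8$ and $\beta_{1,\theta^2}=0.3$. This point satisfies \eqref{eq:regular_ir}--\eqref{eq:regular_ic}. Yet after zeroing $\beta_{1,\theta^2}$, type $\theta^2$ gains $0.3$ by reporting $\theta^1$.

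The missing idea is that at an LP optimum the payments are minimal, and minimal payments vanish on never-allocated types. One way to make this precise is as follows.
\begin{enumerate}
\item The two-line argument of Lemma~\ref{lem:monotone_necessary} applies verbatim to the linear constraints \eqref{eq:regular_ic}. So an LP optimum $(x^*,\beta^*)$ satisfies $x^*_{i,\theta_i}\ge x^*_{i,\hat\theta_i}$ whenever $v_{\theta_i}<v_{\hat\theta_i}$.
\item Order each agent's types by increasing critical value, breaking ties by decreasing $x^*_{i,\theta_i}$. Along this order $x^*$ is monotone.
\item Take the recursive payments \eqref{eq:beta_expected_profit_max} for this order. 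They satisfy \eqref{eq:regular_ir}--\eqref{eq:regular_ic} by the proof of Lemma~\ref{lem:beta_given_x}, which only uses that $v$ is nondecreasing and $x^*$ nonincreasing along the order.
\item These payments are componentwise at most any $\beta$ satisfying those constraints. To see this, induct downward from the last type, using IR and the IC constraint against the next type.
\item They are zero on the suffix of types with $x^*_{i,\theta_i}=0$.
\end{enumerate}
Substituting them for $\beta^*$ yields a valid regular mechanism with weakly larger profit, hence an optimal one.
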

\begin{proof}
    Let $(x,\beta)$ be an optimal valid mechanism, and let $(x,\beta')$ be the regular mechanism with the same allocation rule $x$ and the payment scheme such that $\beta'_{i,\theta_i} = \beta_{i,\theta_i}$ for every agent $i \in [N]$ and type $\theta_i \in \typeSet_i$.
    Notice that this is well defined, as it is sufficient to set $\beta'_{i,\theta_i,\theta_{-i}} = \frac{\beta_{i,\theta_i}}{x_{i,\theta_i}}x_{i,\theta_i,\theta_{-i}}$, with the convention that $0/0 \coloneqq 0$.
    By Equation~\eqref{eq:profit_in_beta_and_exinterim}, it follows that $\profit(x,\beta) = \profit(x,\beta')$.
    We now show that it is IR and BIC by means of Lemma~\ref{lem:regular_ic_and_ir}.
    Fix an agent $i \in [N]$ of type $\theta_i \in \typeSet_i$.
    Since $(x,\beta)$ satisfies the IR constraint for every type report $(\theta_i,\theta_{-i}) \in \typeSet$, it must be the case that $\beta_{i,\theta_i} p_{\theta_i} -x_{i,\theta_i}c_{\theta_i} \ge 0$.
    As $(x,\beta')$ has the same ex-interim allocation probabilities and contracts, it must be IR by Lemma~\ref{lem:regular_ic_and_ir}.
    % We have that:
    % \begin{align*}
    %     \beta'_{,\theta_i,\theta_{-i}} = \frac{\beta_{i,\theta_i}}{x_{i,\theta_i}}x_{i,\theta_i,\theta_{-i}} \ge \frac{x_{i,\theta_i} v_{\theta_i}}{x_{i,\theta_i}}x_{i,\theta_i,\theta_{-i}} = x_{i,\theta_i} v_{\theta_i},
    % \end{align*}
    % for every type report $\theta_{-i} \in \typeSet_{-i}$, where the inequality holds because otherwise $(x,\alpha)$ would not be IR. As a result.
    Now consider the IC constraints, which, by Lemma~\ref{lem:regular_ic_and_ir} and the fact that $\beta'_{i,\theta_i} = \beta_{i,\theta_i}$, can be expressed as:
    \begin{equation*}
        \beta_{i,\theta_i} p_{\theta_i} -x_{i,\theta_i}c_{\theta_i} \ge \beta_{i,\theta'_i} p_{\theta_i} -x_{i,\theta'_i}c_{\theta_i}
    \end{equation*}
    for every other type $\theta'_i \in\typeSet_i$.
    This is verified since:
    \begin{align*}
        \beta_{i,\theta_i} p_{\theta_i} -x_{i,\theta_i}c_{\theta_i} &\ge \sum_{\theta_{-i} \in \typeSet_{-i}} \mu_{\theta_{-i}} [\beta_{i,\theta'_i, \theta_{-i}} p_{\theta_i} -x_{i,\theta'_i, \theta_{-i}}c_{\theta_i}]_+ \\
        &\ge \sum_{\theta_{-i} \in \typeSet_{-i}} \mu_{\theta_{-i}} \beta_{i,\theta'_i, \theta_{-i}} p_{\theta_i} -x_{i,\theta'_i, \theta_{-i}}c_{\theta_i} \\
        &= \beta_{i,\theta'_i} p_{\theta_i} -x_{i,\theta'_i}c_{\theta_i},
    \end{align*}
    where the first inequality holds by the BIC constraints satisfied by $(x,\beta)$.
\end{proof}

Finally, we show that the allocation rule of the optimal regular mechanism is monotone over a suitably defined total ordering over the types.
\begin{restatable}{lemma}{TotalOrderMonotonicityLemma}
    \label{lem:total_monotonicity}
    There always exists an optimal regular mechanism $(x,\beta)$ such that, for every agent $i$ and two types $\theta_i, \hat\theta_i \in \typeSet_i$, it holds that $x_{i,\theta_i} \ge x_{i,\hat\theta_i}$ whenever $v_{\theta_i} < v_{\hat\theta_i}$ or $v_{\theta_i} = v_{\hat\theta_i}$ and $p_{\theta_i} > p_{\hat\theta_i}$.
    % notice that it holds for basically every optimum. The only degenerate cases where it holds some optimum does not use this tie breaking requires some type with c=p=0
\end{restatable}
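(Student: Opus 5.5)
By Lemma~\ref{lem:regular_optimal} we may start from an optimal regular mechanism $(x,\beta)$. By Lemma~\ref{lem:monotone_necessary} its interim allocations already satisfy $x_{i,\theta_i}\ge x_{i,\hat\theta_i}$ whenever $v_{\theta_i}<v_{\hat\theta_i}$. So only pairs with equal critical value can violate the statement. The plan is to repair these ties by a swapping argument that never decreases profit and preserves regularity, IR and BIC. IR and BIC will be checked through the ex-interim characterization of Lemma~\ref{lem:regular_ic_and_ir}.

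Fix agent $i$ and two types $\theta_i,\hat\theta_i$ with $v_{\theta_i}=v_{\hat\theta_i}=v$ and $p_{\theta_i}>p_{\hat\theta_i}$, but $x_{i,\theta_i}<x_{i,\hat\theta_i}$.

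\textbf{Observation.} Constraint~\eqref{eq:regular_ic} for the two types reads $\beta_{i,\theta_i}-x_{i,\theta_i}v\ge \beta_{i,\hat\theta_i}-x_{i,\hat\theta_i}v$, and the same inequality with the roles reversed. Hence both types have the same ex-interim utility $u:=\beta_{i,\theta_i}-x_{i,\theta_i}v=\beta_{i,\hat\theta_i}-x_{i,\hat\theta_i}v$. Consequently, toward every other type $\theta''_i$ the two types face identical incentive constraints, and they are interchangeable in the constraint system.

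\textbf{Modification.} Define $(x',\beta')$ by swapping the rows of $\theta_i$ and $\hat\theta_i$. For every $\theta_{-i}$, set $x'_{j,(\theta_i,\theta_{-i})}=x_{j,(\hat\theta_i,\theta_{-i})}$ for all $j\in[N]$, set $x'_{j,(\hat\theta_i,\theta_{-i})}=x_{j,(\theta_i,\theta_{-i})}$, and swap $\beta$ in the same way. Keep the ex-interim payments $\beta_{i,\cdot}$ of agent $i$ attached to the swapped allocation, so the mechanism remains regular.

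\textbf{Feasibility.}
\begin{itemize}
\item For agent $i$, the ex-interim pairs $(x_{i,\cdot},\beta_{i,\cdot})$ of the two types are simply exchanged. Since both types have critical value $v$, the constraints \eqref{eq:regular_ir}--\eqref{eq:regular_ic} are invariant under this relabeling.
\item For each agent $j\neq i$, the interim quantities $x_{j,\theta_j}$ and $\beta_{j,\theta_j}$ average over $\theta_{-j}$, which includes $\theta_i$ and $\hat\theta_i$ weighted by $\mu_{\theta_i}$ and $\mu_{\hat\theta_i}$. If these weights differ, the swap changes the interim quantities of the other agents, and this is the main obstacle.
\end{itemize}

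\textbf{Handling unequal probabilities.} Rather than a pure swap, I would modify only agent $i$'s allocation. Move probability mass $\delta_{\theta_{-i}}$ of assigning the task to $i$ from profile $(\hat\theta_i,\theta_{-i})$ to $(\theta_i,\theta_{-i})$, scaled so that the quantity $\mu_{\theta_i}x_{i,\theta_i}+\mu_{\hat\theta_i}x_{i,\hat\theta_i}$ is redistributed. The mass removed from agent $i$ is compensated by changing $x$ toward $\varnothing$ or toward the other agents. This is delicate, so the simpler route is preferred: modify only agent $i$'s interim pair within a one-parameter family, replacing $(x_{i,\theta_i},x_{i,\hat\theta_i})$ by the same value $\bar x$ for both types. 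Concretely, set the allocation probabilities of the two types to the $\mu$-weighted mixture of their rows, i.e.\ assign both types the row $\lambda\,x_{\cdot,(\theta_i,\cdot)}+(1-\lambda)\,x_{\cdot,(\hat\theta_i,\cdot)}$ with $\lambda=\mu_{\theta_i}/(\mu_{\theta_i}+\mu_{\hat\theta_i})$. This pooling leaves every other agent's interim quantities unchanged, since these are linear in the rows weighted by $\mu$. It also keeps agent $i$'s utility $u$ for both types, by linearity of $\beta-xv$, and the menu seen by any other type of $i$ is a convex combination of two options worth the same to every type. The profit change for agent $i$ is $\sum p\,(x-\beta)\mu$. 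Since $x-\beta=x(1-v)-u$, pooling shifts weight $x$ from the higher to the lower $p$ only if $1-v<0$; when $1-v\ge 0$ I would instead give the high-$p$ type the larger row (a full swap combined with the pooling adjustment for the other agents), which weakly raises $\sum_\cdot \mu p x(1-v)$.

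\textbf{Termination.} To conclude I would pick, among optimal regular mechanisms (a compact set after bounding payments), one maximizing a potential such as $\sum_i\sum_{\theta_i}\mu_{\theta_i}p_{\theta_i}x_{i,\theta_i}$. Any violating tie then yields a strict improvement of the potential, contradicting maximality.
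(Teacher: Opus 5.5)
Switching to pooling is the right construction; it is exactly the one the paper uses. Your checks that pooling leaves the interim quantities of agents $j\neq i$ unchanged and preserves the common utility $u$ of the two tied types are correct.

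\textbf{The gap is the profit analysis that follows: its sign is reversed.} As a result you abandon pooling precisely when $v<1$, where it works, and keep it for $v>1$, where it fails. Pooling conserves $\mu_{\theta_i}x_{i,\theta_i}+\mu_{\hat\theta_i}x_{i,\hat\theta_i}$ and moves allocation mass from the low-$p$ type (which had the larger $x$) to the high-$p$ type. Since $x-\beta=x(1-v)-u$ with $u$ fixed, agent $i$'s profit changes by
\[
(1-v)\,\frac{\mu_{\theta_i}\mu_{\hat\theta_i}}{\mu_{\theta_i}+\mu_{\hat\theta_i}}\,(p_{\theta_i}-p_{\hat\theta_i})\,(x_{i,\hat\theta_i}-x_{i,\theta_i}),
\]
which is strictly positive when $v<1$. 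So for $v<1$ pooling alone contradicts the optimality of $(x,\beta)$, with no swap and no potential function needed; this is the paper's argument. Your substitute for $1-v\ge 0$, ``a full swap combined with the pooling adjustment for the other agents'', is never defined. Moreover, a full swap is exactly the operation you showed breaks the other agents' interim quantities when $\mu_{\theta_i}\neq\mu_{\hat\theta_i}$.

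For $v>1$ the displayed quantity is strictly negative, so pooling leaves the set of optimal mechanisms. Your potential argument then has no profit-preserving move to apply: with $u$ and the weighted mass of the pair fixed, any shift of allocation toward the high-$p$ type costs profit. This case needs a separate idea. You could show that allocating to types with $v_{\theta_i}\ge 1$ can be dropped without loss, for instance because under minimal payments $\varphi_i(\theta_i)\le p_{\theta_i}(1-v_{\theta_i})\le 0$. After that, every tie at $v\ge 1$ holds trivially as $0\ge 0$. The paper disposes of $v\ge 1$ this way, only sketched, before running the pooling contradiction for $v<1$. At $v=1$ your potential would in fact work with pooling, since profit is unchanged while $\sum\mu p x$ strictly increases.

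A minor inaccuracy: the two pooled options are worth the same only to types with critical value $v$, not to every type. BIC for the other types still holds, because the pooled option is worth at most the better of the two.
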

\begin{proof}
    According to Lemma~\ref{lem:monotone_necessary}, the allocation rule of a BIC mechanism is monotone in ascending $v_{\theta_i}$.
    Now let $(x,\alpha)$ be a regular profit-maximizing BIC and IR mechanism (which exists by Lemma~\ref{lem:regular_optimal}), and let $(x,\beta)$ be its $\beta$-representation.
    Let $\tilde\theta_i, \hat \theta_i \in \typeSet_i$ be two types of agent $i \in [N]$ such that $v_{\tilde\theta_i} = v_{\hat\theta_i} \eqqcolon v$ and $p_{\tilde\theta_i} > p_{\hat\theta_i}$.
    If $v \ge 1$, then one can build a mechanism with weakly larger profit, by setting $x_{i,\theta_i} =0$ for types $\theta_i$ with critical value $v_{\theta_i} \ge 1$, and reducing the payments for the other types.
    % Don't make me prove this formally, it becomes a mess: you need to use the fact that the payments that we compute later are optimal for any monotone allocatio rule in ascending order of critical value and \emph{any} tie-breaking rule (sicne we do not use the tie-breakign rule there). So take a mechanism that uses types with $v\ge1$, compute its optimal payments, then change the allocation rule, recompute payments, and the new payments have decreased. Note: giving the task with v>1 incur a loss of profit because you pay more than one.
    Therefore, we consider the case where $v < 1$.
    Suppose, by contradiction, that $x_{i,\tilde\theta_i} < x_{i,\hat\theta_i}$.
    We now create a new mechanism $(x',\beta')$ that is BIC and IR, and $\profit(x',\beta') > \profit(x,\alpha)$, reaching a contradiction.

    As a first step, let us analyze the mechanism $(x,\beta)$ (under the  contradictory hypothesis $x_{i,\tilde\theta_i} < x_{i,\hat\theta_i}$) in order to recover two properties that we will need later. 
    Let $\Delta_{\tilde\theta_i} \coloneqq x_{i,\tilde\theta_i} -\beta_{i,\tilde\theta_i}$ and $\Delta_{\hat\theta_i} \coloneqq x_{i,\hat\theta_i} -\beta_{i,\hat\theta_i}$.
    Thanks to the IC constraints (specifically Lemma~\ref{lem:regular_ic_and_ir}) between the types $\tilde\theta_i$ and $\hat\theta_i$ that share the same critical value, we have:
    \begin{equation*}
        \beta_{i,\tilde\theta_i} - x_{i,\tilde\theta_i} v \ge \beta_{i,\hat\theta_i} -x_{i,\hat\theta_i} v \ge \beta_{i,\tilde\theta_i} -x_{i,\tilde\theta_i} v,
    \end{equation*}
    where the first inequality disincentivizes misreporting $\hat\theta_i$ in place of $\tilde\theta_i$, and the second the other way around.
    Therefore:
    \begin{equation}
        \label{eq:proof_same_utility_at_same_v}
        \beta_{i,\tilde\theta_i} -x_{i,\tilde\theta_i} v = \beta_{i,\hat\theta_i} -x_{i,\hat\theta_i} v,
    \end{equation}
    which gives: $\beta_{i,\hat\theta_i} = \beta_{i,\tilde\theta_i} +(x_{i,\hat\theta_i}-x_{i,\tilde\theta_i}) v$ and $\beta_{i,\tilde\theta_i} = \beta_{i,\hat\theta_i} +(x_{i,\tilde\theta_i} -x_{i,\hat\theta_i})v$.
    Hence:
    \begin{equation}
    \label{eq:proof_delta_hat_ge_delta_tilde}
        % \Delta_{\hat\theta} = x_{i,\hat\theta_i} -\beta_{i,\hat\theta_i} = x_{i,\hat\theta_i}-\beta_{i,\tilde\theta_i} +(x_{i,\hat\theta_i}-x_{i,\tilde\theta_i}) v \ge x_{i,\hat\theta_i}-\beta_{i,\tilde\theta_i} > x_{i,\tilde\theta_i}-\beta_{i,\tilde\theta_i},
        \Delta_{\hat\theta} -\Delta_{\tilde\theta} = x_{i,\hat\theta_i} -\beta_{i,\tilde\theta_i} -(x_{i,\hat\theta_i}-x_{i,\tilde\theta_i}) v  -x_{i,\tilde\theta_i} +\beta_{i,\tilde\theta_i} = (x_{i,\hat\theta_i}-x_{i,\tilde\theta_i})(1-v) > 0.
    \end{equation}
    where the inequality is strict as $x_{i,\hat\theta_i} > x_{i,\tilde\theta_i}$ and $(1-v) > 0$.

    Now we build a new regular mechanism $(x',\beta')$.
    Intuitively, we keep the same allocation rule when agent $i$ does not report $\tilde\theta_i$ or $\hat\theta_i$, namely $x'_{\theta_i,\cdot}=x_{\theta_i,\cdot}$ for every $\theta_i \notin \{\tilde\theta_i,\hat\theta_i\}$.
    For the remaining two types of agent $i$, we build $x'_{\tilde\theta_i,\cdot}=x'_{\hat\theta_i,\cdot}$ as a suitable convex combination of the two original allocation rules. 
    % Basicaly, use the linear combination of the tow lines. Then change the contracts of agent i to get a regualr mehcanism while keeping the same ex-interim probabilities. Notice also that doing this does not change the contracts of the other agents $j\ neq i$.
    Specifically, we let:
    \begin{align*}
        \tilde a \coloneqq \frac{\mu_{\tilde\theta_i}}{\mu_{\tilde\theta_i} +\mu_{\hat\theta_i}},
        \quad
        \hat a \coloneqq \frac{\mu_{\hat\theta_i}}{\mu_{\tilde\theta_i} +\mu_{\hat\theta_i}}
    \end{align*}
    and then we complete the definition of the allocation rule with:
    \begin{align*}
        x'_{j,\tilde\theta_i,\theta_{-i}} &\coloneqq x'_{j,\hat\theta_i,\theta_{-i}} \coloneqq \tilde a x_{j,\tilde\theta_i,\theta_{-i}} +\hat a x_{j,\hat\theta_i,\theta_{-i}},
    \end{align*}
    for every agent $j \in [N]$ (including agent $i$) and type report $\theta_{-i} \in \typeSet_{-i}$.
    Notice that $\sum_j x_{j,\theta} \le 1$ for every $\theta \in \typeSet$, ensuring that $x'$ is an allocation rule.
    Furthermore, the agents $j \neq i$ see their ex-interim allocation probabilities unchanged.
    As for the contracts, we keep the same ex-interim contracts $\beta'_{j,\theta_j} = \beta_{j,\theta_j}$ for every agent $j \neq i$ and type $\theta_j \in \Theta_j$.
    For agent $i$, we define the following ex-interim contracts:
    \begin{align*}
        \beta'_{i,\tilde\theta_i} &\coloneqq \beta'_{i,\hat\theta_i} \coloneqq \tilde a \beta_{i,\tilde\theta_i} +\hat a \beta_{i,\hat\theta_i}, \\
        \beta'_{i,\theta_i} &\coloneqq \beta_{i,\theta_i} &&\forall \theta_i \in \typeSet_i \setminus \{\tilde\theta_i,\hat\theta_i\}.
    \end{align*}
    Notice that the interim expected contracts are zero whenever the interim allocation probability is zero, and thus one can compute the expected contract $\beta'_{i,\theta_i,\theta_{-i}} = \frac{\beta'_{i,\theta_i}}{x'_{i,\theta_i}}x'_{i,\theta_i,\theta_{-i}}$ for every $(\theta_i,\theta_{-i}) \in \typeSet$. %.  \mat{check: aggiungere ' dappertutto in quest'ultima uguaglianza?}

    We now show that $(x',\beta')$ is BIC and IR.
    Since the mechanism is regular, we can employ Lemma~\ref{lem:regular_ic_and_ir} and analyze the constraints by means of the ex-interim values only.
    % To prove this, decompose (by definition) $x'_{j,\theta_j}$ and $\beta'_{j,\theta_j}$ into the weighted sum over the $\theta_{-j}$.  For every $\theta_{-j}$ whose $i$-component is not $\tilde\theta_i,\hat\theta_i$, you have the same contribution as in the originalmechanism. For every $\theta_{-j}$ whose $i$-component is $\tilde\theta_i$, there is another $\theta'_{-j}$ whose $i$-th component is $\hat\theta_i$. Sum the two components and you recover the same two compoents of the original mechanism.
    For every $j \neq i$, it is easy to verify that $x'_{j,\theta_j} = x_{j,\theta_j}$ and $\beta'_{j,\theta_j} = \beta_{j,\theta_j}$ for every $\theta_j \in \typeSet_j$; thus, the BIC and IR constraints for these agents are satisfied. 
    We now analyze agent $i$.
    When agent $i$ does not have type $\tilde \theta_i$ or $\hat \theta_i$, the IR constraints are obviously satisfied, as the ex-interim values are the same as in the original mechanism $(x,\beta)$.
    When instead the type is $\theta_i \in \{\tilde\theta_i, \hat\theta_i\}$, we have:
    \begin{align*}
        \beta'_{i,\tilde\theta_i} -x'_{i,\tilde\theta_i} v_{\tilde\theta_i}  &= \tilde a \beta_{i,\tilde\theta_i} +\hat a \beta_{i,\hat\theta_i} -(\tilde a x_{i,\tilde\theta_i} +\hat a x_{i,\hat\theta_i})v_{\theta_i} \\
        &= \tilde a (\beta_{i,\tilde\theta_i} -x_{i,\tilde\theta_i} v_{\theta_i}) +\hat a (\beta_{i,\hat\theta_i} -x_{i,\hat\theta_i} v_{\theta_i}) \ge 0,
    \end{align*}
    where the inequality holds since $(x,\beta)$ is IR. 
    Therefore, the IR constraints are satisfied.

    % \bollo{Careful: qui non è regolare, quindi c'è un max di mezzo. Esempio: se nella riga $\tilde\theta$ abbiamo $x_i=[1,0]$ e $\beta_i=[1/2,0]$, mentre sull'altra riga $x_i=[0,1]$ e $\beta_i=[0,1/4]$, una combinazione delle due da $x'=[1,1]$ e $\beta'=[\text{due cose diverse}]$, quindi nienete regolarità}
    % \bollo{Mybe I should make an extra step. "redistribute the contracts" of $\beta'$ to get $(x^\circ,\beta^\circ)$ regular, with $x^\circ=x'$ and same ex-interim contracts. This things has the same profit, so it's better than the  $(x,\beta)$, and now we can prove that it is BIC and IR by using the ex-interim values and it works.}
    To complete the proof that $(x',\beta')$ is BIC, we show that the BIC constraints for agent $i$ are also satisfied.
    Observe that, when facing $(x',\beta')$, reporting $\tilde\theta_i$ or $\hat\theta_i$ provides the same expected utility to agent $i$, as the mechanism treats the two types the same way.
    Now, suppose that agent $i$ has true type $\theta_i \in \typeSet_i \setminus \{\tilde\theta_i,\hat\theta_i\}$.
    Since the original mechanism is BIC, the agent is not incentivized to misreport $\theta'_i \notin \{\tilde\theta_i,\hat\theta_i\}$.
    Furthermore, it is not incentivized to misreport $\tilde\theta_i$ (or equivalently $\hat\theta_i$) since:
    \begin{align*}
        \beta'_{i,\tilde\theta_i} -x'_{i,\tilde\theta_i} v_{\theta_i} &= \tilde a (\beta_{i,\tilde\theta_i} -x_{i,\tilde\theta_i} v_{\theta_i}) +\hat a(\beta_{i,\hat\theta_i} -x_{i,\hat\theta_i} v_{\theta_i})  \\
        &\le \tilde a (\beta_{i,\theta_i} -x_{i,\theta_i} v_{\theta_i}) +\hat a (\beta_{i,\theta_i} -x_{i,\theta_i} v_{\theta_i}) \\
        &= \beta_{i,\theta_i} -x_{i,\theta_i} v_{\theta_i} = \beta'_{i,\theta_i} -x'_{i,\theta_i} v_{\theta_i},
    \end{align*}
    where the inequality holds thanks to the BIC constraint of mechanism $(x,\beta)$, and the second to last equality holds since $\tilde a +\hat a = 1$.
    Suppose instead that agent $i$ is of type $\theta_i \in \{\tilde\theta_i,\hat\theta_i\}$.
    Then, it is not incentivized to misreport $\theta'_i \notin \{\tilde\theta_i,\hat\theta_i\}$ as:
    \begin{align*}
        \beta'_{i,\tilde\theta_i} -x'_{i,\tilde\theta_i} v &= \beta'_{i,\hat\theta_i} -x'_{i,\hat\theta_i} v \\
        &= \tilde a (\beta_{i,\tilde\theta_i} -x_{i,\tilde\theta_i} v) +\hat a (\beta_{i,\hat\theta_i} -x_{i,\hat\theta_i} v) \\
        &= \tilde a (\beta_{i,\tilde\theta_i} -x_{i,\tilde\theta_i} v) +\hat a (\beta_{i,\tilde\theta_i} -x_{i,\tilde\theta_i} v) \\
        &= \beta_{i,\tilde\theta_i} -x_{i,\tilde\theta_i} v \\
        &\ge \beta_{i,\theta_i} -x_{i,\theta_i} v = \beta'_{i,\theta_i} -x'_{i,\theta_i} v,
    \end{align*}
    where the first two equalities  and the last one hold by the definition of $(x',\beta')$, the third thanks to Equation~\eqref{eq:proof_same_utility_at_same_v}, the fourth since $\tilde a +\hat a = 1$, and the inequality holds by the BIC constraint of mechanism $(x,\beta)$.
    As a result, we conclude that $(x',\beta')$ is BIC and IR by Lemma~\ref{lem:regular_ic_and_ir}.

    Finally, we compare the profit of the two mechanisms.
    Decomposing the profit by Equation~\eqref{eq:profit_in_beta_and_exinterim}, we observe that the only difference between the two profits consists in the contribution of agent $i$ when it is of type $\tilde\theta_i$ or $\hat\theta_i$, which we denote as $r,r'$ for the two mechanisms $(x,\beta)$ and $(x',\beta')$, respectively.
    Let $\Delta_{\tilde\theta_i} \coloneqq x_{i,\tilde\theta_i} -\beta_{i,\tilde\theta_i}$ and $\Delta_{\hat\theta_i} \coloneqq x_{i,\hat\theta_i} -\beta_{i,\hat\theta_i}$. We have:
    \begin{align*}
        r' &= \mu_{\tilde\theta_i} p_{\tilde\theta_i} (x'_{i,\tilde\theta_i} -\beta'_{i,\tilde\theta_i}) +\mu_{\hat\theta_i} p_{\hat\theta_i} (x'_{i,\hat\theta_i} -\beta'_{i,\hat\theta_i}) \\
        &= \mu_{\tilde\theta_i} p_{\tilde\theta_i} \tilde a \Delta_{\tilde\theta_i} +\mu_{\tilde\theta_i} p_{\tilde\theta_i} \hat a \Delta_{\hat\theta_i} +\mu_{\hat\theta_i} p_{\hat\theta_i} \tilde a \Delta_{\tilde\theta_i} +\mu_{\hat\theta_i} p_{\hat\theta_i} \hat a\Delta_{\hat\theta_i} \\
        &=\tilde a\Delta_{\tilde\theta_i} (\mu_{\tilde\theta_i} p_{\tilde\theta_i} +\mu_{\hat\theta_i} p_{\hat\theta_i}) +\hat a \Delta_{\hat\theta_i}(\mu_{\tilde\theta_i} p_{\tilde\theta_i} +\mu_{\hat\theta_i} p_{\hat\theta_i}) \\
        &=\frac{\mu_{\tilde\theta_i}\Delta_{\tilde\theta_i} (\mu_{\tilde\theta_i} p_{\tilde\theta_i} +\mu_{\hat\theta_i} p_{\hat\theta_i})}{\mu_{\tilde\theta_i} +\mu_{\hat\theta_i}} +\frac{\mu_{\hat\theta_i}\Delta_{\hat\theta_i}(\mu_{\tilde\theta_i} p_{\tilde\theta_i} +\mu_{\hat\theta_i} p_{\hat\theta_i})}{\mu_{\tilde\theta_i}+\mu_{\hat\theta_i}} \\
        &=\frac{\mu^2_{\tilde\theta_i} p_{\tilde\theta_i} \Delta_{\tilde\theta_i} +\mu_{\tilde\theta_i}\mu_{\hat\theta_i} p_{\hat\theta_i} \Delta_{\tilde\theta_i} +\mu_{\tilde\theta_i}\mu_{\hat\theta_i} p_{\tilde\theta_i} \Delta_{\hat\theta_i} +\mu^2_{\hat\theta_i} p_{\hat\theta_i} \Delta_{\hat\theta_i}}{\mu_{\tilde\theta_i} +\mu_{\hat\theta_i}} \\
        &=\frac{\mu^2_{\tilde\theta_i} p_{\tilde\theta_i} \Delta_{\tilde\theta_i} +\mu_{\tilde\theta_i}\mu_{\hat\theta_i}( p_{\hat\theta_i} \Delta_{\tilde\theta_i} +p_{\tilde\theta_i} \Delta_{\hat\theta_i}) +\mu^2_{\hat\theta_i} p_{\hat\theta_i} \Delta_{\hat\theta_i}}{\mu_{\tilde\theta_i} +\mu_{\hat\theta_i}}
    \end{align*}
    Consequently:
    \begin{align*}
        r' -r &= \frac{\mu^2_{\tilde\theta_i} p_{\tilde\theta_i} \Delta_{\tilde\theta_i} +\mu_{\tilde\theta_i}\mu_{\hat\theta_i}( p_{\hat\theta_i} \Delta_{\tilde\theta_i} +p_{\tilde\theta_i} \Delta_{\hat\theta_i}) +\mu^2_{\hat\theta_i} p_{\hat\theta_i} \Delta_{\hat\theta_i}}{\mu_{\tilde\theta_i} +\mu_{\hat\theta_i}} -\mu_{\tilde\theta_i}p_{\tilde\theta_i}\Delta_{\tilde\theta_i} -\mu_{\hat\theta_i}p_{\hat\theta_i}\Delta_{\hat\theta_i} \\
        &\hspace{-1cm}=\frac{\mu^2_{\tilde\theta_i} p_{\tilde\theta_i} \Delta_{\tilde\theta_i} \hspace{-0.1cm}+\hspace{-0.1cm}\mu_{\tilde\theta_i}\mu_{\hat\theta_i}( p_{\hat\theta_i} \Delta_{\tilde\theta_i} \hspace{-0.1cm}+\hspace{-0.1cm}p_{\tilde\theta_i} \Delta_{\hat\theta_i}) \hspace{-0.1cm}+\hspace{-0.1cm}\mu^2_{\hat\theta_i} p_{\hat\theta_i} \Delta_{\hat\theta_i} - \mu^2_{\tilde\theta_i} p_{\tilde\theta_i} \Delta_{\tilde\theta_i} \hspace{-0.1cm}-\hspace{-0.1cm}\mu_{\tilde\theta_i}\mu_{\hat\theta_i} (p_{\tilde\theta_i} \Delta_{\tilde\theta_i} + p_{\hat\theta_i} \Delta_{\hat\theta_i}) \hspace{-0.1cm}-\hspace{-0.1cm}\mu^2_{\hat\theta_i} p_{\hat\theta_i} \Delta_{\hat\theta_i}}{\mu_{\tilde\theta_i} +\mu_{\hat\theta_i}}\\
        &=\frac{\mu_{\tilde\theta}\mu_{\hat\theta}}{\mu_{\tilde\theta}+\mu_{\hat\theta}}(p_{\hat\theta_i} \Delta_{\tilde\theta_i} +p_{\tilde\theta_i} \Delta_{\hat\theta_i} -p_{\tilde\theta_i} \Delta_{\tilde\theta_i} - p_{\hat\theta_i} \Delta_{\hat\theta_i}) \\
        &=\frac{\mu_{\tilde\theta}\mu_{\hat\theta}}{\mu_{\tilde\theta}+\mu_{\hat\theta}}(p_{\tilde\theta_i}-p_{\hat\theta_i})(\Delta_{\hat\theta_i}-\Delta_{\tilde\theta_i}) > 0,
    \end{align*}
    % \bollo{Propongo di spostare qualche conto in appendix e quì dire "per linearità/convessita"}
    where the last inequality holds because every type appears with non-zero probability, $p_{\tilde\theta_i} > p_{\hat\theta_i}$ by assumption, and $\Delta_{\hat\theta_i} > \Delta_{\tilde\theta_i}$ (Equation~\eqref{eq:proof_delta_hat_ge_delta_tilde}).
    As a result, we have built a BIC and IR mechanism $(x',\beta')$ such that $\profit(x',\beta') > \profit(x,\beta)$.
    Hence, $(x,\beta)$ is not optimal, reaching a contradiction.
\end{proof}

Lemma~\ref{lem:total_monotonicity} allows us to impose a total order over the types of every agent $i \in [N]$.
Specifically, for every agent $i$, we define a total order $<$ over $\typeSet_i$ by ordering types according to increasing critical value $v_{\theta_i}$.
Ties between types with the same critical value are broken by placing the type with the higher success probability $p_{\theta_i}$ first.
Furthermore, we denote by $\firstType_i$ and $\lastType_i$ the first and the last type according to this order, respectively.
For every $\theta_i \in \Theta_i$ and every $k \in \mathbb{Z}$ for which it is well defined, we let $\theta_i + k$ denote the type at distance $k$ from $\theta_i$ in the order $<$.
We will say that an allocation rule $x$ is monotone when $x_{i,\theta_i} \ge x_{i,\theta_i} +1$ for every $\theta_i < \lastType_i$. 

\subsection{Computing an optimal mechanism}
By Lemma~\ref{lem:regular_optimal} and Lemma~\ref{lem:total_monotonicity}, restricting to monotone regular mechanisms is optimal.
We now follow the classical procedure to build a Myerson mechanism~\cite{myerson1981optimal}.
Specifically, we show that given any monotone allocation rule $x$, one can build a payment scheme $\beta$ such that $(x,\beta)$ is BIC and IR.
We then express the profit of such a mechanism as a function of the allocation rule $x$ only, and show that it can be decomposed into the sum of suitable virtual utilities. 

We now build, given any monotone allocation rule $x$, a payment scheme $\beta$ such that $(x,\beta)$ is regular, BIC and IR.
Furthermore, the mechanism maximizes the profit among all the BIC mechanisms with such an allocation rule.
Since the BIC constraint (Equation~\eqref{eq:regular_ic}) resembles that of reverse auctions, we recover a similar payment scheme.
Specifically, we design a payment scheme such that, for every agent $i\in [N]$, the BIC constraint between any two consecutive types $\theta_i < \lastType_i$ and $\theta_i+1$ is binding.
This gives us a recursive payment structure that also satisfies the remaining BIC constraints.
We complete the definition of $\beta$ by ensuring individual rationality, which results in the following ex-interim payments:
\begin{equation}
    \label{eq:beta_expected_profit_max}
    \beta_{i,\theta_i} = \begin{cases}
        \beta_{i,\theta_i+1} +(x_{i,\theta_i}-x_{i,\theta_i+1})v_{\theta_i}, & \theta_i < \lastType_i \\
        x_{i,\lastType_i} v_{\lastType_i}, & \theta_i = \lastType_i.
    \end{cases}
\end{equation}
Observe that since $x$ is monotone, we have $\beta_{i,\theta_i}=0$ whenever $x_{i,\theta_i}=0$.
Therefore, the mechanism $(x,\beta)$ is well-defined, as we can recover the ex-post payment following Definition~\ref{def:regular_emchanism}.
We can now prove the following result, whose proof can be found in Appendix~\ref{app:single_task}.
\begin{restatable}{lemma}{BetaForSingleTask}
    \label{lem:beta_given_x}
    Let $(x,\beta)$ be a regular mechanism such that $x$ is monotone and $\beta$ satisfies Equation~\eqref{eq:beta_expected_profit_max}. Then the mechanism is BIC, IR, and maximizes the profit among all the BIC and IR mechanisms with the same allocation rule.
\end{restatable}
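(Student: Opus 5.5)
Since $(x,\beta)$ is regular, Lemma~\ref{lem:regular_ic_and_ir} reduces everything to the ex-interim constraints \eqref{eq:regular_ir} and \eqref{eq:regular_ic}, for each agent $i$ separately. Write $u_{\theta_i}\coloneqq \beta_{i,\theta_i}-x_{i,\theta_i}v_{\theta_i}$ for the utility of truthful type $\theta_i$. We first check that the construction is well defined. Unrolling \eqref{eq:beta_expected_profit_max} gives
\[
\beta_{i,\theta_i}=x_{i,\lastType_i}v_{\lastType_i}+\sum_{\theta_i\le\eta<\lastType_i}(x_{i,\eta}-x_{i,\eta+1})v_\eta .
\]
By monotonicity every term is nonnegative, and $x_{i,\theta_i}=0$ forces all later $x$'s to vanish, hence $\beta_{i,\theta_i}=0$.

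\textbf{IR.} Using the telescoping form together with $v_\eta\ge v_{\theta_i}$ for $\eta\ge\theta_i$, we get
\[
u_{\theta_i}=\sum_{\eta\ge\theta_i}(x_{i,\eta}-x_{i,\eta+1})(v_\eta-v_{\theta_i})\ge 0,
\]
with the convention $x_{i,\lastType_i+1}=0$.

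\textbf{BIC.} Fix a true type $\theta_i$ and a report $\theta'_i$. Then
\[
u_{\theta_i}-\bigl(\beta_{i,\theta'_i}-x_{i,\theta'_i}v_{\theta_i}\bigr)=(\beta_{i,\theta_i}-\beta_{i,\theta'_i})-(x_{i,\theta_i}-x_{i,\theta'_i})v_{\theta_i}.
\]
If $\theta'_i>\theta_i$, the payment difference is $\sum_{\theta_i\le\eta<\theta'_i}(x_{i,\eta}-x_{i,\eta+1})v_\eta$. The allocation difference telescopes over the same range, so the gap equals $\sum_{\theta_i\le\eta<\theta'_i}(x_{i,\eta}-x_{i,\eta+1})(v_\eta-v_{\theta_i})\ge0$. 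If $\theta'_i<\theta_i$, the gap symmetrically equals $\sum_{\theta'_i\le\eta<\theta_i}(x_{i,\eta}-x_{i,\eta+1})(v_{\theta_i}-v_\eta)\ge 0$. Both signs use only that the order $<$ is consistent with nondecreasing critical values; ties in $v$ give zero terms, so the tie-breaking by $p$ plays no role here.

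\textbf{Optimality.} This is the step needing care, since competitors may be non-regular. Let $(x,\beta')$ be any BIC and IR mechanism with the same allocation $x$. By \eqref{eq:profit_in_beta_and_exinterim}, profit depends only on ex-interim payments and is decreasing in them, so it suffices to show $\beta'_{i,\theta_i}\ge\beta_{i,\theta_i}$ for all $i$ and $\theta_i$.

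First, the argument in the proof of Lemma~\ref{lem:regular_optimal} shows that the ex-interim values of $(x,\beta')$ satisfy \eqref{eq:regular_ir} and \eqref{eq:regular_ic}. That argument uses linearity of expectation for IR and drops $[\cdot]_+$ to a lower bound for BIC. Hence we may work with ex-interim inequalities.

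We then induct downward from $\lastType_i$. IR at $\lastType_i$ gives $\beta'_{i,\lastType_i}\ge x_{i,\lastType_i}v_{\lastType_i}=\beta_{i,\lastType_i}$. For $\theta_i<\lastType_i$, the BIC constraint of type $\theta_i$ against reporting $\theta_i+1$ gives
\[
\beta'_{i,\theta_i}\ge\beta'_{i,\theta_i+1}+(x_{i,\theta_i}-x_{i,\theta_i+1})v_{\theta_i}\ge\beta_{i,\theta_i+1}+(x_{i,\theta_i}-x_{i,\theta_i+1})v_{\theta_i}=\beta_{i,\theta_i},
\]
by the inductive hypothesis. Summing $\mu_{\theta_i}p_{\theta_i}(x_{i,\theta_i}-\beta'_{i,\theta_i})\le\mu_{\theta_i}p_{\theta_i}(x_{i,\theta_i}-\beta_{i,\theta_i})$ over all $i$ and $\theta_i$ completes the proof.
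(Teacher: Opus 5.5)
Your proof is correct and follows the paper's structure: you reduce to the ex-interim constraints via Lemma~\ref{lem:regular_ic_and_ir}, verify IR and BIC from monotonicity and the critical-value ordering, and get optimality from the binding IR constraint at $\lastType_i$ together with the binding BIC constraints between consecutive types. Two differences are worth noting: your closed-form telescoping expressions for the utility gaps replace the paper's step-by-step inductions, and your downward induction on $\beta'_{i,\theta_i}$ makes rigorous what the paper only asserts informally in its final paragraph (you first relax a possibly non-regular competitor to ex-interim inequalities, as in the proof of Lemma~\ref{lem:regular_optimal}).
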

We now show that, when $x$ is monotone and $\beta$ satisfies Equation~\eqref{eq:beta_expected_profit_max}, the profit of the mechanism $(x,\beta)$ can be computed by means of appropriate virtual utilities.
Such virtual utilities must account for the multidimensionality of the problem, \emph{i.e.} they make use of both $p_{\theta_i}$ and $v_{\theta_i}$ for every $i \in [N]$ and $\theta_i \in \typeSet_i$.

\begin{restatable}{lemma}{VirtualUtilitiesLemma}
    \label{lem:virtual_utilities}
    % \bollo{Rivedere la notazione: perchè pedice $i$ a $\varphi$?}
    Let $x$ be a monotone allocation rule and $\beta$ a payment scheme satisfying Equation~\eqref{eq:beta_expected_profit_max}.
    Then:
    \begin{equation*}
        \profit(x,\beta) = \sum_{i \in [N]}\sum_{\theta_i \in \typeSet_i} \mu_{\theta_i}  x_{i,\theta_i} \varphi_i(\theta_i) =\sum_{\theta \in \typeSet} \mu_\theta \sum_{i \in [N]} x_{i,\theta} \varphi_i(\theta_i)
    \end{equation*}
    where the virtual utility $\varphi_i(\theta_i)$ is defined as:
    \begin{equation*}
        \varphi_i(\theta_i) \coloneqq p_{\theta_i} (1-v_{\theta_i}) -(v_{\theta_i} -v_{\theta_i-1})\sum_{\hat\theta_i< \theta_i} \frac{\mu_{\hat\theta_i}}{\mu_{\theta_i}} p_{\hat\theta_i}
    \end{equation*}
    for every agent $i \in [N]$ and type $\theta_i \in \typeSet_i$, with the convention that
    the second term is zero for $\theta_i=\firstType_i$. %\mat{AI si lamenta che $\theta_{i-1}$ potrebbe non esistere. gestire il caso}
\end{restatable}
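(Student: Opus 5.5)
Starting from the ex-interim decomposition of the profit in Equation~\eqref{eq:profit_in_beta_and_exinterim}, $\profit(x,\beta)=\sum_i\sum_{\theta_i}\mu_{\theta_i}p_{\theta_i}(x_{i,\theta_i}-\beta_{i,\theta_i})$, the plan is to substitute a closed form for $\beta_{i,\theta_i}$ obtained by unrolling the recursion in Equation~\eqref{eq:beta_expected_profit_max}, and then to exchange the order of summation, mirroring the discrete Myerson computation. Since the profit is additive across agents and the payments of agent $i$ depend only on its own ex-interim allocations, the whole argument can be carried out for a fixed agent $i$. The second equality in the statement then follows immediately by writing $x_{i,\theta_i}=\sum_{\theta_{-i}}\mu_{\theta_{-i}}x_{i,\theta_i,\theta_{-i}}$ and using independence, $\mu_\theta=\mu_{\theta_i}\mu_{\theta_{-i}}$.

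\textbf{Step 1: closed form for the payments.} Telescoping the recursion from $\theta_i$ up to $\lastType_i$ gives
\[
\beta_{i,\theta_i}=\sum_{\tau\ge\theta_i,\ \tau<\lastType_i}(x_{i,\tau}-x_{i,\tau+1})v_\tau+x_{i,\lastType_i}v_{\lastType_i}.
\]
Regrouping this by the coefficient of each $x_{i,\tau}$ yields the more useful form
\[
\beta_{i,\theta_i}=x_{i,\theta_i}v_{\theta_i}+\sum_{\tau>\theta_i}x_{i,\tau}(v_\tau-v_{\tau-1}).
\]
This can be checked either by Abel summation or by induction downward from $\lastType_i$: the recursion gives $\beta_{i,\theta_i}-\beta_{i,\theta_i+1}=x_{i,\theta_i}v_{\theta_i}-x_{i,\theta_i+1}v_{\theta_i}$, which matches the difference of the proposed expressions.

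\textbf{Step 2: substitute and swap sums.} Plugging this into $\sum_{\theta_i}\mu_{\theta_i}p_{\theta_i}(x_{i,\theta_i}-\beta_{i,\theta_i})$ produces two contributions. The diagonal terms give $\sum_{\theta_i}\mu_{\theta_i}p_{\theta_i}(1-v_{\theta_i})x_{i,\theta_i}$. The double sum is
\[
-\sum_{\theta_i}\mu_{\theta_i}p_{\theta_i}\sum_{\tau>\theta_i}x_{i,\tau}(v_\tau-v_{\tau-1}).
\]
Swapping the order of summation, the coefficient of $x_{i,\tau}$ becomes $-(v_\tau-v_{\tau-1})\sum_{\hat\theta_i<\tau}\mu_{\hat\theta_i}p_{\hat\theta_i}$. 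Factoring out $\mu_\tau$ gives exactly $\mu_\tau x_{i,\tau}\varphi_i(\tau)$. The term is absent for $\tau=\firstType_i$, which matches the stated convention.

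\textbf{Expected obstacle.} The only delicate point is the index bookkeeping in Step~1. One has to get the boundary term at $\lastType_i$ right and make sure that the increments $v_\tau-v_{\tau-1}$, rather than $v_{\tau+1}-v_\tau$, attach to $x_{i,\tau}$. I would verify this on $K=2$ before writing the general identity. Monotonicity of $x$ is not needed for the algebra itself; it is needed only so that $\beta$ is well defined, which Lemma~\ref{lem:beta_given_x} already ensures.
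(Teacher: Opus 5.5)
Your proposal is correct and follows essentially the same route as the paper. Both unroll the recursion into $\beta_{i,\theta_i}=x_{i,\theta_i}v_{\theta_i}+\sum_{\hat\theta_i>\theta_i}x_{i,\hat\theta_i}(v_{\hat\theta_i}-v_{\hat\theta_i-1})$, substitute it into the ex-interim profit, exchange the double sum, and obtain the ex-post form from independence; the only cosmetic difference is that the paper first applies the recursion once to isolate $\beta_{i,\theta_i+1}-x_{i,\theta_i+1}v_{\theta_i}$ before plugging in the closed form, whereas you substitute it directly.
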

\begin{proof}
    Fix an agent $i \in [N]$. We show that:
    \begin{equation}
        \label{eq:proof_equality_for_virtual_utility}
        \profit_i(x,\beta) \coloneqq \sum_{\theta_i \in \typeSet_i} \mu_{\theta_i} p_{\theta_i}(x_{i,\theta_i}-\beta_{i,\theta_i}) = \sum_{\theta_i \in \typeSet_i} \mu_{\theta_i}  x_{i,\theta_i} \varphi_i(\theta_i).
    \end{equation}
    By unraveling the recursion in Equation~\eqref{eq:beta_expected_profit_max}, it is easy to prove that:
    \begin{equation}
        \label{eq:proof_beta_unraveled}
        \beta_{i,\theta_i} = \sum_{\hat\theta_i \ge \theta_i} x_{i,\hat\theta_i} v_{\hat\theta_i} - \sum_{\hat\theta_i > \theta_i} x_{i,\hat\theta_i} v_{\hat\theta_i-1} = x_{i,\theta_i}v_{\theta_i} +\sum_{\hat\theta_i > \theta_i} x_{i,\hat\theta_i} (v_{\hat\theta_i}-v_{\hat\theta_i-1}).
    \end{equation}
    We now rewrite the profit provided by agent $i$ as follows:
    \begin{align*}
        \profit_i(x,\beta) &=\sum_{\theta_i \in \typeSet_i} \mu_{\theta_i} p_{\theta_i}(x_{i,\theta_i}-\beta_{i,\theta_i}) \\
        &= \mu_{\lastType_i} x_{i,\lastType_i}p_{\lastType_i} (1 -v_{\lastType_i}) + \sum_{\theta_i<\lastType_i} \mu_{\theta_i} p_{\theta_i}(x_{i,\theta_i}-\beta_{i,\theta_i+1} -(x_{i,\theta_i}-x_{i,\theta_i+1})v_{\theta_i}) \\
        &=\mu_{\lastType_i} x_{i,\lastType_i}p_{\lastType_i} (1 -v_{\lastType_i}) +\sum_{\theta_i < \lastType_i} \mu_{\theta_i} x_{i,\theta_i}p_{\theta_i}(1-v_{\theta_i}) +\sum_{\theta_i < \lastType_i}\mu_{\theta_i} p_{\theta_i} (x_{i,\theta_i+1}v_{\theta_i} -\beta_{i,\theta_i+1})  \\
        &=\sum_{\theta_i \in \Theta_i} \mu_{\theta_i} x_{i,\theta_i}p_{\theta_i}(1-v_{\theta_i}) +\sum_{\theta_i < \lastType_i}\mu_{\theta_i} p_{\theta_i} (x_{i,\theta_i+1}v_{\theta_i} -\beta_{i,\theta_i+1}) \\
        &=\sum_{\theta_i \in \Theta_i} \mu_{\theta_i} x_{i,\theta_i}p_{\theta_i}(1-v_{\theta_i}) -\sum_{\theta_i < \lastType_i}\mu_{\theta_i} p_{\theta_i} (\beta_{i,\theta_i+1}-x_{i,\theta_i+1}v_{\theta_i}),
    \end{align*}
    where we applied Equation~\eqref{eq:beta_expected_profit_max} and simple algebra.
    By employing Equation~\eqref{eq:proof_beta_unraveled}, we can rewrite the last summation above as:
    \begin{align*}
        \sum_{\theta_i < \lastType_i}\mu_{\theta_i} p_{\theta_i} (\beta_{i,\theta_i+1}-x_{i,\theta_i+1}v_{\theta_i}) &= \sum_{\theta_i < \lastType_i}\mu_{\theta_i} p_{\theta_i} \left(x_{i,\theta_i+1}v_{\theta_i+1} +\sum_{\hat\theta_i > \theta_i+1} x_{i,\hat\theta_i} (v_{\hat\theta_i}-v_{\hat\theta_i-1}) -x_{i,\theta_i+1}v_{\theta_i} \right) \\
        &= \sum_{\theta_i < \lastType_i}\mu_{\theta_i} p_{\theta_i}\sum_{\hat\theta_i > \theta_i} x_{i,\hat\theta_i} (v_{\hat\theta_i}-v_{\hat\theta_i-1}) \\
        &=\sum_{\theta_i = \firstType_i}^{\lastType_i-1}\sum_{\hat\theta_i = \theta_i+1}^{\lastType_i} \mu_{\theta_i} p_{\theta_i} x_{i,\hat\theta_i} (v_{\hat\theta_i}-v_{\hat\theta_i-1}) \\
        &=\sum_{\theta_i = \firstType_i+1}^{\lastType_i} \sum_{\hat\theta_i = \firstType_i}^{\theta_i-1} \mu_{\hat\theta_i} p_{\hat\theta_i} x_{i,\theta_i} (v_{\theta_i} -v_{\theta_i-1}) \\
        &= \sum_{\theta_i > \firstType_i} x_{i,\theta_i} (v_{\theta_i} -v_{\theta_i-1})\sum_{\hat\theta_i < \theta_i}\mu_{\hat\theta_i} p_{\hat\theta_i}.
    \end{align*}
    where the second to last equality simply changes the order of the double summation.
    As a result:
    \begin{equation*}
        \profit_i(x,\beta) = \sum_{\theta_i \in \Theta_i} \mu_{\theta_i} x_{i,\theta_i}p_{\theta_i}(1-v_{\theta_i}) -\sum_{\theta_i > \firstType_i} x_{i,\theta_i} (v_{\theta_i} -v_{\theta_i-1})\sum_{\hat\theta_i < \theta_i}\mu_{\hat\theta_i} p_{\hat\theta_i} = \sum_{\theta_i \in \typeSet_i} \mu_{\theta_i}  x_{i,\theta_i} \varphi_i(\theta_i),
    \end{equation*}
    where the second equality holds by the definition of $\varphi_i$.
    We can thus conclude that Equation~\eqref{eq:proof_equality_for_virtual_utility} is satisfied.
    Finally, we observe that:
    \begin{align*}
        \profit(x,\beta) &= \sum_{i \in [N]} \profit_i(x,\beta) \\
        &=\sum_{i \in [N]}\sum_{\theta_i \in \typeSet_i} \mu_{\theta_i}  x_{i,\theta_i} \varphi_i(\theta_i) \\
        &=\sum_{i \in [N]}\sum_{\theta_i \in \typeSet_i} \mu_{\theta_i} \sum_{\theta_{-i} \in \typeSet_{-i}} \mu_{\theta_{-i}} x_{i,\theta_i,\theta_{-i}} \varphi_i(\theta_i) \\
        &= \sum_{i \in [N]} \sum_{\theta \in \typeSet} \mu_\theta x_{i,\theta} \varphi_i(\theta_i),
    \end{align*}
    concluding the proof.
\end{proof}

Thanks to Lemma~\ref{lem:regular_optimal}, Lemma~\ref{lem:total_monotonicity}, Lemma~\ref{lem:beta_given_x}, and Lemma~\ref{lem:virtual_utilities}, the allocation rule of an optimal mechanism is the solution to the following optimization problem:
\begin{equation}\label{prog:alloc_virtual_utilites}
\begin{aligned}
    \max_{x \in \mathcal{X}} &\sum_{i \in [N]} \sum_{\theta_i \in \Theta_i} \mu_{\theta_i} x_{i,\theta_i} \varphi_i(\theta_i) \\
    \text{s.t.} \;&x_{i,\theta_i} \ge x_{i,\theta_i+1} &\forall i \in [N], \theta_i < \lastType_i,
\end{aligned}
\end{equation}
where $\mathcal{X}$ is the set of all possible allocation rules.
The value of the solution corresponds to the profit of the optimal mechanism, and its contracts are computed 
according to Equation~\eqref{eq:beta_expected_profit_max}. 

Notably, Program~\ref{prog:alloc_virtual_utilites} exhibits the same structure of the profit-maximizing problem in classical single-item auctions~\cite{myerson1981optimal}.
We can thus solve it by means of the same technique.
Suppose that the virtual utilities are monotone, namely $\varphi_i(\firstType_i) \ge \dots \ge \varphi_i(\lastType_i)$ for every agent $i \in [N]$.
Then, given type report $\theta \in \typeSet$, the optimal allocation rule assigns the task to the agent with the largest virtual utility $\varphi_i(\theta_i)$, or leaves it unassigned if no agent has positive virtual utility.
If the virtual utilities are not monotone, we build suitable ironed virtual utilities $\bar\varphi_i(\cdot)$ that are monotone themselves (see Appendix~\ref{app:ironing}).
The optimal allocation rule is then computed using the ironed virtual utilities in place of the original ones.
Ties must be broken consistently, \emph{e.g.}, in favor of the agent with the smallest index, which gives the following optimal allocation rule:
\begin{equation}
    \label{eq:optimal_allocation_ironing}
    x^\star_{i,\theta} = \begin{cases}
        1 & i= \min\{j \in [N] : \bar\varphi_j(\theta_j) > 0 \;\wedge\; \bar\varphi_j(\theta_j) \ge \bar\varphi_z(\theta_z) \; \forall z \in [N]\}, \\
        0 & \text{otherwise},
    \end{cases}
    \quad \forall i \in [N], \theta \in \typeSet.
\end{equation}
% \mat{The prose uses positive virtual utilities, while (18) permits zero. Either convention can be optimal, but state the chosen convention consistently.}
Therefore, we have the following result.
\begin{restatable}{theorem}{OptimalRegularComputation}
    \label{thm:single_task}
    There exists an optimal, regular, BIC and IR mechanism $(x^\star,\beta^\star)$ such that $x^\star$ satisfies Equation~\eqref{eq:optimal_allocation_ironing} and $\beta^\star$ satisfies Equation~\eqref{eq:beta_expected_profit_max}.
    % Given any type profile $\theta \in \typeSet$, the entries $x^\star(\theta),\beta^\star(\theta)$ can be computed in time polynomial in the instance size.
\end{restatable}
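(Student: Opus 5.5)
The plan is to chain the earlier lemmas and then solve Program~\eqref{prog:alloc_virtual_utilites} by Myerson ironing. By Lemma~\ref{lem:regular_optimal} and Lemma~\ref{lem:total_monotonicity}, some optimal mechanism is regular with a monotone allocation rule $x$ (monotone in the total order $<$). By Lemma~\ref{lem:beta_given_x}, replacing its payments by those of Equation~\eqref{eq:beta_expected_profit_max} keeps it BIC and IR and does not decrease profit. By Lemma~\ref{lem:virtual_utilities}, the resulting profit equals $\sum_i\sum_{\theta_i}\mu_{\theta_i}x_{i,\theta_i}\varphi_i(\theta_i)$. Hence the optimal profit equals the value of Program~\eqref{prog:alloc_virtual_utilites}. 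Conversely, any feasible $x$ of that program, paired with $\beta$ from Equation~\eqref{eq:beta_expected_profit_max}, is a regular BIC and IR mechanism achieving that objective. It therefore suffices to show that $x^\star$ from Equation~\eqref{eq:optimal_allocation_ironing} is feasible (monotone) and optimal for the program.

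For optimality I would use the standard ironing identity. For each agent $i$, define cumulative quantities along the order $<$: $G_i(\theta_i)=\sum_{\hat\theta_i\le\theta_i}\mu_{\hat\theta_i}\varphi_i(\hat\theta_i)$, indexed by cumulative mass $\sum_{\hat\theta_i\le\theta_i}\mu_{\hat\theta_i}$. Let $\bar G_i$ be its least concave majorant. The ironed $\bar\varphi_i$ is then the slope of $\bar G_i$ on the corresponding interval, so it is nonincreasing along $<$, as described in Appendix~\ref{app:ironing}. Summation by parts gives
\[
\sum_{\theta_i}\mu_{\theta_i}x_{i,\theta_i}\varphi_i(\theta_i)=\sum_{\theta_i}\mu_{\theta_i}x_{i,\theta_i}\bar\varphi_i(\theta_i)-\sum_{\theta_i<\lastType_i}(\bar G_i(\theta_i)-G_i(\theta_i))(x_{i,\theta_i}-x_{i,\theta_i+1}),
\]
using $\bar G_i=G_i$ at the last type. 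For monotone $x$ the subtracted term is nonnegative, so the ironed objective $\sum_\theta\mu_\theta\sum_i x_{i,\theta}\bar\varphi_i(\theta_i)$ upper bounds the program's value. Pointwise in $\theta$, this ironed objective is maximized by giving the task to an agent with the largest positive $\bar\varphi_i(\theta_i)$, which is exactly what $x^\star$ does. So $x^\star$ attains the upper bound of the relaxation with monotonicity dropped, provided the correction term vanishes for $x^\star$.

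The main obstacle is checking two properties of $x^\star$: that it is monotone, and that $x^\star_{i,\theta_i}=x^\star_{i,\theta_i+1}$ whenever $\bar G_i(\theta_i)>G_i(\theta_i)$, i.e.\ inside ironed intervals. Both follow from the facts that $x^\star_{i,\theta}$ depends on $\theta_i$ only through $\bar\varphi_i(\theta_i)$ and that the tie-breaking rule (smallest index) is a fixed, $\theta_i$-independent rule. For fixed $\theta_{-i}$, as $\theta_i$ moves later in the order, $\bar\varphi_i(\theta_i)$ weakly decreases. Agent $i$ wins if its value is positive, is at least every $\bar\varphi_z(\theta_z)$, and beats strictly larger-indexed ties appropriately. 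This win condition is monotone in the value, so $x^\star_{i,\theta_i,\theta_{-i}}$ is nonincreasing in $\theta_i$ pointwise, hence also ex-interim. On an ironed interval $\bar\varphi_i$ is constant, so the ex-post allocations coincide for all types in it and the ex-interim ones do too. The correction term is therefore zero, $x^\star$ is optimal for Program~\eqref{prog:alloc_virtual_utilites}, and $(x^\star,\beta^\star)$ with $\beta^\star$ from Equation~\eqref{eq:beta_expected_profit_max} is regular, BIC, IR and optimal.
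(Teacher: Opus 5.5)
Your proposal is correct and follows the paper's route. You chain Lemmas~\ref{lem:regular_optimal}, \ref{lem:total_monotonicity}, \ref{lem:beta_given_x} and \ref{lem:virtual_utilities} to reduce to Program~\eqref{prog:alloc_virtual_utilites}, use summation by parts to show the ironed objective upper-bounds the original one on monotone rules, and get equality at $x^\star$ because its allocations are constant on ironed intervals, as in Appendix~\ref{app:ironing}. Your prefix-sum/least-concave-majorant construction is the point reflection of the paper's suffix-sum/lower-convex-hull one and gives exactly the same $\bar\varphi_i$; your pointwise monotonicity argument for $x^\star$, which the paper only asserts, is a useful addition, though the win condition should read: strictly above every lower-indexed agent and weakly above every higher-indexed one.
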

%
% \bollo{Dal punto di vista computazionale, une meccanismo regolare non è male: uno può precalcolare le virtual utility ed i contratti (che diependono solo dal tipo dei singolo agente, perchè regolare), poi dato un type report basta calcolare il vincitore e farelookup su una tabella coi pagamenti. Però formalmente un algoritmo che dato il type report deve calcolarsi tutto deve rifare tuttigli step di preprocess, quindi diventa più lento di uno che rifà tutto usando i pagamenti DSIC. Ma se escludiamo il preprocessing, i pagamenti online di un meccanismo regolare sono più veloci da calcolare.}
%
% \mat{In generale, a parte osservare che è efficiente, qui non mi concentrerei sulla computazione ma sul fatto che abbiamo "close form" o qualcosa del genere. Non so qual è il termine corretto, ma fin qui stiamo facendo econ alla fine. Anche il titolo della sezione è probabilmente sbagliato. Parliamone}
%
Theorem~\ref{thm:single_task} provides a closed-form solution to the problem of computing an optimal mechanism in the single-task setting.
We remark that this solution can be computed efficiently.
Specifically, one can recompute the (ironed) virtual utilities and the contracts (recall that since $(x^\star,\beta^\star)$ is regular, for every agent $i \in [N]$ and type $\theta_i \in \typeSet_i$ one has to compute a single contract $\beta_{i,\theta_i}/x_{i,\theta_i}$) in polynomial time.
Then, given a type report $\theta \in \typeSet$, we can directly compute the agent that receives the task by Equation~\eqref{eq:optimal_allocation_ironing}. 
However, we observe that $(x^\star,\beta^\star)$ need not be DSIC, and in general the optimal DSIC mechanism is not regular.

We complement Theorem~\ref{thm:single_task} by characterizing the optimal DSIC mechanism.
Specifically, we show that by changing the payments, we can build a DSIC mechanism $(x^\star,\beta^\circ)$ that attains the same profit as the optimal BIC mechanism.
Since DSIC is stronger than BIC, this proves that $(x^\star,\beta^\circ)$ is optimal for both notions of incentive compatibility, and that relaxing the DSIC constraints to BIC does not improve the profit.

In the regular BIC mechanism $(x^\star,\beta^\star)$ the payments depend only on the type of the agent receiving the task.
By contrast, the payments $\alpha^\circ$ of the optimal DSIC mechanism depend on the types of all agents \emph{except} the one executing the task, and are defined as:
\begin{equation}
    \label{eq:dsic_payments}
    \alpha^\circ_{i,\theta_i, \theta_{-i}} \coloneqq \bar \alpha^\circ_{i,\theta_{-i}} \coloneqq \max_{\substack{\theta'_i \in \typeSet_i, \\ x^\star_{i,\theta'_i,\theta_{-i}}=1}} v_{\theta'_i} \quad \forall i \in [N], (\theta_i,\theta_{-i}) \in \typeSet,
\end{equation}
with the convention that the maximum over an empty set is zero (\emph{i.e.}, $\alpha^\circ_{i,\theta_{-i}}=0$ when agent $i$ does not receive the task regardless of the agent's type).
Notice that therefore the expected contracts are:
\begin{equation*}
    \beta^\circ_{i,\theta_i, \theta_{-i}} = \begin{cases}
        % \displaystyle\max_{\substack{\theta'_i \in \typeSet_i, \\ x^\star_{i,\theta'_i,\theta_{-i}}=1}} v_{\theta'_i}, & x^\star_{i,\theta_i,\theta_{-i}}=1, \\
        \bar \alpha^\circ_{i,\theta_{-i}}, & x^\star_{i,\theta_i,\theta_{-i}}=1, \\ 
        0, & x^\star_{i,\theta_i,\theta_{-i}}=0.
    \end{cases}
\end{equation*}

Formally, we show the following.
\begin{restatable}{proposition}{OptiamalDSIC}
    Let $(x^\star,\beta^\star)$ be the optimal BIC mechanism computed according to Equation~\eqref{eq:optimal_allocation_ironing} and Equation~\eqref{eq:beta_expected_profit_max}, and let $(x^\star,\beta^\circ)$ be the mechanism whose payments are given by Equation~\ref{eq:dsic_payments}.
    Then $(x^\star,\beta^\circ)$ is IR and DSIC and $\profit(x^\star,\beta^\star) = \profit(x^\star,\beta^\circ)$.
\end{restatable}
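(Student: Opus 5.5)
The plan is to show that, for every fixed $\theta_{-i}$, the allocation $x^\star_{i,\cdot,\theta_{-i}}$ is a \emph{threshold rule} in the total order $<$ on $\typeSet_i$. Then $\alpha^\circ$ is the classical critical payment, which gives DSIC and IR, and a telescoping computation gives ex-interim payments equal to $\beta^\star$.

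\textbf{Step 1 (pointwise monotonicity).} Fix $i$ and $\theta_{-i}$. I claim that if $x^\star_{i,\theta_i,\theta_{-i}}=1$ and $\theta'_i<\theta_i$, then $x^\star_{i,\theta'_i,\theta_{-i}}=1$. The ironed virtual utilities of Appendix~\ref{app:ironing} are nonincreasing along $<$, so $\bar\varphi_i(\theta'_i)\ge\bar\varphi_i(\theta_i)$.

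Since $i$ wins at $\theta_i$, three facts hold. First, $\bar\varphi_i(\theta_i)>0$. Second, $\bar\varphi_i(\theta_i)\ge\bar\varphi_z(\theta_z)$ for all $z$. Third, every $j<i$ has $\bar\varphi_j(\theta_j)<\bar\varphi_i(\theta_i)$, since otherwise $j$ would be the minimum index in Equation~\eqref{eq:optimal_allocation_ironing}. All three inequalities persist when $\bar\varphi_i(\theta_i)$ is replaced by the weakly larger $\bar\varphi_i(\theta'_i)$, so $i$ still wins.

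Hence the set of types of $i$ receiving the task given $\theta_{-i}$ is a prefix $\{\firstType_i,\dots,k(\theta_{-i})\}$, possibly empty. Because $<$ refines the order by critical value, $\bar\alpha^\circ_{i,\theta_{-i}}=v_{k(\theta_{-i})}$, and every type outside the prefix has critical value at least $v_{k(\theta_{-i})}$. I expect this step, in particular checking that the fixed tie-breaking preserves monotonicity, to be the main point needing care.

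\textbf{Step 2 (IR and DSIC).} Fix $\theta_{-i}$ and write $\bar\alpha=\bar\alpha^\circ_{i,\theta_{-i}}$.

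For IR: if $i$ receives the task at $\theta_i$, then $v_{\theta_i}\le\bar\alpha$, so $\bar\alpha p_{\theta_i}-c_{\theta_i}\ge0$.

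For DSIC, consider a report $\hat\theta_i$. If $\hat\theta_i$ wins, the agent obtains $[\bar\alpha p_{\theta_i}-c_{\theta_i}]_+$, because the contract does not depend on the agent's own report. Otherwise the agent obtains $0$.
\begin{itemize}
\item If the true type $\theta_i$ is in the prefix, truthful reporting gives exactly $\bar\alpha p_{\theta_i}-c_{\theta_i}\ge 0$, which equals the best misreport payoff.
\item If $\theta_i$ is not in the prefix, truthful reporting gives $0$. Any winning misreport gives $p_{\theta_i}[\bar\alpha-v_{\theta_i}]_+=0$, since $v_{\theta_i}\ge\bar\alpha$.
\end{itemize}
Thus truthfulness is optimal for every $\theta_{-i}$.

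\textbf{Step 3 (equal profit).} By Equation~\eqref{eq:profit_in_beta_and_exinterim}, profit depends only on $x^\star$ and the ex-interim payments. It therefore suffices to show $\beta^\circ_{i,\theta_i}=\beta^\star_{i,\theta_i}$ for all $i$ and $\theta_i$.

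By Equation~\eqref{eq:proof_beta_unraveled}, we have $\beta^\star_{i,\theta_i}=\sum_{\theta_{-i}}\mu_{\theta_{-i}}\bigl(x^\star_{i,\theta_i,\theta_{-i}}v_{\theta_i}+\sum_{\hat\theta_i>\theta_i}x^\star_{i,\hat\theta_i,\theta_{-i}}(v_{\hat\theta_i}-v_{\hat\theta_i-1})\bigr)$. I compare this with $\beta^\circ$ pointwise in $\theta_{-i}$.
\begin{itemize}
\item If $\theta_i\le k(\theta_{-i})$, the inner term is $v_{\theta_i}+\sum_{\theta_i<\hat\theta_i\le k(\theta_{-i})}(v_{\hat\theta_i}-v_{\hat\theta_i-1})$. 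This telescopes to $v_{k(\theta_{-i})}=\beta^\circ_{i,\theta_i,\theta_{-i}}$.
\item If $\theta_i>k(\theta_{-i})$, or the prefix is empty, both sides vanish, since all $x^\star_{i,\hat\theta_i,\theta_{-i}}$ with $\hat\theta_i\ge\theta_i$ are zero by Step 1.
\end{itemize}
Taking expectations over $\theta_{-i}$ gives $\beta^\circ_{i,\theta_i}=\beta^\star_{i,\theta_i}$, and hence $\profit(x^\star,\beta^\circ)=\profit(x^\star,\beta^\star)$.
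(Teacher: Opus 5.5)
Your proof is correct and follows essentially the paper's route: IR and DSIC follow from the report-independent critical payment $\bar\alpha^\circ_{i,\theta_{-i}}$, and profit equality comes from matching ex-interim payments type by type, where your pointwise telescoping of the unraveled formula is the same computation as the paper's backward induction on Equation~\eqref{eq:beta_expected_profit_max}. The one real addition is your Step~1: the paper uses the ex-post threshold structure of $x^\star$ only implicitly (e.g., when asserting $v_{\theta_i}\ge\bar\alpha^\circ_{i,\theta_{-i}}$ for a losing type, and that $x^\star_{i,\theta_i,\theta_{-i}}=0$ forces $x^\star_{i,\theta_i+1,\theta_{-i}}=0$), and your check that smallest-index tie-breaking preserves it fills that in cleanly.
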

\begin{proof}
    The mechanism $(x^\star,\beta^\circ)$ is IR, as by construction whenever an agent receives the task, the corresponding payment is at least than the critical value.
    Now we show that $(x^\star,\beta^\circ)$ is DSIC.
    Fix an agent $i \in [N]$, its true type $\theta_i \in \typeSet_i$, another type $\hat\theta_i \neq \theta_i$, and the type profile of the other agents $\theta_{-i} \in \typeSet_{-i}$.
    Since the contract $\bar \alpha^\circ_{i,\theta_{-i}}$ received by the agent does not depend on her reported type, the agent $i$ may be incentivized to misreport $\hat\theta_i$ only when doing so allows her to receive the task, while reporting $\theta_i$ does not. 
    However, in this case $v_{\theta_i} \ge \bar \alpha^\circ_{i,\theta_{-i}}$ by Equation~\eqref{eq:dsic_payments}.
    Therefore, by misreporting and executing the task, the agent's utility is negative or zero, while opting out it is zero.
    We conclude that the agent is not incentivized to misreport.

    Finally, we show that $\profit(x^\star,\beta^\star) = \profit(x^\star,\beta^\circ)$, by proving that for every agent $i \in [N]$ and type $\theta_i \in \typeSet_i$, we have $\beta^\star_{i,\theta_i} = \beta^\circ_{i,\theta_i}$.
    Recall that $\beta^\star_{i,\theta_i}$ is defined by Equation~\eqref{eq:beta_expected_profit_max}, and let $\tilde\theta_i$ be the last type such that $x^\star_{i,\theta_i}>0$.
    Notice that $\beta^\star_{i,\theta_i} = \beta^\circ_{i,\theta_i} = 0$ for every $\theta_i > \tilde\theta_i$, while we need to prove that $\beta^\star_{i,\theta_i} = \beta^\circ_{i,\theta_i}$ for $\theta_i \le \tilde\theta_i$.
    We prove this by induction, starting from $\tilde\theta_i$.
    Since agent $i$ does not get the task for every $\theta_i > \tilde\theta_i$, it must be the case that whenever $x^\star_{i,\tilde\theta_i,\theta_{-i}}=1$, the linear contract $\alpha^\circ_{i,\tilde\theta_i,\theta_{-i}}$ is $v_{\tilde\theta_i}$, and $\beta^\circ_{i,\tilde\theta_i,\theta_{-i}}=x^\star_{i,\tilde\theta_i,\theta_{-i}}v_{\tilde\theta_i}$.
    Moreover, when $x^\star_{i,\tilde\theta_i,\theta_{-i}}=0$, we have $0=\beta^\circ_{i,\tilde\theta_i,\theta_{-i}}=x^\star_{i,\tilde\theta_i,\theta_{-i}}v_{\tilde\theta_i}$.
    Overall, we conclude that:
    \begin{equation*}
        \beta^\circ_{i,\tilde\theta_i} = \sum_{\theta_{-i} \in \typeSet_{-i}} \mu_{\theta_{-i}} x^\star_{i,\tilde\theta_i,\theta_{-i}}v_{\tilde\theta_i} = x_{i,\tilde\theta_i} v_{\tilde\theta_i} = \beta^\star_{i,\tilde\theta_i},
    \end{equation*}
    where the last equality holds by Equation~\eqref{eq:beta_expected_profit_max} since $x_{i,\theta_i} = 0$ for every $\theta_i > \tilde\theta_i$, proving the base step.
    We now reason by induction for $\theta_i < \tilde\theta_i$, and prove the following chain of equalities:
    \begin{align*}
        \beta^\star_{i,\theta_i} = \beta^\star_{i,\theta_i+1} +(x^\star_{i,\theta_i} -x^\star_{i,\theta_i+1}) v_{\theta_i} = \beta^\circ_{i,\theta_i+1} +(x^\star_{i,\theta_i} -x^\star_{i,\theta_i+1}) v_{\theta_i} = \beta^\circ_{i,\theta_i}.
    \end{align*}
    The first equality holds by Equation~\eqref{eq:beta_expected_profit_max}, the second by the inductive hypothesis, and for the last one we argue that:
    \begin{equation}        
        \label{eq:proof_beta_dsic_inductive}
        \beta^\circ_{i,\theta_i+1,\theta_{-i}} +(x^\star_{i,\theta_i,\theta_{-i}} -x^\star_{i,\theta_i+1,\theta_{-i}}) v_{\theta_i} = \beta^\circ_{i,\theta_i,\theta_{-i}}.
    \end{equation}
    for every $\theta_{-i} \in \typeSet_{-i}$.
    Indeed, when $x^\star_{i,\theta_i,\theta_{-i}}=0$, we also have $x^\star_{i,\theta_i+1,\theta_{-i}}=0$ and Equation~\eqref{eq:proof_beta_dsic_inductive} is satisfied.
    When $x^\star_{i,\theta_i,\theta_{-i}}=1$ and $x^\star_{i,\theta_i+1,\theta_{-i}}=0$, we have $\alpha^\circ_{i,\theta_i,\theta_{-i}} = v_{\theta_i}$ and $\beta^\circ_{i,\theta_i+1,\theta_{-i}}=x^\star_{i,\theta_i+1,\theta_{-i}}=0$, thus Equation~\eqref{eq:proof_beta_dsic_inductive} is again satisfied.
    Finally, when $x^\star_{i,\theta_i,\theta_{-i}}=x^\star_{i,\theta_i+1,\theta_{-i}}=1$, Equation~\eqref{eq:proof_beta_dsic_inductive} imposes $\beta^\circ_{i,\theta_i+1,\theta_{-i}} = \beta^\circ_{i,\theta_i,\theta_{-i}}$, which clearly holds because, under the mechanism $(x^\star,\beta^\circ)$, agent $i$ receives the task with the same contract given both type reports.
    Therefore, Equation~\eqref{eq:proof_beta_dsic_inductive} always holds.
    We conclude that $\beta^\star_{i,\theta_i} = \beta^\circ_{i,\theta_i}$, and by induction this holds for every $\theta_i$.
    Therefore, by Equation~\eqref{eq:profit_in_beta_and_exinterim} the mechanisms $(x^\star,\beta^\star)$ and $(x^\star,\beta^\circ)$ collect the same profit.
\end{proof}

    \section{Price of Incentive Compatibility}
In this section we study the price of incentive compatibility in the single-task problem, \emph{i.e.}, the profit loss due to the need to elicit the true types from the strategic agents, formally:
% \bollo{TODO: check if this is really standard in the literature, I just asked ai. And decide if we prefer the incentive-compatibility ratio which is the inverse}
% Formally, this price is defined as:
\begin{equation*}
    \PIC \coloneqq \frac{\OPTC}{\OPT},
\end{equation*}
where $\OPT$ is the profit of the optimal BIC and IR mechanism, and $\OPTC$ that of an optimal ‘‘clairvoyant'' mechanism that sees the true types of the agents, rather than receiving the (strategically disclosed) report from them.
Such a clairvoyant mechanism simply picks the best agent $i$ and gives them the task with payment $v_{\theta_i}$, disregarding the incentive compatibility constraints.
% In other words, the clairvoyant mechanism is one that can see the true types of the agent, rather than receiving the (strategically disclosed) report from them.
% where $\OPT$ is the profit of the optimal BIC and IR mechanism, and $\OPTC$ that of an optimal ‘‘clairvoyant'' mechanism that does not need to satisfy the IC constraints.
% In other words, the clairvoyant mechanism is one that can see the true types of the agent, rather than receiving the (strategically disclosed) report from them.
% Thus, whenever the optimal clairvoyant (henceforth called simply clairvoyant) gives the task to agent $i$ of type $\theta_i$, it employs the contract $\alpha_{i,\theta_i} = v_{\theta_i}$, that is the minimum linear contract required to satisfy IR. \mat{In generale, potrebbe essere che in certi punti sovraspieghi e formalizzi le cose. Qui ad esempio dire che prende sempre l agente con il tipo migliore chiarisce più che la parte formale forse. Anche relazione di prende il migliore con IC non vale è formale ma non intuitiva}
% Given type report $\theta \in \typeSet$, the mechanism gives the task to the agent with the largest $p_{\theta_i}(1-v_{\theta_i})$, or to no one if every agent has a critical value larger than one.
Its profit can thus be computed as follows:
\begin{equation*}
    \OPTC \coloneqq \sum_{\theta \in \typeSet} \mu_\theta \max_{i \in [N]} [p_{\theta_i}(1-v_{\theta_i})]_+ = \sum_{\theta \in \typeSet} \mu_\theta \max_{i \in [N]} [p_{\theta_i} -c_{\theta_i}]_+.
\end{equation*}

Our main result in this section is that $\PIC = \Theta\left(NK\right)$ in the worst case, namely, our mechanism loses a factor $1/NK$ of utility with respect to the clairvoyant.
We prove this in two steps.
We first observe that it is always possible to attain at least a fraction $\frac{1}{NK}$ of the clairvoyant profit, \emph{i.e.}, $\text{PIC} \le NK$.
Intuitively, this can be done by selecting an agent $ i \in [N]$ and a type $\bar \theta_i \in \typeSet_i$, and always assigning the task to agent $i$ with a fixed contract $\bar\alpha = v_{\bar\theta_i}$. 
The agent will actually exert effort only when its type $\theta_i \in \typeSet$ satisfies $v_{\theta_i} \le v_{\bar\theta_i}$.
This mechanism attains profit of at least $\mu_{\bar\theta_i}p_{\bar\theta_i}(1-v_{\bar\theta_i})$.
As the clairvoyant can extract at most $\mu_{\theta_i}p_{\theta_i}(1-v_{\theta_i})$ from each agent $i \in [N]$ and type $\theta_i \in \typeSet_i$, a suitable choice of $i$ and $\bar\theta_i$ proves the upper bound on the price of incentive compatibility.
To provide the formal argument, we slightly change the mechanism to satisfy the IR constraints, and we do not assign the task when agent $i$ reports a type $\theta_i$ with $v_{\theta_i} > \bar \alpha$.
\begin{restatable}{proposition}{UtilityBoundOneLemma}
    \label{lem:utility_price_upper}
    In every instance where $\OPTC > 0$, it holds that:
    \begin{equation*}
        % \OPT \ge \frac{\OPTC}{NK}.
        \PIC = \frac{\OPTC}{\OPT} \le NK.
    \end{equation*}
\end{restatable}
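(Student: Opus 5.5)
The plan is to exhibit, for a suitable agent $i \in [N]$ and type $\bar\theta_i \in \typeSet_i$, an explicit BIC and IR mechanism whose profit is at least $\mu_{\bar\theta_i}[p_{\bar\theta_i}-c_{\bar\theta_i}]_+$. We then compare this quantity with an upper bound on $\OPTC$ obtained by a union bound over agents and types. Since $\OPT$ is at least the profit of any BIC and IR mechanism, this yields $\OPTC \le NK\cdot\OPT$. Moreover $\OPT>0$ whenever $\OPTC>0$, so the ratio is well defined.

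\textbf{Step 1 (upper bound on $\OPTC$).} For every $\theta$, bound the maximum by a sum:
\[
\max_{j}[p_{\theta_j}-c_{\theta_j}]_+ \le \sum_{j\in[N]}[p_{\theta_j}-c_{\theta_j}]_+ .
\]
Taking expectations with independent types gives
\[
\OPTC \le \sum_{j\in[N]}\sum_{\theta_j\in\typeSet_j}\mu_{\theta_j}[p_{\theta_j}-c_{\theta_j}]_+ .
\]
The right-hand side has $NK$ terms. Choose $(i,\bar\theta_i)$ maximizing $\mu_{\bar\theta_i}[p_{\bar\theta_i}-c_{\bar\theta_i}]_+$; this term is at least $\OPTC/(NK)$. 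It is positive because $\OPTC>0$, so in particular $v_{\bar\theta_i}<1$.

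\textbf{Step 2 (the posted-contract mechanism).} Set $\bar\alpha \coloneqq v_{\bar\theta_i}$. Define $x_{i,\theta}=1$ and $\alpha_{i,\theta}=\bar\alpha$ whenever $v_{\theta_i}\le\bar\alpha$. Otherwise, and for all other agents, set the allocation and payment to zero. The allocation and payment depend only on agent $i$'s report, and only through the threshold, so the mechanism is regular.

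We verify the conditions of Lemma~\ref{lem:regular_ic_and_ir} with interim values $x_{i,\theta_i}\in\{0,1\}$ and $\beta_{i,\theta_i}=\bar\alpha x_{i,\theta_i}$.
\begin{itemize}
\item IR holds because $\bar\alpha\ge v_{\theta_i}$ whenever the task is assigned.
\item For IC, the utility from any report $\theta'_i$ equals $x_{i,\theta'_i}(\bar\alpha - v_{\theta_i})$. A type with $v_{\theta_i}\le\bar\alpha$ already gets the maximum over $\{0,1\}$ by reporting truthfully. A type with $v_{\theta_i}>\bar\alpha$ gets $0$ truthfully and a nonpositive amount from any misreport.
\item Agents $j\ne i$ are never allocated, so their constraints are trivial.
\end{itemize}

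\textbf{Step 3 (profit).} The profit equals $\sum_{\theta_i: v_{\theta_i}\le\bar\alpha}\mu_{\theta_i}p_{\theta_i}(1-\bar\alpha)$. Every term is nonnegative because $\bar\alpha<1$. Keeping only the term $\theta_i=\bar\theta_i$ gives at least $\mu_{\bar\theta_i}(p_{\bar\theta_i}-c_{\bar\theta_i})$. Combining with Step 1,
\[
\OPT\ge\OPTC/(NK).
\]

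There is no real obstacle here. The only points needing care are to argue directly via Lemma~\ref{lem:regular_ic_and_ir}, and to note that the choice of $\bar\theta_i$ with positive surplus guarantees $\bar\alpha<1$. That in turn ensures the other accepted types contribute nonnegatively.
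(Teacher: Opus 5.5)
Your proposal is correct and follows essentially the same route as the paper. You pick the agent--type pair maximizing $\mu_{\theta_j}[p_{\theta_j}-c_{\theta_j}]_+$, post the fixed contract $v_{\bar\theta_i}$ to that agent whenever its reported critical value is at most $v_{\bar\theta_i}$, and compare the resulting profit with the union bound $\OPTC \le \sum_j\sum_{\theta_j}\mu_{\theta_j}[p_{\theta_j}-c_{\theta_j}]_+$. Your explicit remark that $\bar\alpha<1$, so every other accepted type contributes nonnegatively, fills in a step the paper leaves implicit.
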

\begin{proof}
    Let:
    \begin{equation*}
        (i, \bar\theta_i) \in \argmax_{j \in [N], \theta_j \in \typeSet_j} \mu_{\theta_j}p_{\theta_j}(1-v_{\theta_j}).
    \end{equation*}
    Let $(x,\beta)$ be the regular mechanism defined as:
    \begin{align*}
        x_{j,\theta} \coloneqq \begin{cases}
            1 & j=i, v_{\theta_i} \le v_{\bar\theta_i} \\
            0 & \text{otherwise},
        \end{cases} 
        \quad \beta_{j,\theta_j} = \begin{cases}
            v_{\bar\theta_i} & j=i, v_{\theta_j} \le v_{\bar\theta_i} \\
            0 &\text{otherwise}.
        \end{cases}
    \end{align*}
    Notice that agent $i$ receives the task with the stated fixed contract.
    This mechanism is both IR and BIC.
    These constraints are trivially satisfied for agents $j \neq i$.
    It is individually rational, as agent $i$ receives the task with contract $\bar\alpha = v_{\bar\theta_i}$ when reporting a type $\theta_i$ with $v_{\theta_i} \le v_{\bar\theta_i}$.
    It is also BIC, as the agent is not incentivized to misreport to attain the task.
    Specifically, if the agent does not receive the task by reporting its true type $\theta_i \in \typeSet_i$, then $\bar\alpha < v_{\theta_i}$.
    Therefore, the agent would receive negative utility by misreporting and performing the task, or zero by misreporting and opting out.
    
    Finally, observe that the profit of $(x,\beta)$ is at least $ \mu_{\bar\theta_i}p_{\bar\theta_i}(1-v_{\bar\theta_i}) > 0$ (otherwise $\OPT=\OPT_C=0$).
    % Let $\Theta'_j \subseteq \Theta_j$ be the set of types of agent $j \in [N]$ that have critical value at most one. 
    Consequently:
    \begin{align*}
        \OPTC &= \sum_{\theta \in \typeSet} \mu_\theta \max_{i \in [N]} [p_{\theta_i}(1-v_{\theta_i})]_+ \\
        &\le \sum_{i \in [N]} \sum_{\theta \in \typeSet}  \mu_\theta [p_{\theta_i}(1-v_{\theta_i})]_+ \\
        &= \sum_{i \in [N]} \sum_{\theta \in \typeSet_i}  \mu_{\theta_i} [p_{\theta_i}(1-v_{\theta_i})]_+ \\
        &\le \sum_{i \in [N]} \sum_{\theta \in \typeSet_i}  \mu_{\bar\theta_i}p_{\bar\theta_i}(1-v_{\bar\theta_i}) \\
        &= KN \mu_{\bar\theta_i}p_{\bar\theta_i}(1-v_{\bar\theta_i}) \le KN \profit(x,\beta) \le KN\OPT,
    \end{align*}
    where the second-to-last inequality holds by definition.
\end{proof}

We complement this result by showing that there is always an instance where $\OPT = \Theta\left(\frac{\OPTC}{NK}\right)$, and, therefore, $\PIC = \Theta(NK)$ in the worst case.

\begin{restatable}{proposition}{PICFinalTh}
    There exists an instance such that:
    \begin{equation*}
        \PIC = \frac{\OPTC}{\OPT} \ge  \frac{NK}{8}.
    \end{equation*}
\end{restatable}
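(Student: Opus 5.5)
The plan is to build an ``equal-revenue'' instance in which the clairvoyant collects a constant amount from each of $NK$ distinct (agent, type) pairs, while every BIC and IR mechanism collects only a constant overall. The upper bound on $\OPT$ will come from the virtual-utility representation of Lemma~\ref{lem:virtual_utilities}. Throughout I set $p_{\theta_i}=1$ for every good type, so that $c_{\theta_i}=v_{\theta_i}$ and only critical values and probabilities matter. Each agent also gets one ``null'' type with $v=1$, which contributes nothing to either benchmark.

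\textbf{Step 1 (one agent, $K$ levels).} For a single agent, take critical values $1-v_k=2^{-k}$ for $k=1,\dots,K$ and masses $\mu_k=\varepsilon 2^{k-1}$. The null type receives the remaining mass.
\begin{itemize}
\item By Lemma~\ref{lem:virtual_utilities}, the profit from this agent is $\sum_k \mu_k x_k\varphi(k)$ with $x$ monotone and in $[0,1]$. This is maximized at an extreme point of the monotone cone, i.e.\ a threshold $x=\mathbbm{1}[k\le t]$.
\item A threshold at $t$ is a posted price $v_t$ and yields $\bigl(\sum_{k\le t}\mu_k\bigr)(1-v_t)\le \varepsilon 2^{t}\cdot 2^{-t}=\varepsilon$.
\item The clairvoyant, by contrast, gets $\sum_k\mu_k(1-v_k)=K\varepsilon/2$.
\end{itemize}

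\textbf{Step 2 (gluing $N$ agents without losing the factor $N$).} If the $N$ agents were independent copies with rare good types, both $\OPT$ and $\OPTC$ would scale by $N$ and only $\Theta(K)$ would survive. The $N$ must therefore come from competition.

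I would place agent $i$'s $K$ good types on levels $(i-1)K+1,\dots,iK$ of a single global geometric ladder of $NK$ levels. The masses are chosen so that the probability of an agent holding a type at level $\ell$ is, up to a constant, $\varepsilon 2^{\ell}$. The total mass of good types is kept at most $1/2$, so with constant probability at most one agent is non-null.

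\emph{Lower bound on $\OPTC$.} On the event that exactly one agent is non-null, the clairvoyant collects that level's $\mu_\ell(1-v_\ell)$. This gives $\OPTC\ge \tfrac12\sum_\ell \mu_\ell(1-v_\ell)=\Omega(NK\varepsilon)$.

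\emph{Upper bound on $\OPT$.} Write the profit as $\sum_\theta\mu_\theta\max_i[\bar\varphi_i(\theta_i)]_+$. The goal is to show this is $O(\varepsilon)$, so that it behaves like one posted price on the global ladder rather than $N$ separate ones. The key point is that an agent at a low level has negative (ironed) virtual utility, because the $-(v_\ell-v_{\ell-1})\sum_{\hat\theta<\theta}\mu_{\hat\theta}p_{\hat\theta}/\mu_\theta$ term is of order $2^{-\ell}$, which matches the first term.

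\textbf{Main obstacle.} The hard step is this upper bound on $\OPT$. Virtual utilities are computed per agent, whereas the intuition above is about a global ladder. Agent $i$'s own lower types only cover its own block, so the penalty term does not automatically see the mass of lower blocks held by other agents.

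The first fix I would try is to give every agent a full ladder of all $NK$ levels, but let only the levels in its own block carry non-negligible mass. I would then verify, via an ironing bound, that $\sum_{\theta_i}\mu_{\theta_i}x_{i,\theta_i}\varphi_i(\theta_i)\le O(\varepsilon)\cdot(\text{probability agent } i \text{ wins})$. Summing over $i$ and using $\sum_i$ of these winning probabilities $\le 1$ would give $\OPT=O(\varepsilon)$.

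If the constants work out, namely a factor $1/2$ from the disjointness event and a factor $1/4$ from the dyadic sums, this yields $\PIC\ge NK/8$. If the block-wise approach fails, I would fall back on an LP-duality certificate: exhibit explicit BIC dual multipliers for Program~\eqref{prog:alloc_virtual_utilites} that bound $\OPT$ by $O(\varepsilon)$.
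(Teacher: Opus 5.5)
\textbf{There is a genuine gap, at exactly the step you flag.} The upper bound on $\OPT$ is never established, and the fix you propose cannot work.

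By Lemma~\ref{lem:virtual_utilities}, $\varphi_i$ depends only on agent $i$'s own marginal. If the levels below agent $i$'s block carry negligible mass, their penalty contribution is negligible as well. The lowest block type $\ell_0=(i-1)K+1$ then keeps $\varphi_i(\ell_0)\approx 2^{-\ell_0}>0$, so your ``key point'' that low types have negative virtual utility fails. (Even in your Step~1 ladder all virtual utilities are strictly positive.)

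Concretely, consider posting price $v_{\ell_0(i)}$ to each agent $i$ in a fixed priority order. This mechanism is BIC and IR and earns $\Omega(\varepsilon)$ per agent. Since good types of different agents rarely coincide, $\OPT=\Omega(N\varepsilon)$, while $\OPTC=O(NK\varepsilon)$, so $\PIC=O(K)$.

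Making the lower levels non-negligible does not help either. Every agent then carries a full ladder of $NK$ levels, so the instance has $K'\approx NK$ types per agent, and the statement would demand $NK'/8$ rather than $\Theta(K')$. Note that even the negligible-mass version already has $NK+1$ types per agent. The LP-duality fallback is not an argument.

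\textbf{The missing idea.} Your Step~2 diagnosis is wrong: i.i.d.\ copies with rare good types do suffice. What is needed is that the common filler type sits at the \emph{bottom} of the order, not at $v=1$, where it never enters a penalty term.

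The paper's construction is as follows.
\begin{itemize}
\item Every agent has a type $1$ with $v_1=0$, mass $q=1-\frac{1}{2N}$ and success probability $p_1=a/q$, so $\mu_1p_1=a$.
\item Every agent also has good types $k=2,\dots,K$ with $p_k=1$, $v_k=1-2^{-(k-1)}$ and $\mu_k=2^{k-2}a$, where $a=\frac{1}{2N(2^{K-1}-1)}$.
\end{itemize}
The bottom type supplies exactly the missing $\mu p$-mass that your Step~1 ladder lacks: $\sum_{h<k}\mu_hp_h=2^{k-2}a=\mu_k$. Hence $\varphi(k)=0$ for every $k\ge 2$.

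Chaining the adjacent IC constraints, agent $i$ contributes at most $a\,x_{i,1}=\frac{a}{q}\Pr[\theta_i=1,\ i\text{ gets the task}]$. Summing over $i$ gives $\OPT\le a/q$. This is where competition enters: the filler type's positive surplus is present for all agents simultaneously, and there is only one task, so it cannot be collected $N$ times.

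On the event that agent $i$ has type $k\ge2$ and all others have type $1$, the clairvoyant earns $2^{-(k-1)}$. Hence $\OPTC\ge N(K-1)q^{N-1}a/2$. Using $q^N\ge\frac12$ and $K-1\ge\frac{K}{2}$ gives $\PIC\ge NK/8$ with exactly $K$ types per agent.
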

\begin{proof}
    We now build an instance where $\PIC \ge NK/8$.
    Specifically, let the $N$ agents share the same set of types $\Theta_1 = \Theta_2 = \dots =\Theta_N = [K]$.
    Let us also define the following quantities:
    \begin{equation*}
        q \coloneqq 1-\frac{1}{2N}, \quad a \coloneqq \frac{1}{2N(2^{K-1}-1)}.
    \end{equation*}
    Then, we build the instance according to the following table:
    \begin{center}
    \begin{tabular}{l|cccc}
    Type            & $\mu_k$       & $p_k$             & $c_k$         & $v_k$\\ \hline
      $1$           & $q$           & $\nicefrac{a}{q}$ & $0$           & $0$\\
    $k=2,\dots,K$   & $2^{k-2}a$    & $1$               & $1-2^{-(k-1)}$   & $1-2^{-(k-1)}$
    \end{tabular}
    \end{center}
    Notice that the prior probabilities sum to one, and both success probabilities and costs belong to the interval $[0,1]$.
    Thus, the instance is valid.
    Furthermore, the types are correctly ordered in increasing order of critical value.

    \paragraph{Optimal Mechanism } Let $(x,\beta)$ be the optimal mechanism, which w.l.o.g. we assume to be regular (Lemma~\ref{lem:regular_optimal}).
    From the BIC constraint (Equation~\eqref{eq:regular_ic}) for the deviation from type $1 \le k \le K-1$ to $k+1$, for every agent $i \in [N]$ the following holds:
    \begin{align*}
        \beta_{i,k} -v_{i,k}x_{i,k} \ge \beta_{i,k+1} -v_{i,k}x_{i,k+1} = \beta_{i,k+1} -v_{i,k+1}x_{i,k+1} +(v_{i,k+1}-v_{i,k})x_{i,k+1}. 
    \end{align*}
    Recursively applying the inequality above, we have:
    \begin{equation}
        \label{eq:proof_pic_recursive}
        \beta_{i,k} -v_{i,k}x_{i,k} \ge \sum_{h=k+1}^K (v_h -v_{h-1})x_{i,h}.
    \end{equation}
    Let us also observe that $\mu_1 p_1 = a$ and $\mu_k p_k = 2^{k-2} a$ for $k \ge 2$.
    Exploiting Equation~\eqref{eq:proof_pic_recursive}, we upper bound the contribution of agent $i \in [N]$ to the principal's profit as follows:
    \begin{align*}
        \sum_{k=1}^K \mu_k p_k (x_{i,k} -\beta_{i,k}) &= \sum_{k=1}^K \mu_k p_k ((1-v_k)x_{i,k} -(\beta_{i,k}-v_k x_{i,k})) \\
        % convention \sum_{h=K+1}^K ... = 0
        &\le \sum_{k=1}^K \mu_k p_k \left( (1-v_k)x_{i,k} -\sum_{h=k+1}^K (v_h -v_{h-1})x_{i,h} \right) \\ 
        &= \sum_{k=1}^K \mu_k p_k (1-v_k)x_{i,k} -\sum_{k=1}^K \mu_k p_k  \sum_{h=k+1}^K (v_h -v_{h-1})x_{i,h} \\
        &= \sum_{k=1}^K \mu_k p_k (1-v_k)x_{i,k} -\sum_{k=2}^K x_{i,k} (v_{i,k}-v_{i,k-1})\sum_{h=1}^{k-1} \mu_h p_h \\
        &= a x_{i,1} +\sum_{k=2}^K x_{i,k} \left(\mu_k p_k (1-v_k) -(v_{i,k}-v_{i,k-1})\sum_{h=1}^{k-1} \mu_h p_h \right) \\
        &= a x_{i,1} +\sum_{k=2}^K x_{i,k} (2^{k-2}a (1-v_k) - (v_{i,k}-v_{i,k-1})2^{k-2}a) \\
        &= a x_{i,1} +\sum_{k=2}^K x_{i,k} (2^{k-2}a 2^{-(k-1)} - (2^{-(k-2)}-2^{-(k-1)})2^{k-2}a) \\
        &= a x_{i,1} +\sum_{k=2}^K x_{i,k} \left(\frac{a}{2} - a +\frac{a}{2} \right) \\
        &= a x_{i,1} = \frac{a}{q} \mu_1 x_{i,1},
    \end{align*}
    where the inequality holds by Equation~\eqref{eq:proof_pic_recursive}, and the equalities by exploiting the facts that $\mu_1 p_1 = a$, $\mu_k p_k = 2^{k-2} a$ for $k\ge2$, and $\sum_{h=1}^{k-1} \mu_h p_h = 2^{k-2}a$.
    As a result:
    \begin{equation}
        \label{eq:proof_pic_opt}
        \OPT = \profit(x,\beta) \le \frac{a}{q} \sum_{i \in [N]} \mu_1 x_{i,1} 
        \le \frac{a}{q}\sum_{i \in [N]} \sum_{\theta \in \typeSet: \theta_i=1} \mu_\theta  x_{i,\theta}
        \le \frac{a}{q}\sum_{i \in [N]} \sum_{\theta \in \typeSet} \mu_\theta  x_{i,\theta} \le \frac{a}{q}.
    \end{equation}

    \paragraph{Clairvoyant profit and PIC} For every $i \in [N]$, the probability that agent $i$ has type $2 \le k \le K$ and all the other agents have type $1$ is $q^{N-1}2^{k-2}a$.
    Therefore:
    \begin{equation}
        \label{eq:proof_pic_optc}
        % thefirst passage should be to consider the su only on the type profiles described above, but who wants to introduce the notation for that?
        \OPTC \ge \sum_{i=1}^N \sum_{k=2}^K q^{N-1}2^{k-2}a (p_k - c_k) 
        =\sum_{i=1}^N \sum_{k=2}^K q^{N-1}2^{k-2}a 2^{-(k-1)} = N(K-1) q^{N-1} \frac{a}{2}
    \end{equation}
    Combining Equation~\eqref{eq:proof_pic_opt} and Equation~\eqref{eq:proof_pic_optc}, we have:
    \begin{equation*}
        \PIC = \frac{\OPTC}{\OPT} \ge \frac{N(K-1) q^N}{2} \ge \frac{N(K-1)}{4} = \frac{NK}{8},
    \end{equation*}
    where we exploited the facts that $q^N = (1-1/(2N))^N \ge 1/2$ thanks to Bernoulli's inequality, and $K-1 \ge K/2$ since $K \ge 2$.
\end{proof}

    \section{Multiple Tasks}\label{sec:multi_task}
% In this section we analyze the problem of computing an optimal mechanism for the multi-task setting.
% Unlike the single-task scenario, in this case we cannot recover any monotonicity property.
% Indeed, monotonicity property cannot be recover even in simple auctions with multiple items, or reverse auctions with multiple workers (see. e.g., ...).
% Therefore, we cannot use a Myerson mechanism to maximize the profit.
% Furthermore, given a type profile $\theta \in \typeSet$, an optimal mechanism could employ a distribution $y(\theta) \in \Delta_{\tSet}$ with support over the entire set of allocation $\tSet$.
% As there are $\mathcal{O}(M^N)$ possible allocations, we are unable to represent this distribution in space polynomial in $N$.
% As a result, we will focus on the problem of computing a profit-maximizing mechanism when the number of agent $N$ is constant.
% This allows us to represent any mechanism $(y,\sigma)$ explicitly, \emph{i.e.}, by computing the values $y(t|\theta) \in [0,1]$ and $\sigma(t|\theta) \in \mathbb{R}_+^N$ for every possible type profile $\theta \in \typeSet$ and allocation $t \in \typeSet$.
% Such an explicit representation is polynomial in $M$ and $K$ when the number of agents $N$ is constant.

In this section, we investigate the computation of an optimal mechanism in the multi-task setting. 
Unlike the single-task case, agents' multidimensional private information and the interaction between task allocations and outcome-contingent payments prevent us from directly extending our monotonicity-based characterization. 
Although multidimensional implementability admits more general characterizations (see, \emph{e.g.},~\cite{rochet1987necessary,cai2012optimal}), these results do not directly address our multi-task environment, where payments are provided only upon success.
We therefore adopt a direct optimization approach. 
Since an optimal mechanism may randomize over the entire set of allocations $\tSet$, whose size is exponential in the number $N$ of agents, we focus on settings in which this number is constant, allowing for an explicit representation of the mechanism in polynomial space.

% \bollo{Nota: nel paragrafo sopra non abbiamo ancora dimostrato che l'ottimo non esiste. Quindi mi sembra più intuitivo parlare del problema di calcoalre l'ottimo, e solo dopo discutere del fatto che al posto dell'ottimo dobbiamo fare un $\epsilon$-ottimo.}

We first observe that, unlike the single-task setting, in a multi-task problem the optimum may not exist.
The key observation is that in some instances, the principal may benefit from employing large payments (larger than one) with small probability $\epsilon > 0$.
In these instances the profit increases as $\epsilon$ decreases (and the payment increases).
As a result, the profit-maximizing problem admits a supremum, which we will denote by $\SUP$, but not an optimum.

We formally prove this result in the case with $N=M=2$. 
Specifically, we build an instance where both agents share the same set of types $\typeSet_1 = \typeSet_2 = \{\bar \theta, \tilde \theta\}$.
An agent of type $\bar\theta$ can complete both tasks 1 and 2, but only with a small success probability. 
Therefore, the principal can extract only negligible utility from type $\bar\theta$.
By contrast, an agent of type $\tilde\theta$ can complete task 1 with probability one, but cannot complete task 2.
Ideally, we would like to give the task to an agent of type $\tilde\theta$ with payment $c_{\tilde\theta,1}$, but in order to do so we must ensure that no agent misreports $\tilde\theta$ in place of $\bar\theta$.
Task 2 can be used to distinguish between these two types, by assigning it with probability $\epsilon$ and a large contract $L=L(\epsilon) > 1$ to any agent that reports type $\bar\theta$.
While assigning task 2 provides negative utility to the principal due to the large contract $L>1$, this loss decreases as $\epsilon$ tends to zero.
Since $\epsilon$ must be strictly positive, we conclude that the optimum cannot be achieved.
\begin{lemma}
    \label{lem:multi_task_no_opt}
    There is an instance with $N=2$ agents, $M=2$ tasks and at most $K=2$ types per agent which does not admit an optimum.
\end{lemma}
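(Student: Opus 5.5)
The plan is to build an explicit two-agent, two-task instance in which the principal must leave an information rent to one type. That rent can be delivered cheaply only through a task the other type cannot exploit, and delivering it this way incurs a loss that vanishes only in the limit where the task is never assigned. Both agents share types $\{\bar\theta,\tilde\theta\}$, drawn with, say, equal probability.

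Type $\tilde\theta$ has $p_{\tilde\theta,1}=1$ with some cost $c_{\tilde\theta,1}=c\in(0,1)$, and $p_{\tilde\theta,2}=0$. Type $\bar\theta$ has a tiny success probability $\delta$ on both tasks:
\begin{itemize}
\item on task 1, cost $0$ (critical value $0<c$);
\item on task 2, cost $c_{\bar\theta,2}\in(\delta,1]$ (critical value $v_2>1$).
\end{itemize}
The tension is then as follows. If a $\tilde\theta$-reporter gets task 1 with success payment $c$, a $\bar\theta$ agent gains $\delta c$ per unit of probability by misreporting $\tilde\theta$. Hence reporting $\bar\theta$ must yield utility at least $\delta c\, x_{\tilde\theta}$.

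The first step is an upper bound $U$ on the profit of every BIC and IR mechanism. I would start from the constraints \eqref{eq:constraint_ir_multi_task_sigma}--\eqref{eq:contraint_ic_multi_task_sigma}, rewritten in the $\beta$-style variables $y(t|\theta)$ and $y(t|\theta)\sigma_i(t|\theta)$. I then relax the requirement that positive expected payments on task 2 come with positive assignment probability, i.e., I allow $\beta>0$ with $y=0$. In this relaxation the rent to $\bar\theta$ can be paid through task 2 at cost exactly equal to the rent, with no loss from the $v_2>1$ term. This is costless screening, because $\tilde\theta$ gets $[\sigma p_{\tilde\theta,2}-c_{\tilde\theta,2}]_+=0$ from task 2, whereas rent paid through task 1 to $\bar\theta$ would tempt $\tilde\theta$ to misreport $\bar\theta$. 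The relaxed problem is a small LP whose value $U$ I would compute in closed form: generically, give task 1 to a $\tilde\theta$ agent whenever one exists, and pay rent $\delta c\, x_{\tilde\theta}$ to $\bar\theta$. Choosing $\delta$ small makes this allocation strictly better than every alternative.

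The second step is to show that $U$ is approached. For each $\epsilon>0$, I take the mechanism above, in which an agent reporting $\bar\theta$ receives task 2 with probability $\epsilon$ and payment $L=L(\epsilon)$. The payment is chosen so that $\epsilon\delta(L-v_2)$ equals the required rent, and the task-2 assignment is made only when it does not conflict with task 1. I would verify BIC and IR directly. The profit of this mechanism is $U-\Theta(\epsilon\delta(v_2-1))$, which tends to $U$ as $\epsilon\to 0$.

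The third step, which I expect to be the main obstacle, is to show that no feasible mechanism attains $U$. Suppose a mechanism has profit $U$. Since the rent must be paid and any real payment to $\bar\theta$ on task 2 comes with $y>0$, it incurs the strictly positive loss $y\,\delta(v_2-1)$. The mechanism could instead avoid task 2. I must then rule out the other screening devices:
\begin{itemize}
\item lowering $\tilde\theta$'s allocation probability or giving task 1 to $\bar\theta$, which costs a fixed amount of order $(1-c)$ times a probability;
\item paying rent through task 1 to $\bar\theta$, which forces a payment above $c$ that $\tilde\theta$ would exploit by misreporting $\bar\theta$;
\item using task 1 to $\bar\theta$ at a payment $\le c$, which gives rent at most $\delta c$ per unit probability while displacing $\tilde\theta$.
\end{itemize}
I would handle this by showing that the LP value with the task-2 variables forced to zero is strictly below $U$ by a margin independent of $\epsilon$. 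Combined with the strict loss whenever task 2 is used, this shows that every feasible mechanism earns strictly less than $U=\SUP$, so no optimum exists. The delicate part is tuning $\delta$, $c$, and $v_2$ so that this margin is provably positive.
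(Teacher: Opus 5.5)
Your plan is correct, but it is organized differently from the paper's proof. The paper never computes $\SUP$. Instead it:
\begin{itemize}
\item bounds every task-2-free mechanism by the single-task optimum $R$, via ironing;
\item exhibits one explicit $\epsilon$-mechanism using task 2 whose profit exceeds $R$;
\item rules out every mechanism that assigns task 2 with probability $\epsilon>0$ by a local perturbation: halving that probability and raising the payment to $2L-\bar v_2$ leaves all truthful and deviation utilities unchanged and strictly raises profit.
\end{itemize}
You instead identify $\SUP$ as the value $U$ of the relaxation that allows positive expected payments with zero assignment probability, show $U$ is approached, and separate every mechanism from $U$.

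Your separation step is cleanest through the decomposition profit $=$ surplus $-$ rents. IR of $\tilde\theta$ on task 1, together with $c_{\bar\theta,1}=0$, gives $u_{i,\bar\theta}\ge\delta c\,x^{(1)}_{i,\tilde\theta}$. Hence every BIC and IR mechanism has profit at most $U-\sum_\theta\mu_\theta\,y_2(\theta)\,\delta(v_2-1)$, where $y_2(\theta)$ is the probability that task 2 goes to a $\bar\theta$-reporter. This is exactly the limiting form of the paper's halving step. With equal priors, the task-2-free optimum is $\tfrac12(1-c)(1+\delta)$ and $U=\tfrac34(1-c-\delta c)+\tfrac14\delta$. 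The margin is therefore $\tfrac14\bigl(1-c-\delta(1+c)\bigr)>0$ whenever $\delta<(1-c)/(1+c)$, so the tuning is not delicate.

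Your route gives the exact value of $\SUP$ and a quantitative gap, and it anticipates the LP over $\Psi$ used later for the approximation scheme. The paper's route is more economical, since it needs only one beating mechanism and no tight upper bound.

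Two details to get right:
\begin{itemize}
\item In $U$, the all-$\bar\theta$ profile must still assign task 1 to a $\bar\theta$ agent, since its virtual value is $\delta>0$. Otherwise your $U$ is too small, and your claim that no mechanism reaches it becomes false.
\item Take $c_{\tilde\theta,2}>0$ so that IR forbids assigning task 2 to $\tilde\theta$, as the paper does with $\tilde c_2=1$.
\end{itemize}
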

\begin{proof}
    We build the instance as follows.
    The two agents share the same set of types $\typeSet_1 = \typeSet_2 = \{\bar \theta, \tilde \theta\}$.
    For the sake of notation, we will denote by $\bar p_m$ and $\bar c_m$ ($\tilde p_m$ and $\tilde c_m$) the success probability and cost of type $\bar\theta$ ($\tilde \theta$), which are defined as follows (we also report the critical values for completeness):
    \begin{align*}
        \bar p_1 = \frac{1}{10},\quad     &\bar c_1 = \frac{1}{100},\quad     &\bar v_1 = \frac{1}{10}\\
        \bar p_2 = \frac{1}{10},\quad     &\bar c_2 = \frac{1}{5},\quad     &\bar v_2 = 2\\
        \tilde p_1 = 1,\quad   &\tilde c_1 = \frac{1}{4},\quad     &\tilde v_1 = \frac{1}{4}\\
        \tilde p_2 = 0,\quad   &\tilde c_2 = 1,\quad   &\tilde v_2 = \infty\\
    \end{align*}
    We further define the corresponding prior probabilities $\mu_{\bar\theta} \coloneqq \bar w \coloneqq \frac{1}{100}$ and $\mu_{\tilde\theta} \coloneqq \tilde w \coloneqq \frac{99}{100}$.

    We will prove that there exists no optimal mechanism in this instance, \emph{i.e.} the problem of maximizing the profit over the BIC and IR mechanisms admits a supremum, but not a maximum.
    We outline the proof as follows.
    First, we compute the maximum profit that can be achieved when neither agent receives task $2$.
    Then, we show that if the agents also receive task $2$ with some (small) probability, we can increase the profit.
    Finally, we show that any mechanism that assigns task $2$ to one of the first two agents can be improved, and therefore it is not optimal.
    As a supremum must exist, since the profit is upper bounded by one, we conclude that no mechanism is optimal.

    \paragraph{Profit without task $2$} Consider the class of mechanisms that never assign task $2$.
    The problem of assigning task $1$ degenerates to a single-task problem.
    We observe that the agents share the same types, and simple computations show that the ironed virtual utilities are positive for both types.
    Therefore, the optimal mechanism in this class always assigns the first task to agent $1$ (or equivalently, always to agent $2$) with contract $\tilde v_1$.
    Hence, the optimal mechanism in this class achieves the following profit, which we denote by $R$:
    \begin{equation*}
        R \coloneqq \bar w \bar p_1 (1-\tilde v_1) + \tilde w (\tilde p_1 - \tilde c_1) =\frac{2973}{4000} < 0.744.
    \end{equation*}

    \paragraph{Profit with task $2$} We now show that by assigning task $2$, it is possible to achieve larger utility than the one attained by the previous mechanism.
    Consider the mechanism whose allocation rule, parameterized by $\epsilon \in (0,1/2)$, is defined by the following table, where $i \rightarrow m$ indicates that agent $i$ is allocated to task $m$.
    \begin{equation*}
    \begin{array}{|c|c|c|}
        \hline
                                & \theta_2 = \bar\theta             & \theta_2 = \tilde\theta \\ \hline
        \theta_1 = \bar\theta   & \makecell{1\rightarrow 2 \text{ with probability $\epsilon$} \\ 2\rightarrow 2 \text{ with probability $\epsilon$}}                                                & \makecell{1 \rightarrow 2 \text{ with probability $\epsilon$} \\ 2 \rightarrow 1}\\ \hline
        \theta_1 = \tilde\theta & 1 \rightarrow 1 & 1 \rightarrow 1 \\ \hline
    \end{array}
    \end{equation*}
    Agents $1$ and $2$ receive task $1$ with payment $\tilde v_1$, and task $2$ with payment $L$, which we now define to ensure that the mechanism is BIC.
    Notice that if agent $1$ or $2$ has type $\tilde\theta$, they are not incentivized to deviate, as doing so would prevent them from receiving the single task they can complete ($\tilde\theta$ cannot complete task $2$).
    Therefore, we can focus on the sole IC constraints that prevent agents $1$ and $2$ from misreporting $\tilde\theta$ in place of the true type $\bar\theta$.
    Consider agent $1$, who is allocated task $2$ with probability $\epsilon$ and contract $L$ by truthfully reporting $\bar\theta$, and gets task $1$ with contract $\tilde v_1$ by misreporting $\tilde\theta$.
    The corresponding IC constraint imposes:
    \begin{equation*}
        \epsilon (\bar p_2 L - \bar c_2) \ge \bar p_1 \tilde v_1 - \bar c_1 \coloneqq m.
    \end{equation*}
    Furthermore, observe that agent $2$ has the same IC constraint preventing the misreport, which is satisfied for the following choice of $L$:
    % if agent $1$ reports $\tilde\theta$, agent 2 gets always zero utility. Otherwise it gets \epsilon (\bar p_2 L - \bar c_2) by truthful report and \bar p_1 \tidle v_1 - \bar c_1 by misreporting. So BIC gives the same constraint (formall left and right side get multipled by $\bar w$).
    \begin{equation*}
        L = \frac{\bar p_1 \tilde v_1 - \bar c_1}{\epsilon \bar p_2} +\bar v_2.
    \end{equation*}
    Notice that this choice of $L$ also satisfies individual rationality, and thus the mechanism is both IR and BIC.
    We now analyze the profit of this mechanism.
    Whenever agent $1$ or $2$ receives task $2$, the principal suffers a loss of utility (since $L > 1$) equal to $\bar p_2 (L - 1)$.
    Such a task is given to the first two agents with probability $2 \bar w^2 \epsilon + \bar w \tilde w \epsilon$.
    Overall, the expected loss $\ell$ due to task $2$ is:
    \begin{align*}
        \ell &\coloneqq (2 \bar w^2 \epsilon + \bar w \tilde w \epsilon)\bar p_2 (L - 1) \\
        &= (2 \bar w^2 + \bar w \tilde w)\epsilon \bar p_2 (\frac{m}{\epsilon \bar p_2} +\bar v_2 - 1) \\
        &= (2 \bar w^2 +\bar w \tilde w)m +(2 \bar w^2 + \bar w \tilde w)\epsilon\bar p_2(\bar v_2 - 1) \\
        &\le (2 \bar w^2 +\bar w \tilde w)m + 2\epsilon.
    \end{align*}
    Furthermore, task $1$ provides a profit of $\tilde p_1 - \tilde c_1$ whenever it is assigned to an agent (since the contract is $\tilde v_1$ and the agent has type $\tilde\theta$), which happens with probability $\tilde w +\bar w\tilde w$.
    As a result, the profit of this mechanism is at least:
    \begin{align*}
        R(\epsilon) &\ge (\tilde w +\bar w\tilde w)(\tilde p_1 - \tilde c_1) - \ell \\
        &\ge (\tilde w +\bar w\tilde w)(\tilde p_1 - \tilde c_1) -(2 \bar w^2 +\bar w \tilde w)m  -2\epsilon \\
        &= (\tilde w +\bar w\tilde w)(\tilde p_1 - \tilde c_1) -(2 \bar w^2 +\bar w \tilde w)(\bar p_1 \tilde v_1 - \bar c_1) - 2\epsilon \\
        &= \frac{1499547}{2000000} -2\epsilon \ge 0.749 -2\epsilon.
    \end{align*}
    For sufficiently small $\epsilon$ (\emph{e.g.}, $\epsilon=10^{-4}$), this mechanism collects a profit larger than $R$, which is the maximum profit achievable when agents $1$ and $2$ never receive task $2$.

    % \bollo{Decisamente da riscrivere decentemente} \\
    % \bollo{L'orginiale non era del tutto corretta: dato $\theta$, il meccanismo può dare allocazioni $(2,1)$ e $(2,\varnothing)$ con diverse probabilità e diversi contratti}
    % \bollo{Credo he formalmente si debbano fare più sottocasi. Se $y((2,\varnothing)|\theta) = \epsilon$, è facile e basta ridurre quello cambiado il suo contratto.
    % Se $y((2,1)|\theta) = \epsilon$ e $y((2,\varnothing)|\theta)$, allora cambiarlo cambia anche qualcosa per l'agente 2, che però volevo lasciare uguale. Dovrei aumentare $y((\varnothing,1)|\theta)$, ma questo può usare un contratto diverso. 
    % In pratica mi rimane il caso $y((2,1)|\theta)=\epsilon>0$, con $y((2,\varnothing)|\theta')=y((\varnothing,2)|\theta')=0$ per $\theta'$ (anche $\theta'=\theta$)}
    \paragraph{No mechanism is optimal}
    We now show that every other mechanism is also suboptimal, and therefore the supremum cannot be achieved.
    Consider any IR and BIC mechanism $(y,\sigma)$ that, for some type report $\theta \in \typeSet$, assigns task $2$ to agent $1$ (the argument is the same with agent $2$, as they are symmetrical) with positive probability.
    There are two possible allocations that assign task $2$ to agent $1$, namely $(2,\varnothing)$, and $(2,1)$.
    We will now consider the case where $y((2,\varnothing)|\theta) > 0$.
    Formally, assume $\epsilon \coloneqq y((2,\varnothing)|\theta) > 0$ and let $L \coloneqq \sigma_1((2,\varnothing)|\theta)$.
    Observe that $\theta_1 = \bar\theta$, as task $2$ cannot be assigned to an agent of type $\tilde\theta$ without violating the IR constraints.
    Consider a new mechanism $(y^\circ,\sigma^\circ)$ that, given $\theta$, assigns task $2$ to agent $1$ with probability only $\epsilon/2$ and suitably increases the proposed contracts.
    Formally, we let:
    % it is a bit strange because I must change the probability ofgining the taks to agent 1, but nothing must change for agent 1.
    \begin{align*}
        y^\circ((2,\varnothing)|\theta)= \epsilon/2, \quad \sigma^\circ_1((2,\varnothing)|\theta)= L'= 2L - \bar v_2, \quad
        y^\circ((\varnothing,\varnothing)|\theta)= y((\varnothing,\varnothing)|\theta) +\epsilon/2.
    \end{align*}
    We leave every other component unchanged, \emph{i.e.}:
    \begin{align*}
        y^\circ(t'|\theta')= y(t'|\theta'), \quad \sigma^\circ_1(t'|\theta')= \sigma_1(t'|\theta'),
    \end{align*}
    for every pair $\theta' \in \typeSet,t' \in \tSet$ such that $\theta' \neq \theta$ or $t' \notin \{(2,\varnothing),(\varnothing,\varnothing)\}$.
    
    We argue that this new mechanism is BIC and IR.
    The mechanism $(y^\circ,\sigma^\circ)$ is IR since $L'\ge L$ ($L \ge \bar v_2$ in order to satisfy the IR constraints in the original mechanism), and the other contracts are left unchanged.
    The BIC constraints for agent $2$ hold, as they hold for the original mechanism.

    Denote by $U_i(\hat\theta_i|\theta_i)$ and $U^\circ_i(\hat\theta_i|\theta_i)$ the expected utility of agent $i$ when they misreport $\hat\theta_i \in \typeSet_i$ in place of $\theta_i \in \typeSet_i$ in the old and new mechanisms, respectively.
    Now suppose that agent $1$ has type $\tilde\theta$, which cannot complete task $2$.
    The new mechanism does not change the probability that this agent receives task $1$, nor the corresponding contracts when this happens.
    Therefore, the agent is not incentivized to misreport.
    Instead, if the agent has type $\bar\theta$, we observe that:
    \begin{align*}
        U^\circ_1(\bar\theta|\bar\theta) - U_1(\bar\theta|\bar\theta) = \frac{\epsilon}{2}(\bar p_2 L' - \bar c_2) - \epsilon(\bar p_2 L - \bar c_2) =0
        % U_1'(\bar\theta|\bar\theta) - U_1(\bar\theta|\bar\theta) = \frac{\epsilon}{2}(\bar p_2 L' - \bar c_2) - \epsilon(\bar p_2 L - \bar c_2) 
        % = \frac{\epsilon}{2}( 2\bar p_2 L - \bar p_2 \bar v_2 -\bar c_2 ) - \epsilon(\bar p_2 L - \bar c_2)  
        % = \frac{\epsilon}{2}( 2\bar p_2 L - -2\bar c_2 ) - \epsilon(\bar p_2 L - \bar c_2) = 0
    \end{align*}
    At the same time, we have $U_1(\tilde\theta|\bar\theta)=U^\circ_1(\tilde\theta|\bar\theta)$, as the two mechanisms have the same behavior when the agent reports $\tilde\theta$.
    Therefore, when employing mechanism $(y^\circ,\sigma^\circ)$, the agent's utility is unchanged both upon truthful and non-truthful reporting.
    Since the original mechanism is BIC, we conclude that $(y^\circ,\sigma^\circ)$ is also BIC.
    
    Finally, the profit $\profit(y^\circ,\sigma^\circ)$ of the new mechanism is strictly larger than the profit $\profit(y,\sigma)$ of the original mechanism.
    Indeed, by recalling that $L' = 2L - \bar v_2 = 2L -2$, the difference between these profits is:
    \begin{align*}
        \profit(y^\circ,\sigma^\circ) -\profit(y,\sigma) &= \mu_\theta\big(y^\circ((2,\varnothing)|\theta) \bar p_2 (1-L') -y((2,\varnothing)|\theta)\bar p_2 (1-L) \big) \\
        &=\mu_\theta \left(\epsilon\bar p_2 (\frac{3}{2}-L) - \epsilon \bar p_2 (1-L)\right) > 0.
    \end{align*}
    
    A similar argument applies when $y((2,\varnothing)|\theta)=0$ and $y((2,1)|\theta) = \epsilon > 0$. 
    In this case, we move probability mass $\varepsilon/2$ from allocation $(2,1)$ to allocation $(\varnothing,1)$, and set $\sigma_1'((2,1)|\theta)=2L-\bar v_2$, where $L \coloneq \sigma_1((2,1)|\theta)$. 
    As in the previous case, the IR and BIC constraints of agent 1 remain satisfied, while its contribution to the principal's profit strictly increases; the argument is the same, since we moved probability mass from an allocation where agent $1$ receives task $2$ to an allocation where she receives no task.
    % given \theta, original mechanism: with probability \epsilon agent 2 gets 1, agent 1 gets 2
    % given \theta, new mwchanism: with probability \epsilon agent 2 gets 1, agent 1 get sometimes 2 sometimes nothing. 
    % Agent 2 does not care waht agent 1 gets, so from the point of view of agent 1 nothing changes.
    Furthermore, we can assume that $\sigma_2((2,1)|\theta)=\sigma_2((\varnothing,1)|\theta)$, as otherwise the payments of agent 2 can be changed to satisfy this property without altering the revenue (Lemma~\ref{lem:contracts_ind_from_others_tasks}).
    Since agent 2 receives the same task with the same total probability and contract as in the original mechanism, its IR and BIC constraints, as well as its contribution to the principal's profit, remain unchanged. Thus, the modified mechanism is BIC and IR and achieves strictly higher profit.

    We have thus proven that the supremum is not achieved by a mechanism that never gives task $2$ to agents $1$ or $2$.
    Furthermore, any other mechanism is not optimal, since we can build a new mechanism with larger profit.
    We conclude that no mechanism is optimal.
\end{proof}

\subsection{Computing an approximately optimal mechanism}
% Given that the problem of computing a profit-maximizing mechanism in a multi-task problem does not admit an optimum, we focus on the problem of developing an FPTAS.
% Specifically, we now show how to compute an $\epsilon$-optimal mechanism $(y,\sigma)$, for any $\epsilon > 0$ in polynomial time (in $M$,$K$ and $\log(\nicefrac{1}{\epsilon})$) when the number of agents $N$ is constant.
Since the problem of computing a profit-maximizing mechanism in a multi-task setting may not admit an optimal solution, we focus on developing an FPTAS. 
Specifically, we show how to compute, for any $\epsilon>0$, an $\epsilon$-optimal mechanism $(y,\sigma)$ achieving a profit of at least $(1-\epsilon)\SUP$ in time polynomial in the instance size and $\log(\nicefrac{1}{\epsilon})$, provided that the number of agents $N$ is constant.
Our technique is similar to the one proposed by~\cite{cacciamani2024multi} to compute approximately optimal randomized contracts in multi-agent settings.
As in~\cite{cacciamani2024multi}, we derive an upper bound $B$ on the magnitude of the contracts employed by an $\epsilon$-optimal solution; then we show that computing the optimal solution with bounded contracts essentially reduces to solving a linear program with polynomially many variables and constraints.
The main difference with respect to~\cite{cacciamani2024multi} is that we derive a multiplicative FPTAS, rather than an additive one.

As we did in Section~\ref{sec:single_task}, for every mechanism $(y,\sigma)$  we define the expected contracts $\nu_i(t|\theta) = y(t|\theta) \sigma_i(t|\theta)$, for every agent $i \in [N]$, profile $\theta \in \typeSet$ and allocation $t \in \tSet$.
We denote $(y,\nu)$ as the $\nu$-representation of $(y,\sigma)$.
The $(y,\sigma)$ representation can be recovered by defining $\sigma_i(t|\theta) = \frac{\nu_i(t|\theta)}{y(t|\theta)}$, with the convention that $0/0=0$.
With an abuse of notation, we will refer to the tuple $(y,\nu)$ as a mechanism.
Let us remark that a tuple $(y,\nu)$, with $y : \typeSet \rightarrow \Delta_{\tSet}$ and $\nu : \tSet \times \typeSet \rightarrow \mathbb{R}^N_+$, is a mechanism if and only if $\nu_i(t|\theta)=0$ whenever $y(t|\theta)=0$; otherwise, there is no contract $\sigma_i(t|\theta)$ such that $\nu_i(t|\theta)= y(t|\theta) \sigma_i(t|\theta)$.
We now show how to compute, for every $B>0$, the profit-maximizing BIC and IR mechanism that uses payments at most $B$.

When working in the $\nu$-representation, by simple variable substitution the profit of the mechanism can be rewritten as:
\begin{equation*}
    \profit(y,\nu) = \sum_{\theta \in \typeSet} \mu_\theta \sum_{t \in \tSet} \sum_{i \in [N]} p_{\theta_i,t_i}(y(t|\theta) -\nu_i(t|\theta)).
\end{equation*}
Similarly, the IR and IC constraints can be rewritten as:
\begin{align}
    &\nu_i(t|\theta) p_{\theta_i,t_i} -y(t|\theta)c_{\theta_i,t_i} \ge 0 \quad &&\forall i \in [N], \theta \in \typeSet, t \in \tSet \label{eq:constraint_ir_multi_task_nu}\\
    &\sum_{\theta_{-i} \in \typeSet_{-i}} \mu_{\theta_{-i}}\sum_{t \in \tSet} \nu_i(t|\theta_i,\theta_{-i}) p_{\theta_i,t_i} -y(t|\theta_i,\theta_{-i}) c_{\theta_i,t_i} \ge &&\nonumber\\
    &\hspace{2cm}\sum_{\theta_{-i} \in \typeSet_{-i}} \mu_{\theta_{-i}}\sum_{t \in \tSet} [\nu_i(t|\hat\theta_i,\theta_{-i}) p_{\theta_i,t_i} -y(t|\hat\theta_i,\theta_{-i}) c_{\theta_i,t_i}]_+ &&\forall i \in [N], \theta_i,\hat\theta_i \in \typeSet_i
    \label{eq:contraint_ic_multi_task_nu}
\end{align}
The profit and the IR constraints are thus linear in $y$ and $\nu$.
To linearize the IC constraints, we introduce the auxiliary variables $d_i(t,\hat\theta_i|\theta_i,\theta_{-i}) \in \mathbb{R}$, which represent the utility of the agent $i$ when it misreports $\hat\theta_i$ in place of $\theta_i$, and the other agents report $\theta_{-i}$. 
We then construct the space $\Psi$ of tuples $(y,\nu,d)$, consisting of all the tuples satisfying the following linear constraints:
\begin{subequations}
    \label{eq:psi_definition}
\begin{align}
    &\nu_i(t|\theta) p_{\theta_i,t_i} -y(t|\theta)c_{\theta_i,t_i} \ge 0 \quad &\forall i \in [N], \theta \in \typeSet, t \in \tSet  \label{seq:constraint_ir_multi_task_nu}\\
     &\sum_{\theta_{-i} \in \typeSet_{-i}} \mu_{\theta_{-i}}\sum_{t \in \tSet} \nu_i(t|\theta_i,\theta_{-i}) p_{\theta_i,t_i} -y(t|\theta_i,\theta_{-i}) c_{\theta_i,t_i} \ge \nonumber\\ 
    &\hspace{4cm}\sum_{\theta_{-i} \in \typeSet_{-i}} \mu_{\theta_{-i}}\sum_{t \in \tSet} d_i(t,\hat\theta_i|\theta_i,\theta_{-i})\hspace{-1cm} &\forall i \in [N], \theta_i,\hat\theta_i \in \typeSet_i \label{seq:constraint_ic_multi_task_linearized} \\
    &d_i(t,\hat\theta_i|\theta_i,\theta_{-i}) \ge \nu_i(t|\hat\theta_i,\theta_{-i}) p_{\theta_i,t_i} -y(t|\hat\theta_i,\theta_{-i}) c_{\theta_i,t_i} &\forall i \in [N], \theta_i,\hat\theta_i \in \typeSet_i, \theta_{-i} \in \typeSet_{-i}, t \in \tSet  \label{seq:constraint_ic_multi_task_d_greater}\\
    &d_i(t,\hat\theta_i|\theta_i,\theta_{-i}) \ge 0 &\forall i \in [N], \theta_i,\hat\theta_i \in \typeSet_i, \theta_{-i} \in \typeSet_{-i}, t \in \tSet \label{seq:constraint_ic_multi_task_d_positive}
\end{align}
\end{subequations}
Constraint~\eqref{seq:constraint_ic_multi_task_linearized} corresponds to the BIC constraints of Equation~\eqref{eq:contraint_ic_multi_task_nu}, while Constraint~\eqref{seq:constraint_ic_multi_task_d_greater} and Constraint~\eqref{seq:constraint_ic_multi_task_d_positive} ensure that $d_i(t,\hat\theta_i|\theta_i,\theta_{-i})$ is at least the utility attained by agent $i$ when deviating. 
Notice that the smaller the values of the variables $d_i(t,\hat\theta_i|\theta_i,\theta_{-i})$, the larger the feasible space for $y$ and $\nu$. 
Therefore, whenever we maximize $\profit(y,\nu)$ within the space $\Psi$, it is without loss of generality to focus on the solutions where $d_i(t,\hat\theta_i|\theta_i,\theta_{-i})$ is exactly the utility attained by the agent deviating.

Let us remark that if $(y,\nu,d) \in \Psi$, it does not necessarily mean that $(y,\nu)$ is a mechanism.
Indeed, we must ensure that $\nu_i(t|\theta)=0$ if $y(t|\theta) = 0$, as otherwise we cannot recover a suitable contract $\sigma_i(t|\theta)$ such that $\nu_i(t|\theta)=\sigma_i(t|\theta)y(t|\theta)$.
Therefore, when working with expected contracts $\nu$ and unbounded contracts, we face the following optimization problem:
\begin{subequations}
\label{prog:multi_task_with_linearized_ic_and_no_contract_bounds} 
\begin{align}
    \sup_{y,\nu,d}\;\; &\profit(y,\nu) = \sum_{\theta \in \typeSet} \mu_\theta \sum_{t \in \tSet} \sum_{i \in [N]} p_{\theta_i,t_i}(y(t|\theta) -\nu_i(t|\theta)) \\
    \text{s.t.} \;\; &(y,\nu,d) \in \Psi \\
    &y(t|\theta) > 0 \vee \nu_i(t|\theta)=0 \label{seq:constraint_no_critical_pairs} &\forall i \in [N], t \in \tSet, \theta \in \typeSet. 
\end{align}
\end{subequations}
Constraint~\eqref{seq:constraint_no_critical_pairs} ensures that $\nu_i(t|\theta)=0$ whenever $y(t|\theta)=0$, which allows us to recover a mechanism $(y,\sigma)$. 
The following result proves the equivalence between Program~\ref{prog:multi_task_with_linearized_ic_and_no_contract_bounds} and the problem of maximizing the profit $\profit(y,\sigma)$ among the IR and BIC mechanisms $(y,\sigma)$.
\begin{restatable}{lemma}{ProgMultiTask}
    \label{lem:prog_multi_task_lienar_ic}
    The value of Program~\ref{prog:multi_task_with_linearized_ic_and_no_contract_bounds} is $\SUP$. 
    For every feasible solution $(y,\nu,d)$, the pair $(y,\nu)$ is the $\nu$-representation of a BIC and IR mechanism whose profit is the value of the solution $(y,\nu,d)$.
    For every BIC and IR mechanism $(y,\nu)$, there is a suitable assignment of variables $d$ such that $(y,\nu,d)$ is a feasible solution of Program~\ref{prog:multi_task_with_linearized_ic_and_no_contract_bounds}, and its value is $\profit(y,\nu)$. 
\end{restatable}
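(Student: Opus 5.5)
The plan is to prove the two directions as a correspondence between feasible solutions of Program~\ref{prog:multi_task_with_linearized_ic_and_no_contract_bounds} and BIC and IR mechanisms. Once the correspondence preserves value in both directions, the equality with $\SUP$ follows directly: the supremum over feasible solutions equals the supremum over mechanisms, which by definition is $\SUP$.

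\emph{From feasible solutions to mechanisms.} Fix a feasible $(y,\nu,d)$. Constraint~\eqref{seq:constraint_no_critical_pairs} guarantees $\nu_i(t|\theta)=0$ whenever $y(t|\theta)=0$. Hence setting $\sigma_i(t|\theta)\coloneqq \nu_i(t|\theta)/y(t|\theta)$, with $0/0=0$, gives a well-defined payment scheme with $\sigma\ge 0$ and $\nu_i(t|\theta)=y(t|\theta)\sigma_i(t|\theta)$. Multiplying out shows that Constraint~\eqref{seq:constraint_ir_multi_task_nu} is exactly Equation~\eqref{eq:constraint_ir_multi_task_sigma}. For BIC, Constraints~\eqref{seq:constraint_ic_multi_task_d_greater} and~\eqref{seq:constraint_ic_multi_task_d_positive} together give
\[
d_i(t,\hat\theta_i|\theta_i,\theta_{-i})\ge [\nu_i(t|\hat\theta_i,\theta_{-i})p_{\theta_i,t_i}-y(t|\hat\theta_i,\theta_{-i})c_{\theta_i,t_i}]_+ .
\]
The right-hand side equals $y(t|\hat\theta_i,\theta_{-i})[\sigma_i(t|\hat\theta_i,\theta_{-i})p_{\theta_i,t_i}-c_{\theta_i,t_i}]_+$, since $y\ge 0$ can be pulled out of the positive part. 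Summing with the nonnegative weights $\mu_{\theta_{-i}}$ and chaining with Constraint~\eqref{seq:constraint_ic_multi_task_linearized} yields Equation~\eqref{eq:contraint_ic_multi_task_sigma}. Finally, substituting $\nu=y\sigma$ into the objective shows the value equals $\profit(y,\sigma)$.

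\emph{From mechanisms to feasible solutions.} Given a BIC and IR mechanism $(y,\sigma)$, set $\nu\coloneqq y\sigma$. By construction $\nu$ vanishes wherever $y$ does, so Constraint~\eqref{seq:constraint_no_critical_pairs} holds. Define
\[
d_i(t,\hat\theta_i|\theta_i,\theta_{-i})\coloneqq[\nu_i(t|\hat\theta_i,\theta_{-i})p_{\theta_i,t_i}-y(t|\hat\theta_i,\theta_{-i})c_{\theta_i,t_i}]_+ .
\]
This choice satisfies Constraints~\eqref{seq:constraint_ic_multi_task_d_greater} and~\eqref{seq:constraint_ic_multi_task_d_positive} by the definition of $[\cdot]_+$. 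The same identity as above turns Constraint~\eqref{seq:constraint_ic_multi_task_linearized} into the original BIC constraint, and IR translates directly. The value again equals $\profit(y,\sigma)$.

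No step is a real obstacle; the only point requiring care is the positive-part identity $[y\sigma p-yc]_+=y[\sigma p-c]_+$, which relies on $y\ge0$, together with the convention $0/0=0$ when recovering $\sigma$. The strict-inequality disjunction in Constraint~\eqref{seq:constraint_no_critical_pairs} is the reason the program is written with $\sup$ rather than $\max$. It causes no difficulty here, because we only claim equality of suprema.
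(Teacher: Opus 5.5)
Your proposal is correct and follows the paper's proof essentially step for step: the same two-way correspondence $\nu = y\sigma$, $d = [\nu p - yc]_+$ in one direction and $\sigma = \nu/y$ (with $0/0=0$, well defined by the disjunctive constraint) in the other, with the profit preserved both ways. You also spell out the positive-part identity $[y\sigma p - yc]_+ = y[\sigma p - c]_+$ for $y \ge 0$, which the paper leaves implicit under ``simple variable substitution.''
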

\begin{proof}
    Let $(y,\sigma)$ be a BIC and IR mechanism.
    Define $(y,\nu,d)$ as:
    \begin{align*}
        % \sigma_i(t|\theta) &= \frac{\nu_i(t|\theta)}{y(t|\theta)} &\forall i \in [N], t \in \tSet, \theta \in \typeSet, \\
        \nu_i(t|\theta) &= y(t|\theta) \sigma_i(t|\theta) &\forall i \in [N], t \in \tSet, \theta \in \typeSet, \\
        d_i(t,\hat\theta_i|\theta_i,\theta_{-i}) &= [\nu_i(t|\hat\theta_i,\theta_{-i}) p_{\theta_i,t_i} -y(t|\hat\theta_i,\theta_{-i}) c_{\theta_i,t_i}]_+ 
    \end{align*}
    The solution $(y,\nu,d)$ satisfies Constraints~\eqref{seq:constraint_ic_multi_task_d_greater}, \eqref{seq:constraint_ic_multi_task_d_positive} and \eqref{seq:constraint_no_critical_pairs} by construction.
    Furthermore, by simple variable substitution one can verify that the remaining constraints correspond to the IR and BIC constraints of mechanism $(y,\sigma)$, which are satisfied.
    Therefore $(y,\nu,d)$ is feasible for Program~\ref{prog:multi_task_with_linearized_ic_and_no_contract_bounds}.
    By noticing that $\profit(y,\sigma)=\profit(y,\nu)$, we conclude that the value of Program~\ref{prog:multi_task_with_linearized_ic_and_no_contract_bounds} is at least $\SUP$.

    Now let $(y,\nu,d)$ be a feasible solution of Program~\ref{prog:multi_task_with_linearized_ic_and_no_contract_bounds}.
    Recover the $(y,\sigma)$ representation of $(y,\nu)$ defined as:
    \begin{equation*}
        \sigma_i(t|\theta) = \frac{\nu_i(t|\theta)}{y(t|\theta)} \quad \forall i \in [N], t \in \tSet, \theta \in \typeSet,
    \end{equation*}
    with the convention that $0/0 \coloneqq 0$.
    Notice that this is well-defined thanks to Constraint~\eqref{seq:constraint_no_critical_pairs}, and it is BIC and IR thanks to the remaining constraints.
    Since $\profit(y,\sigma)=\profit(y,\nu)$, we have that Program~\ref{prog:multi_task_with_linearized_ic_and_no_contract_bounds} has value at most $\SUP$.
    We conclude that the value of the program is $\SUP$.
\end{proof}

By Lemma~\ref{lem:prog_multi_task_lienar_ic}, we have reduced the problem of devising an FPTAS to that of computing an $\epsilon$-optimal solution of Program~\ref{prog:multi_task_with_linearized_ic_and_no_contract_bounds}.
Program~\ref{prog:multi_task_with_linearized_ic_and_no_contract_bounds} has a single non-linear constraint, namely Constraint~\ref{seq:constraint_no_critical_pairs}.
Now suppose that there exists an $\epsilon$-optimal mechanism whose payments are bounded by some $B > 0$.
Then, we argue that an  $\epsilon$-optimal mechanism can be found by solving the following linear program:
\begin{subequations}
\label{prog:multi_task_lp_with_bounded_contracts} 
\begin{align}
    \max_{y,\nu,d}\;\; &\profit(y,\nu) = \sum_{\theta \in \typeSet} \mu_\theta \sum_{t \in \tSet} \sum_{i \in [N]} p_{\theta_i,t_i}(y(t|\theta) -\nu_i(t|\theta)) \\
    \text{s.t.} \;\; &(y,\nu,d) \in \Psi \\
    &\nu_i(t|\theta) \le B y(t|\theta) \label{seq:lp_bounded_contracts_constraint_m} &\forall i \in [N], t \in \tSet, \theta \in \typeSet,
\end{align}
\end{subequations}
which is derived by substituting Constraint~\ref{seq:constraint_no_critical_pairs} with Constraint~\ref{seq:lp_bounded_contracts_constraint_m}.
Notice that Constraint~\ref{seq:lp_bounded_contracts_constraint_m} is both stronger than Constraint~\ref{seq:constraint_no_critical_pairs},
and satisfied by any mechanism whose payments are bounded by $B$.
Therefore, we have the following result.
\begin{restatable}{lemma}{LPMultiTask}
    \label{lem:multi_task_lp}
    Let $(y,\nu,d)$ be an optimal solution to Program~\ref{prog:multi_task_lp_with_bounded_contracts}.
    Suppose that there exists an $\epsilon$-optimal valid mechanism whose contracts are bounded by $B > 0$.
    Then $(y,\nu)$ is the $\nu$-representation of an $\epsilon$-optimal valid mechanism.
\end{restatable}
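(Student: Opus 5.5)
We sandwich the value of Program~\ref{prog:multi_task_lp_with_bounded_contracts} between two quantities: it is at least the profit of the assumed bounded $\epsilon$-optimal mechanism, and it is the profit of a genuine BIC and IR mechanism. Both directions follow from Lemma~\ref{lem:prog_multi_task_lienar_ic} together with the observation that Constraint~\eqref{seq:lp_bounded_contracts_constraint_m} lies between Constraint~\eqref{seq:constraint_no_critical_pairs} and the property of having contracts bounded by $B$.

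\emph{Step 1 (the optimal LP solution is a mechanism).} Let $(y,\nu,d)$ be optimal for Program~\ref{prog:multi_task_lp_with_bounded_contracts}. If $y(t|\theta)=0$, then Constraint~\eqref{seq:lp_bounded_contracts_constraint_m} together with $\nu\ge 0$ forces $\nu_i(t|\theta)=0$. Hence Constraint~\eqref{seq:constraint_no_critical_pairs} holds, and $(y,\nu,d)$ is feasible for Program~\ref{prog:multi_task_with_linearized_ic_and_no_contract_bounds}. By Lemma~\ref{lem:prog_multi_task_lienar_ic}, $(y,\nu)$ is then the $\nu$-representation of a BIC and IR mechanism whose profit equals the LP value. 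The recovered contracts $\sigma_i=\nu_i/y$ are also bounded by $B$.

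\emph{Step 2 (the LP value is at least $(1-\epsilon)\SUP$).} Let $(y',\sigma')$ be the assumed $\epsilon$-optimal valid mechanism with $\sigma'\le B$. By Lemma~\ref{lem:prog_multi_task_lienar_ic}, we can set $\nu'=y'\sigma'$ and choose $d'$ as the positive parts of the deviation utilities; this makes $(y',\nu',d')$ lie in $\Psi$ with value $\profit(y',\sigma')\ge(1-\epsilon)\SUP$. Since $\nu'_i(t|\theta)=y'(t|\theta)\sigma'_i(t|\theta)\le B\,y'(t|\theta)$, Constraint~\eqref{seq:lp_bounded_contracts_constraint_m} also holds. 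So this point is feasible for the LP, and the optimal LP value is at least $(1-\epsilon)\SUP$.

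\emph{Step 3 (combine).} The mechanism from Step 1 has profit equal to the LP value, which is at least $(1-\epsilon)\SUP$ by Step 2, so it is $\epsilon$-optimal. One point still needs checking: the LP attains its maximum rather than merely a supremum. Its feasible region is a polyhedron, and the objective is bounded, since the profit of any point in $\Psi$ is at most $\sum_\theta \mu_\theta\sum_t y(t|\theta)\sum_i p_{\theta_i,t_i}\le N$ because $\nu\ge0$. The region is also nonempty, as $y$ putting all mass on the empty allocation with $\nu=d=0$ is feasible. A bounded, feasible LP attains its optimum, so the optimal solution in the statement exists.

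None of the steps is deep. The step most deserving of care is Step 1, namely that Constraint~\eqref{seq:lp_bounded_contracts_constraint_m} really implies Constraint~\eqref{seq:constraint_no_critical_pairs}; this rests on $\nu\ge0$, which comes from $\nu=y\sigma$ with $\sigma\in\mathbb{R}^N_+$ and is implicit in the definition of $\Psi$ via Constraint~\eqref{seq:constraint_ir_multi_task_nu}. If nonnegativity of $\nu$ is not explicitly among the constraints of $\Psi$, I would add $\nu\ge 0$ to the LP. This is harmless, because every genuine mechanism satisfies it.
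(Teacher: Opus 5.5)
Your proposal is correct and matches the paper's proof. Both embed the bounded $\epsilon$-optimal mechanism into Program~\ref{prog:multi_task_lp_with_bounded_contracts} via Lemma~\ref{lem:prog_multi_task_lienar_ic} to show the LP value is at least $(1-\epsilon)\SUP$, then use that Constraint~\eqref{seq:lp_bounded_contracts_constraint_m} implies Constraint~\eqref{seq:constraint_no_critical_pairs}, so the LP optimum is itself the $\nu$-representation of a BIC and IR mechanism with that profit. Your extra checks are correct and only make explicit what the paper leaves implicit: that the LP attains its maximum, and that the implication needs $\nu \ge 0$, which the paper takes from $\nu$ having codomain $\mathbb{R}^N_+$.
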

\begin{proof}
    Let $(y^\epsilon,\sigma^\epsilon)$ be an $\epsilon$-optimal valid mechanism whose contracts are bounded by $B > 0$.
    By Lemma~\ref{lem:prog_multi_task_lienar_ic}, we can build a feasible solution $(y^\epsilon,\nu^\epsilon,d^\epsilon)$ to Program~\ref{prog:multi_task_with_linearized_ic_and_no_contract_bounds}, such that $\profit(y^\epsilon,\nu^\epsilon) = \profit(y^\epsilon,\sigma^\epsilon)$.
    Such a solution is feasible for Program~\ref{prog:multi_task_lp_with_bounded_contracts}, as we have:
    \begin{equation*}
        \nu^\epsilon_i(t|\theta) = y^\epsilon(t|\theta)\sigma^\epsilon_i(t|\theta) \le y^\epsilon(t|\theta)B,
    \end{equation*}
    where the equality holds by construction, and the inequality by assumption.
    Let $(y,\nu,d)$ be an optimal solution to Program~\ref{prog:multi_task_lp_with_bounded_contracts}, so that:
    \begin{equation*}
        \profit(y,\nu) \ge \profit(y^\epsilon,\nu^\epsilon) \ge (1-\epsilon)\SUP.
    \end{equation*}
    Observe that any feasible solution of Program~\ref{prog:multi_task_lp_with_bounded_contracts} is also feasible for Program~\ref{prog:multi_task_with_linearized_ic_and_no_contract_bounds}, as Constraint~\eqref{seq:lp_bounded_contracts_constraint_m} implies Constraint~\eqref{seq:constraint_no_critical_pairs}.
    By Lemma~\ref{lem:prog_multi_task_lienar_ic}, $(y,\nu)$ corresponds to a valid mechanism whose profit is at least $(1-\epsilon)\SUP$, concluding the proof.
\end{proof}

% \bollo{Da qualche parte servirà un teorema finale "For any constant number of agents $N$ and every $\epsilon >0$, it is possible to compute an $\epsilon$-optimal mechanism in time polynomial in the instance size and $\log(1/\epsilon)$.}
% \bollo{Overall mi serve un modo formale per stabilire quell'instance size con $N$ costante, perchè non è semplicemente polinomiale in $M$ e $K$ ma anche nel numero di bit. Però senza N costante non è polinomiale nel size dell'istanza, perchè le allocazioni e $B$ vanno come $M^N$ e l'istanza solo $K^N$ (se il prior è correlato e esplicito).}

\subsection{An upper bound on the contracts}

In the remainder of this Section, we show how to compute the upper bound $B$ to be employed in Program~\ref{prog:multi_task_lp_with_bounded_contracts}.
Specifically, we show that there always exists an $\epsilon$-optimal mechanism whose contracts are bounded by some $B$ with a binary representation of length polynomial in the instance size and $\log(\nicefrac{1}{\epsilon})$, provided that $N$ is constant.
This allows us to solve Program~\ref{prog:multi_task_lp_with_bounded_contracts} and recover this $\epsilon$-optimal solution in polynomial time.
W.l.o.g. we will assume that $\SUP > 0$, as when $\SUP=0$ any $B>0$ is a valid bound.

At a high level, our proof works as follows.
We build an LP, call it LP1, by removing the non-linear constraint from Program~\ref{prog:multi_task_with_linearized_ic_and_no_contract_bounds}.
The optimal solution $(y,\nu,d)$ of LP1 has value at least $\SUP$, albeit $(y,\nu)$ may \emph{not} be the $\nu$-representation of a mechanism.
We then build a new LP, call it $\text{LP}(\bar t, \bar \theta)$, for each $\bar t \in \tSet$ and $\bar\theta \in \typeSet$, by changing the objective function to maximize $y(\bar t | \bar\theta)$.
Suppose for the moment that each $\text{LP}(\bar t, \bar \theta)$ has an optimal solution $(y^{\bar t, \bar \theta},\nu^{\bar t, \bar \theta},d^{\bar t, \bar \theta})$ with $y(\bar t | \bar\theta)>0$ and positive profit.
Then, we build $(y^\epsilon,\nu^\epsilon,d^\epsilon)$ as the convex combination of $(y,\nu,d)$, with weight $(1-\epsilon)$, and the solutions $\{(y^{\bar t, \bar \theta},\nu^{\bar t, \bar \theta},d^{\bar t, \bar \theta})\}_{\bar t, \bar \theta}$.
The new solution satisfies Constraint~\eqref{seq:constraint_no_critical_pairs}, since all variables $y^\epsilon(\bar t | \bar\theta)$ are positive thanks to the contributions from the solution $y^{\bar t, \bar \theta}(\bar t | \bar\theta)$.
It also belongs to $\Psi$ as it is a convex combination of solutions in this set.
Therefore, it is feasible for Program~\ref{prog:multi_task_with_linearized_ic_and_no_contract_bounds}.
Noticing that the variables in the solutions to LP1 and $\text{LP}(\bar t, \bar \theta)$ can be represented by polynomially many bits (see, \emph{e.g.},~\cite{bertsimas1997introduction}), we recover an upper bound $B$ on the contracts of the mechanism  $(y^\epsilon,\nu^\epsilon)$.
Its profit must also be at least $(1-\epsilon)\profit(y,\nu) \ge (1-\epsilon)\SUP$, as the profit is linear in $y$ and $\nu$.
We have therefore built an $\epsilon$-optimal mechanism whose contracts have polynomially many bits.

We now discuss the formal argument.
The main difference with respect to the outline above is that we cannot assume the existence of $(y^{\bar t, \bar \theta},\nu^{\bar t, \bar \theta},d^{\bar t, \bar \theta})$ for every $(\bar t,\bar\theta) \in \tSet \times \typeSet$.
Therefore, let us denote by $\Splus \subseteq \tSet \times \typeSet$ the set of pairs $(\bar t,\bar\theta)$ such that there exists a BIC and IR mechanism $(y,\sigma)$ with $y(\bar t|\bar\theta) >0$ and non-negative profit $\profit(y,\sigma) \ge 0$.
Denote by $\Sminus = \tSet \times \typeSet \setminus \Splus$ the other pairs.

Now we define the following linear program, which plays the role of LP1:
\begin{subequations}
\label{prog:multi_task_lp_construction} 
\begin{align}
    \max_{y,\nu, d}\;\; &\profit(y,\nu) \\
    \text{s.t.} \;\; &(y, \nu, d) \in \Psi \\
    &y(\bar t|\bar \theta)=\nu_i(\bar t| \bar \theta) = 0 &\forall i \in [N], (\bar t, \bar\theta) \in \Sminus \label{seq:constr_lp_construction}. 
\end{align}
\end{subequations}
Program~\ref{prog:multi_task_lp_construction} is built from Program~\ref{prog:multi_task_with_linearized_ic_and_no_contract_bounds} in two steps.
First, we impose that, for every $(\bar t, \bar\theta) \in \Sminus$, the mechanism does not assign allocation $\bar t$ given type report $\bar\theta$.
This does not reduce the value of the program, as the $\nu$-representation of every mechanism with positive profit satisfies the new constraint.
Second, we remove the non-linear constraint, so that Program~\ref{prog:multi_task_lp_construction} is an LP and its value is at least $\SUP$.
We formalize this property in the following lemma.
\begin{restatable}{lemma}{LpOneAtLeastSup}
    \label{lem:lp_one_at_least_sup}
    If $\SUP > 0$, then the value of Program~\ref{prog:multi_task_lp_construction} is at least $\SUP$.
\end{restatable}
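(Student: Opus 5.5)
The plan is to show that every BIC and IR mechanism with positive profit yields a feasible solution of Program~\ref{prog:multi_task_lp_construction} with the same value. Taking the supremum over such mechanisms then bounds the value of the program from below by $\SUP$.

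Since $\SUP > 0$, for every $\delta \in (0,\SUP)$ there is a BIC and IR mechanism $(y,\sigma)$ with $\profit(y,\sigma) \ge \SUP - \delta > 0$. By Lemma~\ref{lem:prog_multi_task_lienar_ic}, there is an assignment $d$ such that $(y,\nu,d)$ is feasible for Program~\ref{prog:multi_task_with_linearized_ic_and_no_contract_bounds}. In particular $(y,\nu,d) \in \Psi$, and its value is $\profit(y,\nu) = \profit(y,\sigma)$.

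The only remaining requirement is Constraint~\eqref{seq:constr_lp_construction}. Fix $(\bar t,\bar\theta) \in \Sminus$. If $y(\bar t|\bar\theta) > 0$, then $(y,\sigma)$ would itself be a BIC and IR mechanism with $y(\bar t|\bar\theta)>0$ and profit $\profit(y,\sigma) > 0 \ge 0$. By the definition of $\Splus$, this would place $(\bar t,\bar\theta)$ in $\Splus$, a contradiction. Hence $y(\bar t|\bar\theta) = 0$. Because $(y,\nu)$ is the $\nu$-representation of a mechanism, $\nu_i(\bar t|\bar\theta) = y(\bar t|\bar\theta)\sigma_i(\bar t|\bar\theta) = 0$ for every $i \in [N]$, so the constraint holds. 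The tuple $(y,\nu,d)$ is therefore feasible for Program~\ref{prog:multi_task_lp_construction}, and its value is at least $\SUP-\delta$.

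Letting $\delta \to 0$, the supremum of Program~\ref{prog:multi_task_lp_construction} is at least $\SUP$. The program is an LP with the zero solution feasible (it lies in $\Psi$ and satisfies the equalities), and its objective is bounded: IR gives $\nu_i(t|\theta)p_{\theta_i,t_i} \ge 0$, so the objective is at most $\sum_\theta \mu_\theta \sum_t y(t|\theta) N \le N$. Hence the maximum is attained and is at least $\SUP$.

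I expect no real obstacle here. The only subtle point is that the definition of $\Splus$ uses non-negative profit while we work with a mechanism of strictly positive profit. This is precisely why the hypothesis $\SUP > 0$ is needed: it supplies mechanisms with positive profit, which the definition of $\Splus$ then excludes from using any pair in $\Sminus$.
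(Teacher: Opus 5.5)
Your proof is correct and follows essentially the same route as the paper. In both, a mechanism with non-negative profit cannot use any pair in $\Sminus$ by the very definition of $\Splus$, so the $\nu$-representation (with the $d$ from Lemma~\ref{lem:prog_multi_task_lienar_ic}) of any near-optimal mechanism is feasible for Program~\ref{prog:multi_task_lp_construction}. Your $\delta$-argument and your check that the LP maximum is attained (objective at most $N$) only make explicit what the paper leaves implicit.

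One small correction to your closing remark: $\Splus$ is defined via non-negative profit, and the mechanism that never assigns a task has profit $0$. So the same argument works verbatim when $\SUP = 0$, and the hypothesis $\SUP>0$ is not what makes the proof go through.
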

\begin{proof}
    By Lemma~\ref{lem:prog_multi_task_lienar_ic}, Program~\ref{prog:multi_task_with_linearized_ic_and_no_contract_bounds} has value at least $\SUP$.
    Consider any feasible solution $(y,\nu,d)$ for such a program with value strictly larger than zero (which exists since $\SUP > 0$).
    We will prove it is also feasible for Program~\ref{prog:multi_task_lp_construction}.
    Since the objective function of the two programs is the same, this proves the statement.

    In order to prove that $(y,\nu,d)$ is a feasible solution of Program~\ref{prog:multi_task_lp_construction}, we observe that it belongs to $\Psi$, as it is feasible for Program~\ref{prog:multi_task_with_linearized_ic_and_no_contract_bounds}.
    Furthermore, by Lemma~\ref{lem:prog_multi_task_lienar_ic}, $(y,\nu)$ is the $\nu$-representation of a valid mechanism with positive profit.
    Consider any pair $(\bar t,\bar \theta) \in \Sminus$.
    By definition of $\Sminus$, no valid mechanism  $(\tilde y,\tilde \nu)$ can have $\tilde y(\bar t|\bar \theta) > 0$ and positive profit.
    Hence, it must be the case that $y(\bar t|\bar \theta) = 0$, and since it satisfies Constraint~\eqref{seq:constraint_no_critical_pairs}, we also have $\nu_i(\bar t | \bar \theta)=0$ for every agent $i \in [N]$.
    We conclude that $(y,\nu,d)$ satisfies Constraint~\eqref{seq:constr_lp_construction} and is a feasible solution of Program~\ref{prog:multi_task_lp_construction}.
    % \bollo{It actually holds for $\SUP=0$ too, since in that case $\Sminus$ consists of all the pairs, the optimal solution of the LP is the mechanisms that never gives the task.}
\end{proof}

We now define the programs $\text{LP}(\bar t, \bar \theta)$, for every pair $(\bar t,\bar\theta) \in \Splus$, as follows:
\begin{subequations}
\label{prog:multi_task_lp_t_bar_y_bar} 
\begin{align}
    \max_{y^{\bar t, \bar \theta},\nu^{\bar t, \bar \theta},d^{\bar t, \bar \theta}}\;\; &y^{\bar t, \bar \theta}(\bar t | \bar \theta) \\
    \text{s.t.} \;\; &(y^{\bar t, \bar \theta},\nu^{\bar t, \bar \theta},d^{\bar t, \bar \theta}) \in \Psi \\
    &\profit(y^{\bar t, \bar \theta},\nu^{\bar t, \bar \theta}) \ge 0 \label{seq:lp_bar_non_negative_rev} \\
    &y^{\bar t, \bar \theta}(t|\theta)=\nu^{\bar t, \bar \theta}_i(t| \theta) = 0 &\forall i \in [N], (t, \theta) \in \Sminus
\end{align}
\end{subequations}

We are now ready to provide an upper bound $B$ on the magnitude of the contracts of an $\epsilon$-optimal valid mechanism.
The proof is constructive, \emph{i.e.}, it provides the value of $B$.
Furthermore, the number of bits required to represent $B$ is polynomial in the instance size and $\log(1/\epsilon)$ when $N$ is constant.
Thanks to Lemma~\ref{lem:multi_task_lp}, we can thus build an FPTAS when $N$ is constant.
% \bollo{Non è esattamente $M$ poinomiale, è la sua rappresentazione ad esserlo}
\begin{restatable}{theorem}{BoundOnContracts}
    For every $\epsilon>0$ and every constant number of agents $N$, there exists an $\epsilon$-optimal mechanism whose contracts are upper bounded by some $B>0$, whose size is polynomial in the instance size and $\nicefrac{1}{\epsilon}$.
    Furthermore, an $\epsilon$-optimal mechanism can be computed in time polynomial in the instance size and $\log(1/\epsilon)$.
\end{restatable}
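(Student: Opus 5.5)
I will follow the outline sketched before the statement, making each step precise.

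\textbf{Step 1: reference solutions.} First I would solve Program~\ref{prog:multi_task_lp_construction} to obtain an optimal vertex solution $(y^*,\nu^*,d^*)$. Then, for every $(\bar t,\bar\theta)\in\Splus$, I would solve $\text{LP}(\bar t,\bar\theta)$ and take an optimal vertex solution $(y^{\bar t,\bar\theta},\nu^{\bar t,\bar\theta},d^{\bar t,\bar\theta})$.

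These programs are feasible and behave as follows. By Lemma~\ref{lem:lp_one_at_least_sup}, Program~\ref{prog:multi_task_lp_construction} has value at least $\SUP$. Its value is also at most $N$, since $y-\nu\le y$ and the probabilities sum to one, so it is bounded. For each $(\bar t,\bar\theta)\in\Splus$, there is by definition a BIC and IR mechanism with $y(\bar t|\bar\theta)>0$ and nonnegative profit. Its $\nu$-representation, completed with $d$ as in Lemma~\ref{lem:prog_multi_task_lienar_ic}, lies in $\Psi$. It also satisfies the zero constraints on $\Sminus$: any pair used with positive probability by a nonnegative-profit mechanism belongs to $\Splus$ by definition, and the constraint $y(t|\theta)>0\vee\nu_i(t|\theta)=0$ then forces $\nu=0$ wherever $y=0$. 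Hence the optimum $\gamma_{\bar t,\bar\theta}$ of $\text{LP}(\bar t,\bar\theta)$ is strictly positive, and it is at most $1$.

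Since $N$ is constant, these LPs have $\mathrm{poly}(M^N K^N)$ variables and constraints, which is polynomial in the instance size. Standard vertex bounds then say every entry of $(y^*,\nu^*)$ and $(y^{\bar t,\bar\theta},\nu^{\bar t,\bar\theta})$ is a rational with encoding length polynomial in the input (see \cite{bertsimas1997introduction}). In particular there is $2^{-\mathrm{poly}}\le\gamma_{\bar t,\bar\theta}$, and every $\nu$-entry is at most $2^{\mathrm{poly}}$.

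\textbf{Step 2: convex combination.} Let $S=|\Splus|$ and define
\[
(y^\epsilon,\nu^\epsilon,d^\epsilon)=(1-\epsilon)(y^*,\nu^*,d^*)+\frac{\epsilon}{S}\sum_{(\bar t,\bar\theta)\in\Splus}(y^{\bar t,\bar\theta},\nu^{\bar t,\bar\theta},d^{\bar t,\bar\theta}).
\]
This point lies in $\Psi$, because $\Psi$ is a polyhedron and hence convex. It also satisfies Constraint~\eqref{seq:constraint_no_critical_pairs}, by cases:
\begin{itemize}
\item if $(t,\theta)\in\Sminus$, all components vanish;
\item if $(t,\theta)\in\Splus$, then $y^\epsilon(t|\theta)\ge\frac{\epsilon}{S}\gamma_{t,\theta}>0$.
\end{itemize}
So Lemma~\ref{lem:prog_multi_task_lienar_ic} turns it into a BIC and IR mechanism. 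Its profit is linear, hence at least $(1-\epsilon)\SUP+\frac{\epsilon}{S}\sum(\cdot)\ge(1-\epsilon)\SUP$, using Constraint~\eqref{seq:lp_bar_non_negative_rev}.

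\textbf{Step 3: bounding the contracts.} For $(t,\theta)\in\Splus$ I would bound
\[
\sigma_i(t|\theta)=\frac{\nu^\epsilon_i(t|\theta)}{y^\epsilon(t|\theta)}\le \frac{\max\text{ entry of }\nu}{\frac{\epsilon}{S}\gamma_{t,\theta}}\le \frac{S\,2^{\mathrm{poly}}}{\epsilon}=:B.
\]
Here $S\le M^{N}$ times $K^N$, up to the $(M+1)^N$ factor. The bit length of $B$ is therefore polynomial in the instance size and $\log(1/\epsilon)$. Its magnitude is polynomial in $1/\epsilon$ times a $2^{\mathrm{poly}}$ factor, which I read as the intended meaning of ``size''.

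\textbf{Step 4: algorithm and main obstacle.} The algorithm solves Program~\ref{prog:multi_task_lp_with_bounded_contracts} with this $B$; by Lemma~\ref{lem:multi_task_lp} the result is $\epsilon$-optimal. This LP has polynomially many variables and constraints, and its coefficients, including $B$, have $\mathrm{poly}(\text{size},\log(1/\epsilon))$ bits, so it is solved in polynomial time.

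The main obstacle is that $\Splus$ need not be computable, yet Step 3 only needs the existence of $B$. So I would define $B$ from the a priori vertex bounds alone, without knowing $\Splus$: take $\gamma\ge 2^{-L}$ and $\nu\le 2^{L}$, where $L$ is the standard bit bound for LPs of this dimension and coefficient size, and set $S$ to the total number of pairs. Then $B=(M+1)^N K^N 2^{2L}/\epsilon$ is explicitly computable.

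The other delicate point is justifying that $\Splus$ is nonempty whenever $\SUP>0$, so that Step 2 is meaningful. This follows because a positive-profit mechanism uses some pair with positive probability.
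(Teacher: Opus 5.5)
Your proposal is correct and follows the paper's proof essentially step for step. You solve Program~\ref{prog:multi_task_lp_construction} and each $\text{LP}(\bar t,\bar\theta)$ for $(\bar t,\bar\theta)\in\Splus$, take the convex combination with weights $1-\epsilon$ and $\epsilon/S$, and bound $\nu^\epsilon/y^\epsilon$ by roughly $S G^2/\epsilon$ using vertex bit-length bounds. Your extra care (taking vertex optima, justifying that each $\text{LP}(\bar t,\bar\theta)$ has strictly positive value, and defining $B$ from a priori bit bounds so that it is computable without knowing $\Splus$) only tightens points the paper leaves implicit.
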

\begin{proof}
    We will provide the upper bound $B$ on the magnitude of the contracts. 
    Then, the existence of an FPTAS for constant $N$ follows from Lemma~\ref{lem:multi_task_lp} and the fact that Program~\ref{prog:multi_task_lp_with_bounded_contracts} has polynomially many variables and constraints, and all the coefficients (including $B$) can be represented by polynomially many bits. 
    
    In the following we assume $\text{SUP}>0$, as otherwise there exists an optimal mechanism that never assigns any task.
    Let $(y^{\bar t, \bar \theta},\nu^{\bar t, \bar \theta},d^{\bar t, \bar \theta})$ be the optimal solution to $\text{LP}(\bar t, \bar\theta)$, for each $(\bar t,\bar \theta) \in \Splus$.
    Denote by $S$ the number of such solutions, and notice that $S \ge 1$, as otherwise there exists no valid mechanism that collects positive profit.
    Furthermore, by the definition of $\Splus$ and Lemma~\ref{lem:prog_multi_task_lienar_ic}, it must be the case that $y^{\bar t, \bar \theta}(\bar t|\bar\theta) > 0$ strictly.
    % Indeed, from any valid mechanism $(y',\sigma')$ we can build $(y',\sigma',d') \in \Psi$ (Lemma~\ref{lem:prog_multi_task_lienar_ic}).
    % Taking a mechanism with positive profit (which must have $y'$)
    Let also $(y,\nu,d)$ be the optimal solution to Program~\ref{prog:multi_task_lp_construction}.

    We build an $\epsilon$-optimal solution $(y^\epsilon,\nu^\epsilon,d^\epsilon)$ to Program~\ref{prog:multi_task_with_linearized_ic_and_no_contract_bounds} as a convex combination between $(y,\nu,d)$, with weight $(1-\epsilon$), and $\{(y^{\bar t, \bar \theta},\nu^{\bar t, \bar \theta},d^{\bar t, \bar \theta})\}_{(\bar t,\bar \theta) \in \Splus}$, with weight $\nicefrac{\epsilon}{S}$ each. 
    Specifically:
    \begin{align*}
        y^\epsilon(t|\theta) &= (1-\epsilon) y(t|\theta) +\frac{\epsilon}{S} \sum_{(\bar t,\bar \theta) \in \Splus} y^{\bar t,\bar \theta}(t|\theta)    &\forall t \in \tSet, \theta \in \typeSet, \\
        \nu_i^\epsilon(t|\theta) &= (1-\epsilon) \nu_i(t|\theta) +\frac{\epsilon}{S} \sum_{(\bar t,\bar \theta) \in \Splus} \nu_i^{\bar t,\bar \theta}(t|\theta)    &\forall t \in \tSet, \theta \in \typeSet, i \in [N], \\
        d_i^\epsilon(t,\hat\theta_i|\theta) &= (1-\epsilon) d_i(t,\hat\theta_i|\theta) +\frac{\epsilon}{S} \sum_{(\bar t,\bar \theta) \in \Splus} d_i^{\bar t,\bar \theta}(t,\hat\theta_i|\theta)    &\forall t \in \tSet, \theta \in \typeSet, i \in [N], \hat\theta_i \in \typeSet_i.
    \end{align*}
    We will now prove that it is feasible, $\epsilon$-optimal, and its contracts have polynomial encoding length.
    
    \paragraph{Feasibility} This solution belongs to the convex set $\Psi$, as it is a convex combination of solutions within it.
    It also satisfies Constraint~\ref{seq:constraint_no_critical_pairs}, since for every $(\bar t,\bar \theta) \in \Splus$ the contribution from $y^{\bar t,\bar \theta}(\bar t|\bar \theta)$ ensures that $y^\epsilon(\bar t|\bar \theta) > 0$.
    Formally:
    \begin{align*}
        y^\epsilon(\bar t|\bar \theta) &= (1-\epsilon) y(\bar t|\bar \theta) +\frac{\epsilon}{S} \sum_{(\tilde t,\tilde \theta) \in \Splus} y^{\tilde t,\tilde \theta}(\bar t|\bar \theta) \ge \frac{\epsilon}{S} y^{\bar t,\bar \theta}(\bar t|\bar \theta) >0,
    \end{align*}
    where the first inequality holds since the allocation probabilities are non-negative, and the strict inequality by definition of $\Splus$.
    For the remaining pairs $(\bar t,\bar \theta) \in \Sminus$, we have $y^\epsilon(\bar t|\bar\theta) = \nu^\epsilon_i(\bar t|\bar \theta)=0$ as imposed in both Programs~\ref{prog:multi_task_lp_construction} and~\ref{prog:multi_task_lp_t_bar_y_bar}.
    % As a result, $(y^\epsilon,\nu^\epsilon,d^\epsilon)$ is a feasible solution to Program~\ref{prog:multi_task_with_linearized_ic_and_no_contract_bounds}.
    As a result, $(y^\epsilon,\nu^\epsilon,d^\epsilon)$ is feasible for Program~\ref{prog:multi_task_with_linearized_ic_and_no_contract_bounds}.

    \paragraph{Optimality} By linearity of the objective function $\profit(y^\epsilon,\nu^\epsilon)$, we have:
    \begin{equation*}
        \profit(y^\epsilon,\nu^\epsilon) = (1-\epsilon)\profit(y,\nu) +\frac{\epsilon}{S}\sum_{(\bar t,\bar \theta) \in \Splus}\profit(y^{\bar t,\bar \theta},\nu^{\bar t,\bar \theta}) \ge (1-\epsilon)\profit(y,\nu) \ge (1-\epsilon)\SUP,
    \end{equation*}
    where the first inequality holds since Program~\ref{prog:multi_task_lp_t_bar_y_bar} imposes non-negative profit, and the second as $\profit(y,\nu) \ge \SUP$ by Lemma~\ref{lem:lp_one_at_least_sup}.

    \paragraph{Bound on the contracts}
    Both Program~\ref{prog:multi_task_lp_construction} and Program~\ref{prog:multi_task_lp_t_bar_y_bar} have polynomially many variables and constraints when $N$ is constant.
    Hence, there exists some $G > 0$, represented by polynomially many bits, such that every variable of their optimal solution is either zero or in the range $[1/G,G]$ (see, e.g.~\cite{bertsimas1997introduction}).
    By Lemma~\ref{lem:prog_multi_task_lienar_ic} and the arguments provided above, $(y^\epsilon,\nu^\epsilon)$ is the $\nu$-representation of a BIC and IR $\epsilon$-optimal mechanism.
    For every pair $(t,\theta) \in \Splus$ and agent $i \in [N]$, the contract $\sigma^\epsilon_i(t|\theta)$ can be upper bounded as:
    \begin{align*}
        \sigma^\epsilon_i(t|\theta) &= \frac{\nu^\epsilon_i(t|\theta)}{y^\epsilon(t|\theta)} \\
        &=\frac{(1-\epsilon) \nu_i(t|\theta) +\frac{\epsilon}{S} \sum_{(\bar t,\bar \theta) \in \Splus} \nu_i^{\bar t,\bar \theta}(t|\theta)}{(1-\epsilon) y(t|\theta) +\frac{\epsilon}{S} \sum_{(\bar t,\bar \theta) \in \Splus} y^{\bar t,\bar \theta}(t|\theta)} \\
        &\le \frac{(1-\epsilon)G +\frac{\epsilon}{S}SG}{(1-\epsilon) y(t|\theta) +\frac{\epsilon}{S} \sum_{(\bar t,\bar \theta) \in \Splus} y^{\bar t,\bar \theta}(t|\theta)} \\
        &= \frac{G}{(1-\epsilon) y(t|\theta) +\frac{\epsilon}{S} \sum_{(\bar t,\bar \theta) \in \Splus} y^{\bar t,\bar \theta}(t|\theta)} \\
        &\le \frac{G}{\frac{\epsilon}{SG}} = \frac{SG^2}{\epsilon} \\
        &\le \frac{(M+1)^N K^N G^2}{\epsilon} \eqqcolon B,
    \end{align*}
    where the first inequality holds since every expected contract is at most $G$, the second as $y^{t,\theta}(t|\theta) \ge 1/G$ while all other allocation probabilities are non-negative, and the last one since $S \le (M+1)^N K^N$.
    For the remaining pairs $(t,\theta) \in \Sminus$, we have $\sigma^\epsilon_i(t|\theta)=0$, concluding the proof.
\end{proof}
    
    %\section*{ACKNOWLEDGMENTS}
    
    \newpage
    \printbibliography
    
    \newpage
    \appendix
    
    \section*{Appendix}
    \section{Omitted Proofs from Section~\ref{sec:single_task}}
\label{app:single_task}
\MonotoneNecessaryLemma*
\begin{proof}    
    % \bollo{Basically: IC with max => IC without max => monotone in v}
    From Equation~\eqref{eq:ic_with_critical_value}, the BIC mechanism $(x,\beta)$ satisfies:
    \begin{align*}
        \beta_{i,\theta_i} -x_{i,\theta_i}v_{\theta_i} &\ge \hspace{-0.2cm}\sum_{\theta_{-i} \in \typeSet_{-i}}\hspace{-0.2cm} \mu_{\theta_{-i}} [\beta_{i,\hat\theta_i, \theta_{-i}}  -x_{i,\hat\theta_i, \theta_{-i}}v_{\theta_i}]_+ \ge \hspace{-0.4cm}\sum_{\theta_{-i} \in \typeSet_{-i}}\hspace{-0.2cm} \mu_{\theta_{-i}} \beta_{i,\hat\theta_i, \theta_{-i}}  -x_{i,\hat\theta_i, \theta_{-i}}v_{\theta_i} = \beta_{i,\hat\theta_i} -x_{i,\hat\theta_i}v_{\theta_i}, \\
        \beta_{i,\hat\theta_i} -x_{i,\hat\theta_i}v_{\hat\theta_i} &\ge \hspace{-0.2cm}\sum_{\theta_{-i} \in \typeSet_{-i}}\hspace{-0.2cm} \mu_{\theta_{-i}} [\beta_{i,\theta_i, \theta_{-i}}  -x_{i,\theta_i, \theta_{-i}}v_{\hat\theta_i}]_+ \ge \hspace{-0.4cm}\sum_{\theta_{-i} \in \typeSet_{-i}}\hspace{-0.2cm} \mu_{\theta_{-i}} \beta_{i,\theta_i, \theta_{-i}}  -x_{i,\theta_i, \theta_{-i}}v_{\hat\theta_i} = \beta_{i,\theta_i} -x_{i,\theta_i}v_{\hat\theta_i}.
    \end{align*}
    Summing up the two inequalities above, we have:
    \begin{equation*}
        \beta_{i,\theta_i} -x_{i,\theta_i}v_{\theta_i} +\beta_{i,\hat\theta_i} -x_{i,\hat\theta_i}v_{\hat\theta_i} \ge \beta_{i,\hat\theta_i} -x_{i,\hat\theta_i}v_{\theta_i} +\beta_{i,\theta_i} -x_{i,\theta_i}v_{\hat\theta_i}.
    \end{equation*}
    Rearranging and simplifying:
    \begin{equation*}
        x_{i,\theta_i}(v_{\hat\theta_i}-v_{\theta_i}) \ge x_{i,\hat\theta_i}(v_{\hat\theta_i}-v_{\theta_i}).
    \end{equation*}
    Dividing both sides by $v_{\hat\theta_i}-v_{\theta_i}$ (notice that $v_{\hat\theta_i}-v_{\theta_i} > 0$), we have $x_{i,\theta_i} \ge x_{i,\hat\theta_i}$.
\end{proof}

\BetaForSingleTask*
\begin{proof}
    As the mechanism is regular, we can employ Lemma~\ref{lem:regular_ic_and_ir} to prove that it is IR and BIC by looking at the ex-interim values only. In the following we let $i \in [N]$.

    First, we consider the IR constraints of agent $i$ (Equation~\eqref{eq:regular_ir}).
    The IR constraint of type $\lastType_i$ is satisfied by definition, see Equation~\eqref{eq:beta_expected_profit_max}.
    Proceeding by induction over the types $\theta_i < \lastType_i$, we have:
    \begin{align*}
        \beta_{i,\theta_i} &= \beta_{i,\theta_i+1} +(x_{i,\theta_i}-x_{i,\theta_{i+1}})v_{\theta_i} \\& \ge x_{i,\theta_i+1} v_{\theta_{i+1}} +(x_{i,\theta_i}-x_{i,\theta_{i+1}})v_{\theta_i} \\&= x_{i,\theta_i+1}(v_{\theta_i+1}-v_{\theta_i}) +x_{i,\theta_i} v_{\theta_i} \ge x_{i,\theta_i}v_{\theta_i},
    \end{align*}
    where the first inequality holds by induction and the last one since $v_{\theta_i+1} \ge v_{\theta_i}$. 
    As a result, all IR constraints are satisfied for this choice of $\beta$.
    
    We now focus on the BIC constraints (Equation~\eqref{eq:regular_ic}).
    First, we observe that an agent  of type $\theta_i < \lastType_i$ is not incentivized to misreport $\theta_i+1$.
    Indeed, Equation~\eqref{eq:regular_ic} imposes:
    \begin{equation*}
        \beta_{i,\theta_i} \ge \beta_{i,\theta_i+1} +(x_{i,\theta_i} -x_{i,\theta_i+1}) v_{\theta_i},
    \end{equation*}
    that is clearly satisfied by Equation~\ref{eq:beta_expected_profit_max} by definition.
    Consider now a type $\theta_i <\lastType_i$ and a successive type $\theta_i+k$ for some positive integer $k \ge 1$.
    We proceed by induction.
    Suppose by induction that $$\beta_{i,\theta_i} -x_{i,\theta_i}v_{\theta_i} \ge \beta_{i,\theta_i+k-1} -x_{i,\theta_i+k-1} v_{\theta_i},$$ which we proved for the base case $k=1$.
    Observe that:
    \begin{align*}
        D &\coloneqq (\beta_{i,\theta_i+k-1} -x_{i,\theta_i+k-1} v_{\theta_i}) -(\beta_{i,\theta_i+k} -x_{i,\theta_i+k} v_{\theta_i})  \\
        &= (\beta_{i,\theta_i+k-1} -x_{i,\theta_i+k-1} v_{\theta_i+k-1}) - (\beta_{i,\theta_i+k} -x_{i,\theta_i+k} v_{\theta_i+k-1}) \\
        &\hspace{3cm}+x_{i,\theta_i+k-1} v_{\theta_i+k-1} -x_{i,\theta_i+k} v_{\theta_i+k-1} -x_{i,\theta_i+k-1} v_{\theta_i} +x_{i,\theta_i+k} v_{\theta_i} \\
        &\ge x_{i,\theta_i+k-1} v_{\theta_i+k-1} -x_{i,\theta_i+k} v_{\theta_i+k-1} -x_{i,\theta_i+k-1} v_{\theta_i} +x_{i,\theta_i+k} v_{\theta_i} \\
        &= x_{i,\theta_i+k-1} (v_{\theta_i+k-1} -v_{\theta_i}) -x_{i,\theta_i+k}(v_{\theta_i+k-1} -v_{\theta_i}) \\
        &= (x_{i,\theta_i+k-1}-x_{i,\theta_i+k}) (v_{\theta_i+k-1} -v_{\theta_i}) \ge 0,
    \end{align*} 
    where the first inequality holds since $\theta_i+k-1$ is not incentivized to misreport its successor, and the last one holds because $x$ is monotone ($x_{i,\theta_i+k-1} \ge x_{i,\theta_i+k}$) and types are ordered by critical value ($v_{\theta_i+k-1} \ge v_{\theta_i}$).
    As a result:
    \begin{equation*}
        \beta_{i,\theta_i} -x_{i,\theta_i}v_{\theta_i} \ge \beta_{i,\theta_i+k-1}-x_{i,\theta_i+k-1}v_{\theta_i} \ge \beta_{i,\theta_i+k} -x_{i,\theta_i+k} v_{\theta_i},
    \end{equation*}
    where the first step is due to the inductive hypothesis and the second holds since $D \ge 0$.
    Consequently, an agent of type $\theta_i$ is not incentivized to misreport $\theta_i+k$.
    
    We now focus on the remaining BIC constraints, namely those preventing a type $\theta_i> \firstType_i$ from misreporting a previous type $\theta_i -k$. 
    Observe that $\beta_{i,\firstType_i} \ge \beta_{i,\firstType_i+1} \ge \dots \ge \beta_{i,\lastType_i} = x_{i,\lastType_i}v_{\lastType_i} \ge 0$.
    Consider the case $k=1$.
    The IC constraint imposes:
    \begin{equation*}
        \beta_{i,\theta_i} \ge \beta_{i,\theta_i-1} +(x_{i,\theta_i} -x_{i,\theta_i-1}) v_{\theta_i},
    \end{equation*}
    and we observe that:
    \begin{align*}
        \beta_{i,\theta_i-1} +(x_{i,\theta_i} -x_{i,\theta_i-1}) v_{\theta_i} &= \beta_{i,\theta_i} +(x_{i,\theta_i-1}-x_{i,\theta_i})v_{\theta_i-1} +(x_{i,\theta_i} -x_{i,\theta_i-1}) v_{\theta_i} \\
        &\le \beta_{i,\theta_i} +(x_{i,\theta_i-1}-x_{i,\theta_i})v_{\theta_i-1} +(x_{i,\theta_i} -x_{i,\theta_i-1}) v_{\theta_i-1} = \beta_{i,\theta_i},
    \end{align*}
    hence the BIC constraint is satisfied.
    For $k>1$ we proceed by induction: if the constraint is satisfied for $k$, then for $k+1$ we have:
    \begin{align*}
        \beta_{i,\theta_i-k-1} +(x_{i,\theta_i} -x_{i,\theta_i-k-1}) v_{\theta_i} &= \beta_{i,\theta_i-k} +(x_{i,\theta_i-k-1} -x_{i,\theta_i-k})v_{\theta_i-k-1} +(x_{i,\theta_i} -x_{i,\theta_i-k-1}) v_{\theta_i} \\
        &\le \beta_{i,\theta_i-k} +(x_{i,\theta_i-k-1} -x_{i,\theta_i-k})v_{\theta_i} +(x_{i,\theta_i} -x_{i,\theta_i-k-1}) v_{\theta_i} \\
        &= \beta_{i,\theta_i-k} +(x_{i,\theta_i} -x_{i,\theta_i-k})v_{\theta_i} \\
        &\le \beta_{i,\theta_i},
    \end{align*}
    where the first equality is due to Equation~\ref{eq:beta_expected_profit_max}, the first inequality since $v_{\theta_i-k-1} \le v_{\theta_i}$, and the last one by the inductive hypothesis.
    As a result, all the BIC constraints are satisfied, and the mechanism is both BIC and IR.

    Finally, we notice that the IR constraint for type $\lastType_i$ is binding.
    Therefore, one cannot design an IR mechanism with a smaller value of $\beta_{i,\lastType_i}$.
    Similarly, all the BIC constraints between types $\theta_i < \lastType_i$ and $\theta_i+1$ are binding, preventing us from reducing the ex-interim payments $\beta_{i,\theta_i}$ for these types.
    We conclude that this choice of $\beta$ minimizes the payments among all the BIC and IR mechanisms.
    By Equation~\eqref{eq:profit_in_beta_and_exinterim}, it also maximizes the profit, concluding the proof.    
\end{proof}
    \section{Ironing}
\label{app:ironing}

Thanks to Lemma~\ref{lem:regular_optimal}, Lemma~\ref{lem:total_monotonicity}, Lemma~\ref{lem:beta_given_x}, and Lemma~\ref{lem:virtual_utilities}, the allocation rule of an optimal mechanism is the solution to Program~\ref{prog:alloc_virtual_utilites}.
The value of the solution corresponds to the profit of the mechanism, and its contracts are computed 
according to Equation~\eqref{eq:beta_expected_profit_max}. 
This problem has been widely studied in the literature (see, e.g.,~\cite{myerson1981optimal,hartline2008optimal,cai2016duality}), and is solved by substituting the virtual utilities with suitable ironed virtual utilities.
For completeness, we provide the construction here for our specific setting with finite types, following the order $>$ defined in Section~\ref{sec:single_task}. 

The ironing procedure allows us to compute the ironed virtual utilities $\bar\varphi_i(\theta_i)$, so that Program~\eqref{prog:alloc_virtual_utilites} has the same optimal solution and value as the following program:
\begin{equation}\label{prog:alloc_ironed_virtual_utilites}
\begin{aligned}
    \max_{x} &\sum_{i \in [N]} \sum_{\theta_i \in \Theta_i} \mu_{\theta_i} x_{i,\theta_i} \bar\varphi_i(\theta_i) \\
    \text{s.t.} \;&\text{$x$ is an allocation rule}
\end{aligned}
\end{equation}
For every agent  $i \in [N]$, the ironed virtual utilities are constructed as follows.
Define:
\begin{align*}
    F(\theta_i) \coloneqq \sum_{\theta'_i=\theta_i}^{\lastType_i} \mu_{\theta'_i}, \quad H(\theta_i) \coloneqq \sum_{\theta'_i=\theta_i}^{\lastType_i} \mu_{\theta'_i} \varphi_i(\theta'_i),
\end{align*}
for every type $\theta_i \in \typeSet_i$.
For ease of notation, let also $F(\lastType_i+1)=H(\lastType_i+1)=0$, and observe that $F$ is monotonically decreasing.
% Then, let $G:[0,F(\firstType_i)] \rightarrow \mathbb{R}$ be the lower convex hull of $\{(F(\theta_i),H(\theta_i)\}_{\theta_i=\firstType}^{\lastType_i+1}$.
Then, let $G:[0,F(\firstType_i)] \rightarrow \mathbb{R}$ be defined such that $\{(z,G(z)): z \in [0,F(\firstType_i)]\}$ is the lower convex hull of $\{(F(\theta_i),H(\theta_i))\}_{\theta_i=\firstType_i}^{\lastType_i+1}$.
Define the ironed utilities as the corresponding slopes of $G$ (observe that $F(\lastType_i+1) < F(\lastType_i) < \dots <F(\firstType_i)$):
\begin{equation}
    \label{eq:ironed_utilities}
    \bar\varphi_i(\theta_i) \coloneqq \frac{G(F(\theta_i))-G(F(\theta_i+1))}{F(\theta_i)-F(\theta_i+1)} = \frac{G(F(\theta_i))-G(F(\theta_i+1))}{\mu_{\theta_i}},
\end{equation}
for every $\theta_i \in \typeSet_i$.
Notice that the virtual utilities are (weakly) monotonically decreasing. 
%Basically the slops of a convex hull from left to right are (weakly) monotone increasing when taken from left to right.
%Notie howoever that we have F(\theta_1)>F(\theta_2)>..., so we are actually taking them from right to left
%But then we make everything symmetric to get the reverse order, so we go monotone decreasing. 
%Notice that the symmetry is equivalent to multiply the slopes by -1 (because you take the slope from right point to left and not from left to right), and then enumerate them in reverse order.

\begin{lemma}
    The allocation rule $\bar x^\star$ defined by Equation~\eqref{eq:optimal_allocation_ironing} is an optimal feasible solution of Program~\eqref{prog:alloc_virtual_utilites}. Therefore, the mechanism $(\bar x^\star,\beta^\star)$, where $\beta^\star$ satisfies Equation~\eqref{eq:beta_expected_profit_max}, is optimal.
\end{lemma}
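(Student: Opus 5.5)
We follow the standard Myerson ironing argument, run separately for each agent $i$. Fix $i$ and write $\Theta_i$ in the total order $\firstType_i < \dots < \lastType_i$. For a monotone ex-interim rule $(x_{i,\theta_i})_{\theta_i}$, define $\Delta(\theta_i) \coloneqq x_{i,\theta_i} - x_{i,\theta_i-1} \le 0$ for $\theta_i > \firstType_i$, and set $\Delta(\firstType_i) \coloneqq x_{i,\firstType_i} \ge 0$.

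\textbf{Step 1 (summation by parts).} Since $\mu_{\theta_i}\varphi_i(\theta_i) = H(\theta_i) - H(\theta_i+1)$ and $H(\lastType_i+1)=0$, we get
\begin{equation*}
\sum_{\theta_i} \mu_{\theta_i} x_{i,\theta_i}\varphi_i(\theta_i) = \sum_{\theta_i} H(\theta_i)\,\Delta(\theta_i).
\end{equation*}
The same identity holds with $\bar\varphi_i$ and $G\circ F$ in place of $\varphi_i$ and $H$, because $\mu_{\theta_i}\bar\varphi_i(\theta_i) = G(F(\theta_i)) - G(F(\theta_i+1))$ and $G(0)=H(\lastType_i+1)=0$. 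Here we use that the lower convex hull passes through the extreme points $(0,0)$ and $(F(\firstType_i),H(\firstType_i))$.

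\textbf{Step 2 (comparing the two objectives).} We have $G(F(\theta_i)) \le H(\theta_i)$ for every $\theta_i$, with equality at $\firstType_i$. For $\theta_i > \firstType_i$ the weights $\Delta(\theta_i)$ are nonpositive, so the $\varphi$-objective is at most the $\bar\varphi$-objective for every monotone $x$. Equality holds whenever $x_{i,\cdot}$ is constant on every ironed interval, that is, whenever $\Delta(\theta_i)=0$ at each $\theta_i$ with $G(F(\theta_i))<H(\theta_i)$.

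\textbf{Step 3 (optimality of $\bar x^\star$).} The allocation $\bar x^\star$ gives the item pointwise to the agent with the largest positive ironed virtual utility. It therefore maximizes $\sum_i \bar\varphi_i(\theta_i)x_{i,\theta}$ for every profile $\theta$, and so maximizes the $\bar\varphi$-objective over all allocation rules. This is Program~\eqref{prog:alloc_ironed_virtual_utilites}, which relaxes Program~\eqref{prog:alloc_virtual_utilites} by dropping monotonicity.

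Next, $\bar x^\star$ is monotone. By convexity of $G$, $\bar\varphi_i$ is weakly decreasing in the order. Lowering agent $i$'s type weakly lowers $\bar\varphi_i(\theta_i)$ and leaves the other agents' values unchanged, so with the fixed smallest-index tie-breaking agent $i$ can only lose the item. Hence $\bar x^\star_{i,\theta}$ is pointwise nonincreasing in $\theta_i$, and so is its ex-interim version.

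Types in the same ironed interval share the same $\bar\varphi_i$ value. They are therefore treated identically ex post, which gives $\Delta=0$ inside ironed intervals. Step 2 then says the $\varphi$-objective and the $\bar\varphi$-objective coincide at $\bar x^\star$. Combining, for any monotone feasible $x$,
\begin{equation*}
\varphi\text{-obj}(x) \le \bar\varphi\text{-obj}(x) \le \bar\varphi\text{-obj}(\bar x^\star) = \varphi\text{-obj}(\bar x^\star),
\end{equation*}
so $\bar x^\star$ is optimal for Program~\eqref{prog:alloc_virtual_utilites}.

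\textbf{Step 4 (the mechanism).} Optimality of $(\bar x^\star,\beta^\star)$ then follows from Lemma~\ref{lem:regular_optimal}, Lemma~\ref{lem:total_monotonicity}, Lemma~\ref{lem:beta_given_x} and Lemma~\ref{lem:virtual_utilities}.

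The main obstacle is the equality claim in Step 3. We must check that wherever the hull lies strictly below $H$ at a point $F(\theta_i)$, the types $\theta_i-1$ and $\theta_i$ lie on a common linear segment of $G$ and so have equal slopes. We also must check that equal $\bar\varphi_i$ values together with consistent tie-breaking really give identical ex-post treatment. The first point follows because a non-vertex point of a lower convex hull lies in the interior of a linear piece; we would argue this directly from the piecewise-linear construction of $G$.
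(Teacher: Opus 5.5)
Your proposal is correct and follows the paper's proof essentially step for step. Both arguments use summation by parts against $H$ and $G\circ F$, deduce that the ironed objective upper-bounds the original one on monotone rules from $G\circ F\le H$ together with $x_{i,\theta_i-1}-x_{i,\theta_i}\ge 0$, invoke pointwise optimality and monotonicity of $\bar x^\star$, and obtain equality at $\bar x^\star$ because $H(\theta_i)>G(F(\theta_i))$ forces $\bar\varphi_i(\theta_i)=\bar\varphi_i(\theta_i-1)$. You are in fact slightly more explicit than the paper about $G(0)=0$, the tie-breaking argument for monotonicity, and the fact that a non-vertex point of the hull lies inside a linear piece, all of which the paper simply asserts.
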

\begin{proof}
    Let $x^\star$ be an optimal solution to Program~\ref{prog:alloc_virtual_utilites} and $\bar x^\star$ be computed according to Equation~\eqref{eq:optimal_allocation_ironing}.
    As a first step, we observe the following.
    $x^\star$ is also a feasible solution to Program~\ref{prog:alloc_ironed_virtual_utilites}, since this problem relaxes the constraints of Program~\ref{prog:alloc_virtual_utilites}.    
    Furthermore, $\bar x^\star$ is both an optimal solution to Program~\ref{prog:alloc_ironed_virtual_utilites} and a monotone allocation rule, and thus a feasible solution to Program~\ref{prog:alloc_virtual_utilites}.
    Indeed, it maximizes the objective function of Program~\ref{prog:alloc_ironed_virtual_utilites} and it is monotone.
    We will show that $\bar x^\star$ is also optimal for Program~\ref{prog:alloc_virtual_utilites}.
    
    % It may make sense to write the summation from "the last type to the first", a the summation by parts is used in that way: basically everything works by considering the types in reverse order. Namely, when using Abel's summation, we extract the term of the first type, not of the last.
    Now let $x$ be any monotone allocation rule (feasible solution to Program~\ref{prog:alloc_virtual_utilites}), and define $V(x)$ and $\bar V(x)$ as the value of the solution in Program~\eqref{prog:alloc_virtual_utilites} and Program~\eqref{prog:alloc_ironed_virtual_utilites}, respectively.
    We now study the difference between $V(x)$ and $\bar V(x)$.
    In order to do so, we rewrite $V(x)$ as follows:
    \begin{align*}
        V(x) &= \sum_{i \in [N]} \sum_{\theta_i=\firstType_i}^{\lastType_i} \mu_{\theta_i} x_{i,\theta_i} \varphi_i(\theta_i) \\
        &= \sum_{i \in [N]}\sum_{\theta_i=\firstType_i}^{\lastType_i} (H(\theta_i)- H(\theta_i+1))x_{i,\theta_i} \\
        % &=\sum_{i \in [N]} \left( (H(\firstType_i) -H(\lastType_i+1)) x_{i,\firstType_i} -\sum_{\theta_i=\firstType_i+1}^{\lastType_i} H(\theta_i) (x_{i,\theta_i-1}-x_{i,\theta_i}) \right),
        &=\sum_{i \in [N]} \left( H(\firstType_i) x_{i,\firstType_i} -\sum_{\theta_i=\firstType_i+1}^{\lastType_i} H(\theta_i) (x_{i,\theta_i-1}-x_{i,\theta_i}) \right),
    \end{align*}
    where the last equality holds by Abel's summation formula (summation by parts).
    Similarly:
    \begin{align*}
        \bar V(x) &= \sum_{i \in [N]}\sum_{\theta_i=\firstType_i}^{\lastType_i} \mu_{\theta_i} x_{i,\theta_i} \bar\varphi_i(\theta_i) \\
        &=\sum_{i \in [N]} \sum_{\theta_i=\firstType_i}^{\lastType_i} (G(F(\theta_i)) -G(F(\theta_i+1)))x_{i,\theta_i} \\
        &=\sum_{i \in [N]} \left( G(F(\firstType_i)) x_{i,\firstType_i} -\sum_{\theta_i=\firstType_i+1}^{\lastType_i} G(F(\theta_i)) (x_{i,\theta_i-1}-x_{i,\theta_i}) \right).
    \end{align*}
    Furthermore, observe that $G(F(\firstType_i))=H(\firstType_i)$, since $G$ is the lower convex hull of a set whose rightmost point is $(F(\firstType_i),H(\firstType_i))$.
    Therefore the difference between the values of $x$ in the objective functions is:
    \begin{equation}
        \label{eq:proof_ironing_difference_values}
        % V(x) - \bar V(x) = H(\firstType_i) x_{i,\firstType_i} -\sum_{\theta_i=\lastType_i}^{\firstType_i+1} H(\theta_i) (x_{i,\theta_i-1}-x_{i,\theta_i}) - G(F(\firstType_i)) x_{i,\firstType_i} +\sum_{\theta_i=\lastType_i}^{\firstType_i+1} G(F(\theta_i)) (x_{i,\theta_i-1}-x_{i,\theta_i}) \\
        %= \sum_{\theta_i=\lastType_i}^{\firstType_i+1} G(F(\theta_i)) (x_{i,\theta_i-1}-x_{i,\theta_i}) -\sum_{\theta_i=\lastType_i}^{\firstType_i+1} H(\theta_i) (x_{i,\theta_i-1}-x_{i,\theta_i})
        \bar V(x) - V(x) =\sum_{i \in [N]} \sum_{\theta_i=\firstType_i+1}^{\lastType_i} (H(\theta_i)-G(F(\theta_i))) (x_{i,\theta_i-1}-x_{i,\theta_i}) \ge 0,
    \end{equation}
    where the inequality holds since $H(\theta_i) \ge G(F(\theta_i))$ by definition of $G$ and $x_{i,\theta_i-1} \ge x_{i,\theta_i}$.
    Therefore the value $V(x)$ of any feasible solution to Program~\ref{prog:alloc_virtual_utilites} is upper bounded by $\bar V(x)$.
    Specifically, $V(x^\star) \le \bar V(x^\star) \le \bar V(\bar x^\star)$, where the last inequality holds since $\bar x^\star$ is optimal for Program~\ref{prog:alloc_ironed_virtual_utilites}.

    We are now ready to prove that $\bar x^\star$ is an optimal solution for Program~\ref{prog:alloc_virtual_utilites}.
    Specifically, we argue that $V(x^\star) \le \bar V(x^\star) \le \bar V(\bar x^\star) = V(\bar x^\star)$, which implies that $\bar x^\star$ is optimal.
    The chain of inequalities holds thanks to Equation~\eqref{eq:proof_ironing_difference_values} ($V(x^\star) \le \bar V(x^\star)$) and the fact that $\bar x^\star$ is an optimal solution of Program~\ref{prog:alloc_ironed_virtual_utilites} ($\bar V(x^\star) \le \bar V(\bar x^\star)$).
    We thus have to prove the equality $\bar V(\bar x^\star) = V(\bar x^\star)$, \emph{i.e.}, that the difference $\bar V(\bar x^\star) - V(\bar x^\star)$ is zero.
    Observe that this difference can be computed by Equation~\eqref{eq:proof_ironing_difference_values}, with $x=\bar x^\star$, and it is the sum of one term $d(\theta_i) \coloneqq (H(\theta_i)-G(F(\theta_i))) (\bar x^\star_{i,\theta_i-1}-\bar x^\star_{i,\theta_i})$ for each agent $i \in [N]$ and type $\theta_i > \firstType_i$.
    We prove that $d(\theta_i)=0$ by considering two cases.
    If $G(F(\theta_i))=H(\theta_i)$, then $d(\theta_i)$ is zero.
    Otherwise, when $H(\theta_i) > G(F(\theta_i))$ (the opposite is not possible by construction of $G$), it must be the case that $\bar\varphi_i(\theta_i) = \bar\varphi_i(\theta_i-1)$, and therefore $\bar x^\star_{i,\theta_i-1}-\bar x^\star_{i,\theta_i} = 0$.
    As a result, $d(\theta_i)$ is always zero, and employing Equation~\eqref{eq:proof_ironing_difference_values} we have $V(\bar x^\star) = \bar V(\bar x^\star)$, concluding the proof.
\end{proof}

    \section{Deterministic Payment Schemes Are Optimal}
\label{app:no_rand}
In a multi-project problem, a general mechanism could randomize over both allocations and contracts.
Specifically, it could be defined by a randomized allocation rule $y : \Theta \rightarrow \Delta_{\tSet}$ and a randomized payment scheme $\eta :  \Theta \times \tSet \rightarrow \Delta(\mathbb{R}_+^N)$.
% Notice that wether contracts for different agents are correlated or not is irrelevant, so one can equivalently consider a different $\eta_i$ for each agent.
% Then the problem becomes (for brevity we write $E_{\gamma}$ in place of $E_{\alpha \sim \gamma}$):
Then the problem of computing an optimal BIC and IR mechanism becomes:
\begin{align*}
    \sup_{y,\eta} \;&\revenue(y,\eta)=\sum_{\theta \in \typeSet} \mu_\theta \sum_{t \in \tSet} y(t|\theta)\sum_{i \in [N]} p_{\theta_i,t_i} (1 -\mathbb{E}_{\alpha \sim \eta(t,\theta)}[\alpha_i]) \\
    \text{s.t. } &y(t|\theta)(\alpha_i p_{\theta_i,t_i} -c_{\theta_i,t_i}) \ge 0 && \hspace{-1.2cm}\forall i \in [N], t \in \tSet,\theta \in \typeSet, \alpha \in \text{supp}(\eta(t,\theta))\\
    &\sum_{\theta_{-i}} \mu_{\theta_{-i}} \sum_{t \in \tSet}y(t|\theta_i,\theta_{-i}) (\mathbb{E}_{\alpha \sim \eta(t,\theta_i,\theta_{-i})}[\alpha_i p_{\theta_i,t_i} -c_{\theta_i,t_i}]) \ge \\
    &\hspace{1cm}\sum_{\theta_{-i}} \mu_{\theta_{-i}} \sum_{t \in \tSet}y(t|\hat\theta_i,\theta_{-i}) \mathbb{E}_{\alpha \sim \eta(t,\hat\theta_i,\theta_{-i})}[[\alpha_i p_{\theta_i,t_i} -c_{\theta_i,t_i}]_+] &&\forall i \in [N], \theta_i,\hat\theta_i \in \typeSet_i.
\end{align*}
We now show that this is not necessary, \emph{i.e.}, randomized payments do not increase the profit.

\begin{proposition}
    \label{prop:multi_task_no_rand_contr}
    For every BIC and IR mechanism $(y,\eta)$ with a randomized payment scheme, there exists a BIC and IR mechanism $(y,\sigma)$ with a deterministic payment scheme that attains the same profit.
\end{proposition}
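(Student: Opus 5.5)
The plan is to replace each random contract by its mean. For every $\theta\in\typeSet$, $t\in\tSet$ and $i\in[N]$, set
\[
\sigma_i(t|\theta)\coloneqq\mathbb{E}_{\alpha\sim\eta(t,\theta)}[\alpha_i].
\]
We may assume this mean is finite: otherwise, whenever $y(t|\theta)>0$ and $p_{\theta_i,t_i}>0$, the profit would be $-\infty$. Profit preservation is immediate, because $\revenue$ depends on $\eta$ only through $\mathbb{E}[\alpha_i]$, so $\revenue(y,\sigma)=\revenue(y,\eta)$.

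IR also transfers directly. If $y(t|\theta)>0$, then every $\alpha$ in the support of $\eta(t,\theta)$ satisfies $\alpha_i p_{\theta_i,t_i}-c_{\theta_i,t_i}\ge 0$. Taking expectations gives $\sigma_i(t|\theta)p_{\theta_i,t_i}-c_{\theta_i,t_i}\ge 0$.

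For BIC, the left-hand side of each constraint is linear in $\alpha$. It therefore equals
\[
\sum_{\theta_{-i}}\mu_{\theta_{-i}}\sum_t y(t|\theta_i,\theta_{-i})\bigl(\sigma_i(t|\theta_i,\theta_{-i})p_{\theta_i,t_i}-c_{\theta_i,t_i}\bigr),
\]
which is exactly the truthful utility under $(y,\sigma)$.

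On the right-hand side, the map $a\mapsto[a\,p-c]_+$ is convex. By Jensen's inequality,
\[
\mathbb{E}_{\alpha\sim\eta}\bigl[[\alpha_i p_{\theta_i,t_i}-c_{\theta_i,t_i}]_+\bigr]\ge[\sigma_i p_{\theta_i,t_i}-c_{\theta_i,t_i}]_+.
\]
So the deviation utility under $(y,\sigma)$ is at most the deviation utility under $(y,\eta)$. The original BIC constraint then yields the new one.

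The only delicate point is that the randomized mechanism must be well defined when $y(t|\theta)=0$. In that case we simply set $\sigma_i(t|\theta)=0$, and these terms vanish on both sides. Convexity is the key step; it works here precisely because effort is chosen after the contract is realized.
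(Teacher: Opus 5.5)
Your proposal is correct and follows the same route as the paper's proof: you replace each randomized contract by its mean $\sigma(t|\theta)=\mathbb{E}_{\alpha\sim\eta(t,\theta)}[\alpha]$, use linearity to preserve profit, IR, and truthful utility, and apply Jensen's inequality to the convex map $a\mapsto[a\,p-c]_+$ to show that deviation utilities can only decrease. Your extra remarks about finiteness of the mean and about the case $y(t|\theta)=0$ are harmless refinements that the paper leaves implicit.
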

\begin{proof}
    Let $(y,\eta)$ be any BIC and IR randomized mechanism.
    Now consider the mechanism $(y,\sigma)$ that uses the same allocation function, but given an allocation $t \in \tSet$ and a type profile $\theta \in \typeSet$, uses a single contract $\sigma(t|\theta) \in \mathbb{R}_+^N$ defined as:
    \begin{equation*}
        \sigma(t|\theta) \coloneqq \mathbb{E}_{\alpha \sim \eta(t,\theta)}[\alpha].
    \end{equation*}
    By linearity of expectation, it is easy to see that $\revenue(y,\eta)=\revenue(y,\sigma)$.
    Furthermore, the mechanism $(y,\sigma)$ is IR.
    Regarding the BIC constraints, the revenue of an agent that reports truthfully is the same in both mechanisms.
    When an agent misreports, we observe:
    \begin{align*}
        \mathbb{E}_{\alpha \sim \eta(t,\hat\theta_i,\theta_{-i})}[[\alpha_i p_{\theta_i,t_i} -c_{\theta_i,t_i}]_+] &\ge [\mathbb{E}_{\alpha \sim \eta(t,\hat\theta_i,\theta_{-i})}[\alpha_i] p_{\theta_i,t_i} -c_{\theta_i,t_i}]_+  \\
        &=[\sigma_i(t|\hat\theta_i,\theta_{-i}) p_{\theta_i,t_i} -c_{\theta_i,t_i}]_+.
    \end{align*}
    Hence, the agent obtains at least as much utility from misreporting under $(y,\eta)$ as under $(y,\sigma)$.
    As $(y,\eta)$ is BIC, $(y,\sigma)$ is also BIC.
    We conclude that there is no need to use randomized payment schemes.
\end{proof}

    \section{Details Omitted from the Proof of Lemma~\ref{lem:multi_task_no_opt}}
\label{app:no_opt_multi_task}

% \bollo{Note: this is not enough to remove the max form the BIC constraints, as the task can change depending on the report of the others}
% I keep this file beacuse this lemma could be interesting, but we do nto use it anymore.
\begin{restatable}{lemma}{prova}
    \label{lem:contracts_ind_from_others_tasks}
    In the multi-task problem, for every BIC and IR mechanism $(y,\sigma)$ there is another mechanism $(y,\sigma')$ attaining at least the same profit and such that the payment to agent $i \in [N]$ given report $\theta \in \typeSet$ depends on the task $t_i \in [M]$ assigned to the agent, but not on the tasks $t_{-i}$ assigned to the others. 
    Formally:
    \begin{equation*}
        \sigma'_i(t|\theta) = \sigma'_i(t'|\theta) \quad \forall i \in [N], \theta \in \typeSet, t,t' \in \tSet : t_i = t'_i.
    \end{equation*}
\end{restatable}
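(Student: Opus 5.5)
Fix a BIC and IR mechanism $(y,\sigma)$. For each agent $i$, report $\theta$ and task $m \in [M]\cup\{\varnothing\}$, let $\tSet_{i,m}\coloneqq\{t\in\tSet : t_i=m\}$ and $Y_i(m|\theta)\coloneqq\sum_{t\in\tSet_{i,m}} y(t|\theta)$. I will keep the allocation rule $y$ unchanged and replace agent $i$'s contracts on $\tSet_{i,m}$ by their $y$-weighted average:
\begin{equation*}
\sigma'_i(t|\theta)\coloneqq \frac{\sum_{t'\in\tSet_{i,m}} y(t'|\theta)\,\sigma_i(t'|\theta)}{Y_i(m|\theta)} \quad \forall t\in\tSet_{i,m},
\end{equation*}
with $\sigma'_i(t|\theta)\coloneqq 0$ when $Y_i(m|\theta)=0$. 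Thus $\sum_{t\in\tSet_{i,m}} y(t|\theta)\sigma'_i(t|\theta)=\sum_{t\in\tSet_{i,m}} y(t|\theta)\sigma_i(t|\theta)$. In $\nu$-terms, agent $i$'s total expected contract for task $m$ at report $\theta$ is preserved and is only redistributed across the allocations $t$ with $t_i=m$, proportionally to $y$. The defining property $\sigma'_i(t|\theta)=\sigma'_i(t'|\theta)$ whenever $t_i=t'_i$ holds by construction.

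\textbf{Profit.} The principal's profit at $\theta$ is $\sum_t y(t|\theta)\sum_i p_{\theta_i,t_i}(1-\sigma_i(t|\theta))$. Since $p_{\theta_i,t_i}$ depends on $t$ only through $t_i$, the term for agent $i$ regroups as $\sum_m p_{\theta_i,m}\sum_{t\in\tSet_{i,m}} y(t|\theta)(1-\sigma_i(t|\theta))$. The averaging preserves $\sum_{t\in\tSet_{i,m}} y(t|\theta)\sigma_i(t|\theta)$, so the profit is unchanged.

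\textbf{IR.} Each $\sigma'_i(t|\theta)$ with $y(t|\theta)>0$ is a convex combination of values $\sigma_i(t'|\theta)$ with $y(t'|\theta)>0$. By the IR constraints of $(y,\sigma)$, each such value is at least $v_{\theta_i,m}$ (the case $m=\varnothing$ being trivial), hence so is the average.

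\textbf{BIC.} The truthful utility of agent $i$ regroups the same way, as $\sum_m\bigl(\sum_{t\in\tSet_{i,m}}\nu_i(t|\theta)\,p_{\theta_i,m}-Y_i(m|\theta)\,c_{\theta_i,m}\bigr)$, and is therefore unchanged. For a misreport $\hat\theta_i$, the deviation utility at $(\hat\theta_i,\theta_{-i})$ and task $m$ is $\sum_{t\in\tSet_{i,m}} y(t|\hat\theta_i,\theta_{-i})\,[\sigma_i(t|\cdot)p_{\theta_i,m}-c_{\theta_i,m}]_+$. The function $s\mapsto[s\,p-c]_+$ is convex, so by Jensen's inequality (with weights $y(t|\cdot)/Y_i(m|\cdot)$) replacing the contracts by their weighted average weakly decreases this sum. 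This is the same argument as in Proposition~\ref{prop:multi_task_no_rand_contr}. Hence deviation utilities weakly decrease while truthful utilities are unchanged, and BIC is preserved.

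Since $\sigma'_i$ is defined separately for each agent and depends only on $y$ and $\sigma_i$, I can carry out the construction for all agents simultaneously. The main obstacle is the bookkeeping in the deviation term: I must check that the truthful type's effort choice (the $[\cdot]_+$) is applied per allocation $t$, so that Jensen applies within each group $\tSet_{i,m}$. This works exactly because the contract depends only on $(t,\theta)$ and the agent decides on effort after observing its own task and contract.
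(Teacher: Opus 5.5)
Your proposal is correct and follows essentially the same route as the paper: you replace agent $i$'s contracts on $\{t : t_i = m\}$ by their $y$-weighted average at each report, observe that profit and truthful utility are linear in these contracts and hence unchanged, and apply Jensen's inequality to $[\cdot]_+$ to show that deviation utilities can only decrease. The differences are cosmetic. You average all groups $(i,m,\theta)$ at once, which is cleaner than the paper's one-violation-at-a-time argument. For IR, it is safer to argue from linearity of the constraint $\sigma\, p_{\theta_i,m} - c_{\theta_i,m} \ge 0$ than via $v_{\theta_i,m}$, which is undefined when $p_{\theta_i,m}=0$.
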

\begin{proof}
    Suppose that $(y,\sigma)$ violates the condition above, \emph{i.e.}, there exists some $\bar t,\bar t' \in \tSet : t_i = t'_i$ and type report $\bar \theta \in \typeSet$ such that $\sigma_i(\bar t|\bar \theta) \neq \sigma_i(\bar t'|\bar \theta)$.
    Suppose w.l.o.g. that $i=1$ and $\bar t_i=\bar t'_i=1$, and let $C \subseteq \tSet$ be the set of allocations such that $t_1 = 1$.
    Suppose also that $y(\bar t|\bar \theta)>0$ (payments for allocations that are not employed can be set arbitrarily).
    Let also $W = \sum_{t \in C} y(t|\bar \theta)$.
    Define a new payment $\sigma'$ as follows:
    \begin{align*}
        \sigma'_1(t|\bar \theta) &= \mathbb{E}_{t \sim y(\bar\theta)}[\sigma(t|\bar \theta)|t_1=1] = \frac{1}{W}\sum_{t \in C} y(t|\bar \theta)\sigma_1(t|\bar \theta) &\forall t \in C,
    \end{align*}
    and $\sigma'_i(t|\theta) = \sigma_i(t|\theta)$ in the other cases.
    We now show that $(y,\sigma')$ is BIC, IR, and attains at least the same profit as $(y,\sigma)$.

    We observe that all IR and IC constraints for agents different from 1 are satisfied, since their payments were not modified.
    Regarding agent 1, the new mechanism is IR, as the payments when the agent is assigned task 1 are the average of those employed in the original IR mechanism.
    We thus focus on the IC constraints.
    When the agent $i=1$ reports $\bar\theta_i$ truthfully, the contribution to its expected utility from receiving an allocation $t \in C$ given report $\bar\theta$ is:
    \begin{align*}
        &\mu_{\bar \theta_{-i}} \sum_{t \in C} y(t|\bar\theta) (p_{\bar\theta_i,1}\sigma'_i(t|\bar\theta_i,\theta_{-i}) -c_{\bar\theta_i,1}) \\
        &=\mu_{\bar \theta_{-i}} \sum_{t \in C} y(t|\bar\theta) (p_{\bar\theta_i,1}\mathbb{E}_{t \sim y(\bar\theta)}[\sigma(t|\bar \theta)|t \in C] -c_{\bar\theta_i,1}) \\
        &=\mu_{\bar \theta_{-i}} \mathbb{E}_{t \sim y(\bar\theta)} [\mathbf{1}\{t \in C\}(p_{\bar\theta_i,1}\mathbb{E}_{t \sim y(\bar\theta)}[\sigma(t|\bar \theta)|t \in C] -c_{\bar\theta_i,1})] \\
        &=\mu_{\bar \theta_{-i}} \mathbb{E}_{t \sim y(\bar\theta)} [\mathbf{1}\{t \in C\}\mathbb{E}_{t \sim y(\bar\theta)}[p_{\bar\theta_i,1}\sigma(t|\bar \theta) -c_{\bar\theta_i,1} |t \in C]] \\
        &=\mu_{\bar \theta_{-i}} \mathbb{E}_{t \sim y(\bar\theta)} [\mathbf{1}\{t \in C\}(p_{\bar\theta_i,1}\sigma(t|\bar \theta) -c_{\bar\theta_i,1})] \\
        &=\mu_{\bar \theta_{-i}} \sum_{t \in C} y(t|\bar\theta) (p_{\bar\theta_i,1}\sigma_i(t|\bar\theta_i,\theta_{-i}) -c_{\bar\theta_i,1}).
    \end{align*}
    Hence, the expected utility of agent $i$ reporting $\bar\theta_i$ truthfully is the same in both mechanisms $(y,\sigma)$ and $(y,\sigma')$.
    Formally, $U_i(\bar\theta_i|\bar\theta)=U'_i(\bar\theta_i|\bar\theta)$.
    Furthermore, if the agent misreports $\bar\theta_i$ in place of $\hat\theta_i$, we have:
    \begin{align*}
        &\mu_{\bar \theta_{-i}} \sum_{t \in C} y(t|\bar\theta) [p_{\hat\theta_i,1}\sigma'_i(t|\bar\theta_i,\theta_{-i}) -c_{\hat\theta_i,1}]_+ \\
        &=\mu_{\bar \theta_{-i}} \sum_{t \in C} y(t|\bar\theta) [p_{\hat\theta_i,1}\mathbb{E}_{t \sim y(\bar\theta)}[\sigma(t|\bar \theta)|t \in C] -c_{\hat\theta_i,1}]_+ \\
        &=\mu_{\bar \theta_{-i}} \mathbb{E}_{t \sim y(\bar\theta)} [\mathbf{1}\{t \in C\}[p_{\hat\theta_i,1}\mathbb{E}_{t \sim y(\bar\theta)}[\sigma(t|\bar \theta)|t \in C] -c_{\hat\theta_i,1}]_+] \\
        &=\mu_{\bar \theta_{-i}} \mathbb{E}_{t \sim y(\bar\theta)} [\mathbf{1}\{t \in C\}[\mathbb{E}_{t \sim y(\bar\theta)}[p_{\hat\theta_i,1}\sigma(t|\bar \theta) -c_{\hat\theta_i,1} |t \in C]]_+] \\
        &\le\mu_{\bar \theta_{-i}} \mathbb{E}_{t \sim y(\bar\theta)} [\mathbf{1}\{t \in C\}[p_{\hat\theta_i,1}\sigma(t|\bar \theta) -c_{\hat\theta_i,1}]_+] \\
        &=\mu_{\bar \theta_{-i}} \sum_{t \in C} y(t|\bar\theta) [p_{\hat\theta_i,1}\sigma_i(t|\bar\theta_i,\theta_{-i}) -c_{\hat\theta_i,1}]_+,
    \end{align*}
    where the inequality holds by Jensen inequality.
    Therefore, we have $U'_i(\hat\theta_i|\bar\theta) \le U_i(\hat\theta_i|\bar\theta)$, and as a result the BIC constraints are satisfied and the new payment scheme is BIC and IR.
    % Here U(deviation) is computed allowing to opt out if convininent

    In order to conclude the proof, we compare the profit of $(y,\sigma)$ and $(y,\sigma')$.
    The contribution to $\profit(y,\sigma')$ from agent $i=1$ when the type report is $\bar\theta$ and the allocation is in $C$ is:
    \begin{align*}
        \sum_{t \in C} y(t|\bar\theta) p_{\bar\theta_i,i} (1-\sigma'_i(t|\bar\theta)) &=\mathbb{E}_{t \sim y(\bar\theta)} [\mathbf{1}\{t \in C\} p_{\bar\theta_i,i} (1-\mathbb{E}_{t \sim y(\bar\theta)}[\sigma(t|\bar \theta)|t_1=1])] \\
        &=\mathbb{E}_{t \sim y(\bar\theta)} [\mathbf{1}\{t \in C\}p_{\bar\theta_i,i} (1-\sigma_i(t|\bar\theta))] \\
        &=\sum_{t \in C} y(t|\bar\theta) p_{\bar\theta_i,i}(1-\sigma_i(t|\bar\theta)),
    \end{align*}
    which is the same contribution provided to $\profit(y,\sigma)$.
    We conclude that $\profit(y,\sigma)=\profit(y,\sigma')$.
\end{proof}

% \begin{restatable}{claim}{CompleteProofNoOpt}
%     \label{cl:final_no_opt_proof}
%     In the instance considered in the proof of Lemma~\ref{lem:multi_task_no_opt}, any BIC and IR mechanism $(y,\sigma)$ such that $y((2,1)|\theta)>0$ and $y((\varnothing,1)|\theta)=0$ is not optimal.
% \end{restatable}
% \begin{proof}
%     Without loss of generality, we can assume $\sigma_2((2,1)|\theta) = \sigma_2((\varnothing,1)|\theta)$ by Lemma~\ref{lem:contracts_ind_from_others_tasks}.
%     Now build a new mechanism $(y^\circ,\sigma^\circ)$ similar to the one used in the proof of Lemma~\ref{lem:multi_task_no_opt}.
%     Specifically, move a portion $\epsilon/2$ of probability mass from $y((2,1)|\theta)$ to $y((\varnothing,1)|\theta)$, and change the contract $\sigma_1((2,1)|\theta)$.
%     Then, the argument follows the same steps as the one provided in the proof of Lemma~\ref{lem:multi_task_no_opt}.
% \end{proof}

    % \input{appendix/stuff}

\end{document}